\def\L{B}
\def\P{\mathbb{P}}
\def\fulldata{ (\by^{(k)})_{k \in [K]}}
\def\data{\mathcal{D}}
\def\post{\Pi}
\def\prior{\Pi}
\def\resid{\tilde{\by}} 
\def\A{\bm{B}} 
\def\J{\mathcal{A}}
\def\SSVS{muSSVS}
\def\vb{\tau}
\def\ak{a}
\def\L{B}
\def\P{\mathbb{P}}
\def\fulldata{ (\by^{(k)})_{k \in [K]}}
\def\data{\mathcal{D}}
\def\post{\Pi}
\def\prior{\Pi}
\def\resid{\tilde{\by}} 
\def\A{\bm{B}} 
\def\J{\mathcal{A}}
\def\SSVS{muSSVS}
\renewcommand{\mid}{\,|\,}
\def\dset{\chi}
\def\pvb{\phi}
\def\sel{u}
\def\consta{\eta}
\def\constb{\tilde{\eta}}
\newtheorem{lemma}{Lemma}
\newtheorem{theorem}{Theorem}
\theoremstyle{definition}
\theoremstyle{remark}
\newtheorem*{remark}{Remark}
\newcommand{\ELBO}{\mathrm{ELBO}}
\newcommand{\muSER}{\mathrm{muSER}}
\newcommand{\T}{\mathrm{T}}
\newcommand{\BF}{\mathrm{BF}}
\newcommand{\KL}{\mathrm{KL}}
\newcommand{\ind}{\mathbbm{1}}
\newcommand{\bsLambda}{\boldsymbol{\Lambda}}
\newcommand{\bsSigma}{\boldsymbol{\Sigma}}
\newcommand{\bsPhi}{\boldsymbol{\Phi}}
\newcommand{\bsgamma}{\boldsymbol \gamma}
\newcommand{\bstheta}{\boldsymbol \theta}
\newcommand{\bspi}{\boldsymbol{\pi}}
\newcommand{\sphi}{\mathsf{\phi}}
\newcommand{\bsalpha}{\boldsymbol{\alpha}}
\newcommand{\bsbeta}{\boldsymbol{\beta}}
\newcommand{\bsmu}{\boldsymbol{\mu}}
\newcommand{\bA}{\bm{A}}
\newcommand{\bB}{\bm{B}}
\newcommand{\cD}{\mathcal{D}}
\newcommand{\cE}{\mathcal{E}}
\newcommand{\bbE}{\mathbb{E}}
\newcommand{\cG}{\mathcal{G}}
\newcommand{\bI}{\bm{I}}
\newcommand{\bM}{\bm{M}}
\newcommand{\bbM}{\mathbb{M}}
\newcommand{\cN}{\mathcal{N}}
\newcommand{\bbN}{\mathbb{N}}
\newcommand{\cQ}{\mathcal{Q}}
\newcommand{\bR}{\bm{R}}
\newcommand{\bbR}{\mathbb{R}}
\newcommand{\bbS}{\mathbb{S}}
\newcommand{\bU}{\bm{U}}
\newcommand{\bV}{\bm{V}}
\newcommand{\bX}{\bm{X}}
\newcommand{\sX}{\mathsf{X}}
\newcommand{\bb}{\bm{b}}
\newcommand{\be}{\bm{e}}
\newcommand{\br}{\bm{r}}
\newcommand{\bs}{\bm{s}}
\newcommand{\by}{\bm{y}}
\renewcommand{\sphi}{\zeta}
\renewcommand{\bsPhi}{\bm{\Psi}} 
\renewcommand{\cD}{\bX} 
\renewcommand{\br}{S}
\renewcommand{\bs}{T} 
\def\tcr{\textcolor{black}}
\title{Bayesian Multi-task Variable Selection with an Application to Differential DAG Analysis}
\author{Guanxun Li}
\author{Quan Zhou}
\affil{Department of Statistics, Texas A\&M University}
\begin{document}

\maketitle

\begin{abstract}
We study the Bayesian multi-task variable selection problem,  where the goal is to select activated variables for multiple related data sets simultaneously. 
We propose a new variational Bayes algorithm which generalizes and improves the recently developed ``sum of single effects" model of~\citet{wang2020simple}. 
Motivated by differential gene network analysis in biology, we further extend our method to joint structure learning of multiple directed acyclic graphical models,  a problem known to be computationally highly challenging. 
We propose a novel order MCMC sampler where our multi-task variable selection algorithm is used to quickly evaluate the posterior probability of each ordering. 
Both simulation studies and real gene expression data analysis are conducted to show the efficiency of our method. 
Finally, we also prove a posterior consistency result for multi-task variable selection, which provides a theoretical guarantee for the proposed algorithms.
Supplementary materials for this article are available online.   
\end{abstract}

\section{Introduction}
In machine learning, multi-task learning refers to the paradigm where we  simultaneously learn multiple related tasks instead of learning each task independently~\citep{zhang2021survey}.  
In the context of model selection, we can formulate the problem as follows: given $K$ observed data sets where the $k$-th data set is generated from some statistical model $\mathfrak{M}^{(k)}$, simultaneously estimate $\mathfrak{M}^{(1)}, \dots, \mathfrak{M}^{(K)}$ so that the estimation of $\mathfrak{M}^{(k)}$ (for any $k=1, \dots, K$) utilizes information from all $K$ data sets. In real problems where the $K$ models tend to share many common features, this joint estimation approach is expected to have better performance than separate estimation (i.e, estimating $\mathfrak{M}^{(k)}$ using only the $k$-th data set). In this work, we consider \tcr{multi-task model selection problems where each task may be variable selection or structure learning.}
 
We first study the multi-task variable selection problem, where each data set is generated from a sparse linear regression model. 
The majority of the existing research  has been conducted under the strict assumption that the ``activated'' covariates (i.e., covariates with nonzero regression coefficients) are shared across all data sets~\citep{lounici2009taking, lounici2011oracle}. Recent works have relaxed this assumption by taking a more adaptable strategy that splits each regression coefficient into a shared and an individual component~\citep{jalali2010dirty, hernandez2015probabilistic}. 
We propose a more flexible Bayesian method which generalizes the well-known spike-and-slab prior~\citep{george1993variable, ishwaran2005spike}  and allows a covariate to be activated in an arbitrary number of data sets with varying effect sizes. We prove the posterior consistency for our model in high-dimensional scenarios. 
While there is a large literature on frequentists' approaches to multi-task learning, the corresponding Bayesian methodology has received less attention  and in particular theoretical results are lacking~\citep{bonilla2007multi, guo2011sparse, hernandez2015probabilistic}.  
To our knowledge, this is the first work that establishes the theoretical guarantee for the high-dimensional Bayesian multi-task variable selection problem. 

The traditional method for obtaining the posterior distribution for a Bayesian model is to use Markov Chain Monte Carlo (MCMC) sampling, which is often computationally intensive, especially for multi-task learning problems where the space of candidate models can be enormous. 
A more scalable alternative is variational Bayes (VB), which recasts posterior approximation as an optimization problem  \citep{ray2021variational}. To carry out efficient VB inference, we approximate our spike-and-slab prior model using a novel multi-task sum of single effects (muSuSiE) model, which extends the sum of single effects (SuSiE) model of \cite{wang2020simple} to multiple data sets. Then, we propose to fit muSuSiE using an iterative Bayesian stepwise selection (IBSS) method, which may be thought of as a coordinate ascent algorithm for maximizing the evidence lower bound over a particular variational family. 

To illustrate the application of the proposed methodology to more complex multi-task learning problems, we consider differential network analysis based on directed acyclic graphs (DAGs), which is essentially a multi-task structure learning problem. 
Differential network analysis has emerged as a significant topic in biology and received increasing attention over recent years. 
Its application can be found in the analysis of various diseases and biological mechanisms such as lung cancer~\citep{li2020bayesian}, breast cancer~\citep{liu2019joint}, Parkinson's disease~\citep{lee2022bayesian}, brain connectivity network~\citep{zhang2020causal} and the study of phosphorylated proteins and phospholipid components~\citep{castelletti2020bayesian}.  
Because learning a DAG model can be equivalently viewed as a set of variable selection problems when the order of nodes is known~\citep{agrawal2018minimal}, learning multiple DAG models with a known order is likewise equivalent to a set of multi-task variable selection problems. 
However, when the order is not known (which is usually the case in practice), learning the order of nodes from the data can be very challenging. 
To overcome this issue, we employ MCMC sampling over the permutation space to average over the uncertainty in learning the order of nodes and then compute the DAG model for each given order via the proposed Bayesian multi-task variable selection approach. Simulation studies and a real data example are used to demonstrate the effectiveness of the proposed method. 

The rest of this paper is organized as follows. In Section \ref{sec:model}, we introduce our model for Bayesian multi-task variable selection, prove the high-dimensional posterior consistency and describe the VB algorithm for model-fitting. 
Section \ref{sec:vs-simu} presents simulation results for the multi-task variable selection problem. 
In Section \ref{sec:Multi-DAGs}, we generalize our method to joint estimation of multiple DAG models and propose an order MCMC sampler. 
Simulation studies and real data analysis for differential DAG analysis are presented in Sections~\ref{sec:dags-simu} and~\ref{sec:real.data}, respectively. Section~\ref{sec:conclusion} concludes the paper with a brief discussion. 
Proofs, additional simulation results and more details about the algorithm implementation are deferred to the appendices in supplementary materials.  

\section{Bayesian Multi-task Variable Selection}\label{sec:model}
\subsection{Model, prior and posterior distributions} \label{sec:model.prior.posterior}
We  introduce some notation to be used throughout the paper. Denote the cardinality of a set $S$ by $|S|$. 
For any $k \in \bbN$, let $[k] = \{1, 2, \cdots, k\}$, and \tcr{let $2^{[k]} = \{ S \colon S \subseteq [k] \}$ denote the power set on it; note that $|2^{[k]}| = 2^k$.}  
For any vector $\bb$ and matrix $\bA$, let $\bb_{S}$ be the subvector of $\bb$ with index set $S$ and $\bA_{S}$ be the submatrix of $\bA$ containing columns indexed by $S$.   
Let $\ind$ denote the indicator function.  

For the multi-task variable selection problem, let $K$ denote the number of data sets we have, which is treated as fixed in this paper. 
We assume the same $p$ covariates are observed in all $K$ data sets. 
For the $k$-th data set, let $n_k$ denote the sample size, $\by^{(k)}\in\bbR^{n_{k}}$ the response vector, and $\bX^{(k)} \in \bbR^{ n_{k}\times p }$ the design matrix containing $n_{k}$ observations of the $p$ covariates. 
Consider the linear regression model 
\begin{equation}\label{eq:linear-regression}
    \by^{(k)} = \bX^{(k)}\bsbeta^{(k)} + \be^{(k)}, \quad\text{where } \be^{(k)} \sim \cN_{n_{k}}(0, \sigma^2 \bI_{n_{k}}),  \quad \forall \, k \in [K], 
\end{equation}
where $\cN_{n}$ denotes the $n$-dimensional normal distribution, $\bI_n$ denotes the $n$-dimensional identity matrix,  and the vector of regression coefficients, $\bsbeta^{(k)}$, is assumed to be sparse. 
For ease of presentation, we assume the error variance $\sigma^2$ is the same across all data sets, but this assumption can be relaxed straightforwardly in the theory and algorithms to be developed in this paper. For now, we also assume that  $\sigma^2$ is known, and we will explain in Appendix B, supplementary materials how to estimate it in practice.  

The main parameter of interest is the set-valued  vector $\bsgamma \in (2^{[K]})^p$, where $\gamma_j = I$ means that the $j$-th covariate has a nonzero regression coefficient (i.e., it is activated) in the $k$-th data set for each $k \in I$. For instance, $\gamma_1 = \{1, 2\}$ indicates that the first covariate is activated in both the first and second datasets; whereas $\gamma_2 = \emptyset$ indicates that the second covariate is deactivated across all datasets. Let $|\bsgamma| = \sum_{j = 1}^p \ind_{ \{ \gamma_j \neq \emptyset \} }$ denote the number of covariates that are activated in at least one data set, and let 
\begin{equation*}
  \ak_k(\bsgamma) = |\{ j \in [p] \colon  |\gamma_j| = k  \} |
\end{equation*}
be the number of covariates that are activated in $k$ distinct data sets. Note that $|\bsgamma| = \ak_1 + \cdots + \ak_K$.
The main idea behind our construction of the prior on $(\bsgamma, \{ \bsbeta^{(k)} \}_{k = 1}^K)$, denoted by $\prior(\bsgamma,  ( \bsbeta^{(k)} )_{k = 1}^K)$, is similar to the spike-and-slab prior for single-task variable selection.  
First, given $\bsgamma$, we assume that $\beta^{(k)}_j = 0$ if $k \notin \gamma_j$, and put a normal prior on $\beta^{(k)}_j$ otherwise. Next, to achieve sparsity, we put a prior on $\bsgamma$ that favors sparser models. 
Explicitly, our prior is given by 
\begin{align}
 \beta^{(k)}_j \mid \bsgamma \;& \overset{\mathrm{ind}}{\sim} \ind_{\{ k \notin \gamma_j \}}   \delta_0 +  \ind_{\{ k \in \gamma_j \}}  \cN_1(0,  \vb^{(k)}_j),  \quad \forall\, j \in [p], \, k \in [K], \label{eq:cond-prior-beta} \\ 
 \prior(\bsgamma)  \propto  \;& 
    \ind_{ \{ |\bsgamma| \leq L \} }  f(| \bsgamma |, L) \prod_{k = 1}^K p^{-\omega_k   \ak_k(\bsgamma ) },  \label{eq:prior-gamma} 
\end{align}
where  $L \in \bbN$, $ \vb^{(k)}_j > 0$ for $j \in [p], k \in [K]$ and $\omega_k > 0$ for $k \in [K]$ are hyperparameters, and $\delta_0$ denotes the Dirac measure at 0.
The function $f(|\bsgamma|, L)$ is introduced for generality, and in our theoretical analysis it will be assumed  to be ``asymptotically negligible'' compared to the product term in~\eqref{eq:prior-gamma}. 
Hence, the sparsity is mainly promoted by the hard threshold $L$, which is the maximum number of activated covariates (in at least one data set) we allow, and the hyperparameters $(\omega_k)_{k=1}^K$.  
We can view $\omega_k$ as the ``cost'' we pay for activating one covariate simultaneously in $k$ data sets. 

For most multi-task variable selection problems in reality, it is reasonable to assume that activated covariates tend to be shared across data sets, and to reflect this prior belief, we propose to choose $(\omega_k)_{k=1}^K$ such that 
\begin{align}\label{eq:kappa-condition} 
& \frac{\omega_K}{K} < \frac{\omega_{K - 1}}{K - 1} < \frac{\omega_{K - 2}}{K - 2} < \cdots <  \frac{\omega_2}{2} <  \omega_1.  
\end{align}
To see the reasoning behind~\eqref{eq:kappa-condition}, consider the case $K = 2$ where the above condition is reduced to $ \omega_2 < 2 \omega_1$. 
\tcr{Suppose that the first two covariates are identical in both data sets, and consider two models $\bsgamma, \bsgamma'$ such that $\gamma_1 = \{1\}, \gamma_2 = \{2\}$, $\gamma'_1 = \{1, 2\}$, $\gamma'_2 = \emptyset$ and $\gamma_j =\gamma'_j = \emptyset$ for any $j > 2$. 
Then, $\bsgamma, \bsgamma'$ have the same marginal likelihood, but $\ak_1(\bsgamma) = 2, \ak_2(\bsgamma) = 0$ and $\ak_1(\bsgamma') = 0, \ak_2(\bsgamma') = 1$.}  
It can be seen that $\omega_2 < 2 \omega_1$ ensures we favor $\bsgamma'$. 
An analogous argument for the general case with $K \geq 2$ leads to~\eqref{eq:kappa-condition}. 
Note that the choice of $\omega_1, \dots, \omega_K$ only reflects the experimenter's prior belief on $\bsgamma$, and one can even use $\omega_k \ll \omega_1$ for all $k \geq 2$ if prior information reveals that the majority of activated covariates must be shared in multiple data sets. 
However, in all of our numerical studies, we only use $(\omega_k)_{k=1}^K$  such that~\eqref{eq:kappa-condition} is satisfied and $\omega_1 \leq \omega_2 \leq \cdots \leq \omega_K$, the latter of which appears to be a natural condition in situations where not much prior information is available.  
We will refer to the model specified by Equations~\eqref{eq:linear-regression} to~\eqref{eq:prior-gamma} as \SSVS{} (multi-task Spike-and-Slab Variable Selection).  

\subsection{Posterior Consistency for  Multi-task Spike-and-slab Variable Selection}\label{sec:posterior.consistency}
In this section, we prove the posterior consistency for the \SSVS{} model, which generalizes the existing results for \tcr{single-task variable selection}~\citep{johnson2012bayesian, narisetty2014bayesian, yang2016computational, jeong2021unified}.  
We only consider in our proof the special case $n_k = n$ and $\vb^{(k)}_j = \vb$ for $k \in [K]$ and $j \in [p]$. 
Analogous arguments can be used to prove the posterior consistency in the more general case where  $( \vb^{(k)}_j)_{k \in [K], j \in [p]}$   are bounded and $n_1, \dots, n_K$ are different with $\min_{k \in [K]} n_k$ being sufficiently large. 

Suppose the data is  generated by~\eqref{eq:linear-regression}  with $\bsbeta^{(k)*}$ being the vector of true regression coefficients for the $k$-th data set. Our goal is to show that covariates with a relatively high signal strength (aggregated over multiple data sets) can be recovered with high probability.  To this end,  define the ``true'' model $\bsgamma^*$ as follows.   
Let $C_{\beta, 1}, \dots, C_{\beta, K}$ be constants that depend on $n$, $p$, $\sigma^2$ and $\vb$.  For each $j \in [p]$, define 
\begin{align*}
    m_j^* = \max \left\{ m \in [K] \colon  | \{ k  \in [K] \colon   (\beta^{(k)*}_j )^2 \geq C_{\beta, m} \} | = m \right \}, 
\end{align*}
and set $\gamma^*_j = \{ k  \in [K] \colon   (\beta^{(k)*}_j )^2 \geq C_{\beta, m_j^*}   \}$. 
If $k \in \gamma^*_j$,  we say the $j$-th covariate is ``influential'' in the $k$-th data set (a ``non-influential'' covariate may have a small but nonzero regression coefficient). 
In words, $C_{\beta, k}$ can be seen as the detection threshold for covariates that have relatively large nonzero regression coefficients in $k$ distinct data sets. 
For our posterior consistency result, we will assume that $C_{\beta, 1} > \cdots > C_{\beta, K}$, which reflects the advantage of multi-task learning: if a covariate is activated in more data sets, the signal size in each data set required for detection can be smaller.    

We assume the following five conditions hold for $k = 1, \cdots, K$,  which were also used in the consistency analysis for single-task variable selection conducted in~\cite{yang2016computational}. 
However, since we use an independent normal prior on the nonzero entries of $\bsbeta^{(k)}$ while~\cite{yang2016computational} considered the g-prior (which significantly simplifies the calculation), some of our conditions are slightly more stringent. 
\tcr{We use 
\begin{equation}\label{eq:def.S.m}
    S_k(\bsgamma) = \{ j \in [p] \colon k \in \gamma_j \} 
\end{equation}
to denote the set of covariates that are activated in the $k$-th data set, and we simply denote the set of truly influential covariates by $S^*_k = S_k(\bsgamma^*)$.} 
 
\begin{enumerate}[label=(\arabic*)]
\item The first condition is on the true regression coefficients $\bsbeta^{(k)*}$. \label{cond1}
\begin{enumerate}[label=(1\alph*)]
\item  For some $\L_1 \geq 1$,  $\frac{1}{n} \|\bX^{(k)}\bsbeta^{(k)*}  \|_2^2 \leq \L_1\sigma^2\log p$.
\label{cond1.1}
\item  For some $\L_2 \geq 0$, $\frac{1}{n} \|\bX^{(k)}_{ (S^*_k)^c}\bsbeta^{(k)*}_{(S^*_k)^c} \|_2^2 \leq \L_2\sigma^2\frac{\log p}{n}$.
\label{cond1.2}
\end{enumerate}
\end{enumerate} 
Condition~\ref{cond1.1} requires that the order of the total signal size in each data set, $\| \bX^{(k)}\bsbeta^{(k)*} \|_2^2$, is at most $n\log p$, and  Condition~\ref{cond1.2} requires that non-influential covariates cannot contribute significantly to the variation in $\by^{(k)}$. Both are reasonable assumptions for most high-dimensional problems. If one assumes all nonzero entries of $\bsbeta^{(k)*}$ are sufficiently large in absolute value, then $\bsbeta_{(S^*_k)^c}^* = 0$ and Condition~\ref{cond1.2} holds trivially. If one further assumes the influential covariates have bounded regression coefficients (i.e., coefficients do not grow with $n$), 
Condition~\ref{cond1.1} allows each data set to have $O(\log p)$ independent influential covariates,  which is not restrictive when $p \gg n$. More discussion on   Condition~\ref{cond1.1} will be given after Condition~\ref{cond5}.

\begin{enumerate}[label=(\arabic*)] \addtocounter{enumi}{1}
\item The second condition is on the design matrix. For any symmetric matrix $\bA$, denote its smallest eigenvalue by $\lambda_{\min}(\bA)$.
\begin{enumerate}[label=(2\alph*)]
\item  $\|X_j^{(k)}\|_2^2 = n$ for all $j = 1, \cdots, p$. \label{cond2.1}
\item  
For some $\nu \in (0, 1]$, $\min_{|S| \leq L} \lambda_{\min}\left(\frac{1}{n}(\bX_{S}^{(k)})^\T\bX_{S}^{(k)}\right) \geq \nu$.  \label{cond2.2}
\item Let $\bm{Z} \sim \cN_{n}(0, \bI)$. For some $\L_3 \ge 8 / \nu$, we have  \label{cond2.3}
\[\frac{1}{\sqrt{n}} \, \bbE_{\bm{Z}} \left[\max_{S \colon |S|\leq L}\max_{j \in S^c }   \big| \bm{Z}^\T (\bI - \bsPhi_{S}^{(k)}) \bX_j^{(k)} \big| \right]\leq \frac{1}{2}\sqrt{\L_3\nu\log p},\] 
where  $\bsPhi_{S}^{(k)} = \bX_{S}^{(k)}\bigl((\bX_{S}^{(k)})^\T\bX_{S}^{(k)}\bigr)^{-1}(\bX_{S}^{(k)})^\T$ is the projection matrix. 
\end{enumerate}
\end{enumerate} 
Condition~\ref{cond2.1} assumes all columns of $\bX^{(k)}$ are normalized and is used to simplify the calculation.  
Condition~\ref{cond2.2} is known as the lower restricted eigenvalue condition and is a modest constraint necessary for theoretical analysis of Bayesian variable selection problems~\citep{narisetty2014bayesian}. 
Condition~\ref{cond2.3} is called the sparse projection condition~\citep{yang2016computational}. 
\tcr{ 
Since Condition (2a) ensures that $ \|(\bI - \bsPhi_{S}^{(k)})\bX_j^{(k)}\|_2 \leq \sqrt{n}$ for all $k\in[K]$, $|S|\leq L$ and $j\in[p]$,  one can use a standard inequality for maximum of Gaussian random variables to show that Condition~\ref{cond2.3} always holds for some $B_3 = O(L \nu^{-1})$. But when the design matrix consists of independent covariates, $B_3$ can be much smaller; see~\citet{yang2016computational} for more details. 
}

\begin{enumerate}[label=(\arabic*)] \addtocounter{enumi}{2}
\item The third condition is on the choice of prior hyperparameters. Let $\widetilde{\vb} = \vb / \sigma^2$, and $C$ denote some universal constant (i.e., a constant that does not depend on $n$).  
\begin{enumerate}[label=(3\alph*)]
\item $ 1 + n\widetilde{\vb}  \leq C p^{2 \consta}$ for some 
$\consta > 0$. \label{cond3.1}
\item $L \leq C p^{\constb}$ for some $\constb \in (0, 1)$. \label{cond3.2}
\item  $( \omega_k )_{k = 1}^K$ satisfies~\eqref{eq:kappa-condition} and
$\frac{\omega_k}{k} > \frac{3}{2}\left(\frac{\L_1}{\nu \widetilde{\vb} } + \L_2 + \L_3\right) + \constb + 2$. \label{cond3.3}
\item The function $f$ in~\eqref{eq:prior-gamma} satisfies \label{cond3.4}
$1 \leq \frac{f(s + 1, L)}{f(s, L)} \leq L$ for every $s \in \bbN$. 
\end{enumerate}
\end{enumerate}   
\tcr{Condition~\ref{cond3.1} is only used to bound a determinant term in the posterior distribution of $\bsgamma$.} 
In high-dimensional settings  with $n \ll p$, both Conditions~\ref{cond3.1} and~\ref{cond3.2} are very natural and easy to satisfy. 
Condition~\ref{cond3.3} requires the parameter $\omega_k$ to be sufficiently large, which is needed to ensure that the posterior mass concentrates on sparse models. 
Condition~\ref{cond3.4} implies that $f(|\bsgamma |, L) \leq L^{|\bsgamma|}$. 
By Condition~\ref{cond3.3}, we have $\omega_k > 2 k \geq 2$, and thus the  product term in~\eqref{eq:prior-gamma} is at most $p^{-2|\bsgamma|}$. 
Since $L = o (p)$ by Condition~\ref{cond3.2}, we see that the magnitude of $\prior(\bsgamma)$ depends little on the function $f(|\bsgamma|, L)$. 

\begin{enumerate}[label=(\arabic*)]\addtocounter{enumi}{3}
\item The true sparsity level $\left| S^*_k \right|$ satisfies
$\max \left\{1,\left| S^*_k \right| \right\} \leq \frac{n}{25\log p}$. \label{cond4} 
\end{enumerate}

\begin{enumerate}[label=(\arabic*)]\addtocounter{enumi}{4}
\item The constant $C_{\beta, k}$ is given by \label{cond5}
$C_{\beta, k} = \left\{8\left(\frac{\omega_{k}}{k} + 2 + \consta \right) + \frac{12 \L_1}{\nu\widetilde{\vb} }\right\} \frac{ \sigma^2 \log p }{ n\nu }$.
\end{enumerate}
Condition~\ref{cond5} is known as the beta-min condition~\citep{yang2016computational}. 
By inequality~\eqref{eq:kappa-condition}, it further implies that $C_{\beta, K} < C_{\beta, K - 1} < \cdots < C_{\beta, 1}$; that is, the more data sets in which the covariate is influential, the lower the signal strength level required to detect it.  
To gain further insights into this condition, consider the case where $\consta, B_1, \nu, \widetilde{\vb}, \sigma^2$ are all universal constants. 
\tcr{Then, the order of $C_{\beta, k}$ is given by $ \frac{ \omega_k \log p }{k n}$,} which typically goes to zero 
in the high-dimensional asymptotic regimes considered in the literature, implying that  we can identify activated covariates with diminishing signal sizes.  
\tcr{Note that Conditions~\ref{cond1.1} and~\ref{cond5} are compatible with each other.  
For example, assuming $\omega_k / k$ is a constant, to satisfy Condition~\ref{cond5}, all entries of  $(\bsbeta^{(k)*})^2$ corresponding to influential covariates only need to have order $n^{-1}\log p$; in this case, we have $\| \bX^{(k)}\bsbeta^{(k)*} \|_2^2  = O( |S^*_k| \log p)$, which is much smaller than the order $n \log p$ required by Condition~\ref{cond1.1}.}

\begin{theorem}\label{posterior-concen} 
Suppose for each $k$, $\by^{(k)}$ is generated by~\eqref{eq:linear-regression} with $\bsbeta^{(k)} = \bsbeta^{(k)*}$. 
If Conditions~\ref{cond1} to~\ref{cond5} hold, we have
\begin{align*}
    \P \left( \left\{ \post(\bsgamma^* \mid  \fulldata ) \geq 1 - c_1 p^{-1} \right\} \right) \geq 1 - c_2 p^{-c_3}, 
\end{align*} 
where $\post( \cdot \mid   \fulldata )$ denotes the posterior measure for the model specified by Equations~\eqref{eq:linear-regression} to~\eqref{eq:prior-gamma},  $\P$ denotes the probability measure for the true data-generating process, and $c_1$, $c_2$ and $c_3$ are positive universal constants. 
\end{theorem}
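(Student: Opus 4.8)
The plan is to bound the total posterior odds against the true model. Writing $\post(\bsgamma^* \mid \fulldata) = \bigl(1 + \sum_{\br \neq \br^*} R(\br)\bigr)^{-1}$ with $R(\br) = \post(\br \mid \fulldata)/\post(\br^* \mid \fulldata)$, it suffices to show $\sum_{\br \neq \br^*} R(\br) \le c_1 p^{-1}$ on an event of $\P$-probability at least $1 - c_2 p^{-c_3}$, since then $\post(\bsgamma^* \mid \fulldata) \ge 1 - c_1 p^{-1}$. Because the $K$ data sets are conditionally independent given $\br$, the likelihood factor of $R(\br)$ splits as a product of per-data-set Bayes factors, while the prior factor, after rewriting \eqref{eq:prior-gamma} as $\prior(\bsgamma) \propto \ind\{\bsgamma \in \bbM_{\bsgamma}\}\, f(|\bsgamma|, L) \prod_{j : m_j \ge 1} p^{-\omega_{m_j}}$, charges one penalty $p^{-\omega_{m_j}}$ per activated covariate. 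Thus $\log R(\br)$ is a sum over $k$ of Bayes-factor terms plus a prior-penalty term indexed by covariates.

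First I would obtain a closed form for each per-data-set Bayes factor. Under the independent normal slab \eqref{eq:cond-prior-beta} the marginal likelihood of $\by^{(k)}$ given $\br^{(k)}$ is $\cN(\by^{(k)}; 0,\, \sigma^2 \bI + \sigma_0^2 \bX^{(k)}_{\br^{(k)}}(\bX^{(k)}_{\br^{(k)}})^\T)$, so the matrix determinant lemma and the Woodbury identity give
\begin{equation*}
\log \frac{\mathrm{ML}_k(\br^{(k)})}{\mathrm{ML}_k(\br^{(k)*})}
= -\frac{1}{2}\log\frac{| \bI + \widetilde\sigma_0^2 (\bX^{(k)}_{\br^{(k)}})^\T \bX^{(k)}_{\br^{(k)}}|}{| \bI + \widetilde\sigma_0^2 (\bX^{(k)}_{\br^{(k)*}})^\T \bX^{(k)}_{\br^{(k)*}}|}
+ \frac{1}{2\sigma^2}\bigl(Q_k(\br^{(k)}) - Q_k(\br^{(k)*})\bigr),
\end{equation*}
where $Q_k(\br) = (\by^{(k)})^\T \bX^{(k)}_{\br}(\widetilde\sigma_0^{-2}\bI + (\bX^{(k)}_{\br})^\T \bX^{(k)}_{\br})^{-1}(\bX^{(k)}_{\br})^\T \by^{(k)}$ is a ridge-smoothed projection of the response onto $S_{\br}$. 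This is the analogue of the g-prior expression used by \citet{yang2016computational}, the difference being the regularizer $\widetilde\sigma_0^{-2}\bI$; Conditions~\ref{cond3.1} and~\ref{cond2.2} let me sandwich $Q_k$ between the exact projection $(\by^{(k)})^\T \bsPhi^{(k)}_{\br}\by^{(k)}$ up to a $(1+o(1))$ factor and that projection minus an $O(\log p)$ complexity correction, which is precisely where the slight strengthening of the conditions relative to \citet{yang2016computational} is used.

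Next I would partition $\{\br \neq \br^*\}$ according to whether, in each data set, $S_{\br^{(k)}}$ \emph{over-fits} (strictly contains $S_{\br^{(k)*)}}$), \emph{under-fits} (omits an influential covariate), or both, and bound $R(\br)$ on each piece. For over-fitting directions the determinant term supplies a penalty of order $p^{-\alpha}$ per spurious covariate by Condition~\ref{cond3.1}, and the prior penalty $p^{-\omega_{m_j}}$ via Condition~\ref{cond3.3} supplies a further factor dominating the number $\binom{p}{\ell}$ of candidate models of each added size, so this contribution to $\sum R(\br)$ is geometrically summable and $O(p^{-1})$. For under-fitting directions $Q_k(\br^{(k)}) - Q_k(\br^{(k)*})$ turns strongly negative: dropping a covariate whose squared coefficient exceeds $C_{\beta, m_j^*}$ inflates the residual by an amount controlled through Conditions~\ref{cond1} and~\ref{cond2.2}, and Condition~\ref{cond5} calibrates $C_{\beta,k}$ so that this deficit beats both the prior reward for sparsity and the model count. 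The genuinely multi-task wrinkle appears when a covariate is assigned to the wrong activation set $I \neq I^* \in \cP_K$ (e.g.\ a truly shared covariate treated as active in only one data set): this over-fits in some data sets and under-fits in others, so neither bound applies alone, and one must combine the per-data-set deficits against the change in prior reward. The chain $C_{\beta,K} < \cdots < C_{\beta,1}$ together with the monotonicity \eqref{eq:kappa-condition} is exactly what guarantees the net exponent points in the right direction in every such case.

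The main obstacle, and the step requiring the most care, is the \emph{uniform} probabilistic control of the random quadratic forms $Q_k$ over the entire model space simultaneously while preserving the $p^{-1}$ rate. Writing $\by^{(k)} = \bX^{(k)}\bsbeta^{(k)*} + \sigma \bm{Z}$ with $\bm{Z} \sim \cN(0,\bI)$, the fluctuations of $Q_k$ reduce to suprema of Gaussian chaos of the form $\max_{|\br| \le L}\max_{\bs \subseteq \br^c}\bigl|\bm{Z}^\T(\bI - \bsPhi^{(k)}_{\br})\bX^{(k)}_{\bs}\bigr|$; Condition~\ref{cond2.3} is imposed precisely to bound the expectation of this maximum, after which Gaussian concentration (Borell--TIS) plus a union bound over the at most $\binom{p}{\le L}$ models yields the high-probability event, with Condition~\ref{cond4} keeping all relevant Gram matrices well-conditioned. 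Because there are $K$ data sets and $2^K-1$ activation patterns per covariate, the union bound must be arranged so that the polynomial-in-$p$ model counts are absorbed by the exponential tail; since $K$ is fixed these combinatorial factors are harmless, and the stated rate $1 - c_2 p^{-c_3}$ follows.
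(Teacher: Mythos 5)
Your skeleton is essentially the paper's: write the posterior of $\bsgamma^*$ in terms of the sum of posterior odds, integrate out $\bsbeta$ to get a determinant-times-quadratic-form expression (your $Q_k(\br)$ is exactly $\|\by^{(k)}\|_2^2 - R^{(k)}_{\br}$ in the paper's notation), split into over- and under-fitting models, control the quadratic forms uniformly via Condition~\ref{cond2.3} plus Gaussian concentration of the supremum, and invoke the beta-min Condition~\ref{cond5} for omitted signals. However, the step you dispatch in one sentence at the end of your third paragraph --- that \eqref{eq:kappa-condition} ``guarantees the net exponent points in the right direction in every such case'' --- is precisely the crux of the multi-task argument, and asserting it is not proving it. Comparing an underfitted/mixed model $\br$ to $\widetilde\br = \br\cup\br^*$, the prior exponents change by $\sum_{k=1}^K \omega_k(\tilde l_k - l_k)$ (where $\tilde l_k, l_k$ count covariates of activation multiplicity $k$ in $\widetilde\br,\br$), and what must be shown is that this is dominated by $\sum_{j\in[p]} |\cM_j^*\setminus\cM_j|\,(\omega_{m_j^*}/m_j^*)$, the quantity the signal deficit from Condition~\ref{cond5} can pay for; this is the paper's inequality \eqref{eq:omega_claim}. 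The inequality couples all missed covariates and all multiplicities at once, and the paper needs a genuine induction on $|\br^*\setminus\br|$ to prove it: restore one missed activation at a time, observe the prior exponent changes by $\omega_{k_0}-\omega_{k_0-1}$ when the completed multiplicity is $k_0$, check via \eqref{eq:kappa-condition} that this is at most $\omega_{m_i^*}/m_i^*$, and use that $\cM_j^*\setminus\cM_j$ decomposes as a disjoint union across the induction steps. Without this lemma your underfitted/mixed case does not close.

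A second, more local error: your accounting of the determinant term runs backwards. Condition~\ref{cond3.1} is an \emph{upper} bound on $1+n\widetilde\sigma_0^2$, so it cannot deliver a determinant penalty of order $p^{-\alpha}$ per spurious covariate in the overfitted case; if $n\widetilde\sigma_0^2$ is small, the ratio $D_{\br}/D_{\br^*}$ for $\br\supset\br^*$ is close to $1$ (the paper only uses $D_{\br}/D_{\br^*}\le 1$), and all of the overfitted suppression must come from the prior penalty, which by Condition~\ref{cond3.3} has to beat simultaneously the $(Kp)^l$ model count \emph{and} the noise-fitting likelihood gain $\exp\{(R_{\br^*}-R_{\br})/2\sigma^2\} \le p^{3l(\L_1/(\nu\widetilde\sigma_0^2)+\L_2+\L_3)/2}$ --- that is exactly why Condition~\ref{cond3.3} reads $\omega_k/k > \tfrac32(\L_1/(\nu\widetilde\sigma_0^2)+\L_2+\L_3)+\beta+2$. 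Where Condition~\ref{cond3.1} actually bites is the opposite direction: for an underfitted $\br$, the determinant ratio \emph{favors} the smaller model by as much as $(1+n\widetilde\sigma_0^2)^{|\widetilde\br\setminus\br|/2}\le Cp^{\alpha|\widetilde\br\setminus\br|}$, a gain that the beta-min threshold must also defeat --- which is why $\alpha$ appears inside $C_{\beta,k}$ in Condition~\ref{cond5}. Your underfitted accounting (deficit versus ``prior reward and model count'' only) omits this term, so as written the calibration would not balance even granting the rest of the sketch.
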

\begin{proof}
We defer the proof to Appendix A, supplementary materials.
\end{proof}

\begin{remark} 
\tcr{The main difference between Theorem~\ref{posterior-concen} and existing consistency results for single-task spike-and-slab variable selection~\citep{narisetty2014bayesian, yang2016computational} is that the detection threshold $C_{\beta, k}$ in our Condition~\ref{cond5} depends on $k$. 
When~\eqref{eq:kappa-condition} holds, $C_{\beta, k}$ is smaller for larger $k$, which means that by combining information from multiple data sets and properly choosing $(\omega_k)_{k=1}^K$ (see Condition~\ref{cond3.3}), we can detect activated covariates with smaller signal sizes. This rigorously justifies the advantage of multi-task variable selection over separate analysis. 
}
\end{remark}

\subsection{Multi-task Sum of Single Effects Model}
For Bayesian problems, posterior distributions are typically calculated through Markov Chain Monte Carlo (MCMC) sampling. 
But in our case, the huge discrete model space can make the sampling converge very slowly.  
In this section, we approximate our \SSVS{} model by a multi-task sum of single effects (muSuSiE) model, generalizing the recently developed sum of single effects (SuSiE) model of~\citet{wang2020simple} for \tcr{single-task  variable selection}.  
The muSuSiE model assumes that for each $k \in [K]$, 
\begin{equation}\label{eq:muSuSiE.reg}
    \by^{(k)} \sim  \cN_{n_k}(\bX^{(k)}\bsbeta^{(k)}, \sigma^2 \bI_{n_k}),  \text{ where } \bsbeta^{(k)} = \sum_{l = 1}^L \bsbeta^{(k, l)},  
\end{equation}
and each $\bsbeta^{(k, l)} \in \bbR^p$ has at most one nonzero entry; that is, we decompose each $\bsbeta^{(k)}$ into a ``sum of single effects.'' We will call each $\bsbeta^{(k, l)}$ a  single-effect  regression coefficient vector. 
Similarly, we introduce $L$ set-valued  single-effect selection vectors $\bsgamma^{(1)}, \dots, \bsgamma^{(L)}$ such that  $\gamma_j^{(l)} = I$ means that $\beta^{(k, l)}_j$ is nonzero for each $k \in I$ (i.e., covariate $j$ is the $l$-th single effect and is activated in the data sets indexed by $I$). 
Let $\dset$ denote a probability distribution on $2^{[K]} \setminus \emptyset$ and $\mathrm{Unif}([p])$ denote the uniform distribution on $[p]$. 
The prior distribution we put on $\{ \bsgamma^{(l)}\colon  l \in [L] \}$  encodes the following procedure for selecting and activating covariates:  
for each $l \in [L]$, we  draw  $\sphi_l \sim \mathrm{Bernoulli}(\pi_{\sphi})$, $\sel_l \sim \mathrm{Unif}([p])$ and $I_l \sim \dset$; if $\sphi_l = 1$, we activate the $\sel_l$-th covariate in the data sets indexed by $I_l$,  and we do nothing if $\sphi_l = 0$. 
So $\sphi_l$ indicates whether the $l$-th single effect is indeed activated.  
For each activated covariate in each data set, we still use a normal prior distribution on its effect size as in~\eqref{eq:cond-prior-beta}. Note that we assume $\sel_1, \dots, \sel_L$ are generated independently and thus a covariate can be activated multiple times, which  is the key difference between muSuSiE  and \SSVS{}.

Formally,  the prior distribution of muSuSiE  can be expressed as follows: 
\begin{equation}\label{eq:muSuSiE}
\begin{aligned}
 \sel_l &\overset{\mathrm{ind}}{\sim} \mathrm{Uniform}([p]), & \forall \, l \in [L], \\ 
 \sphi_l &\overset{\mathrm{ind}}{\sim}\mathrm{Bernoulli}(\pi_{\sphi}), & \forall \, l \in [L], \\ 
 \gamma^{(l)}_j \mid  ( \sel_l, \sphi_l )_{l \in [L]} &\overset{\mathrm{ind}}{\sim} (1 - \sphi_l \ind_{\{ \sel_l = j \}})   \delta_\emptyset +  \sphi_l \ind_{\{ \sel_l = j \}} \dset ,  & \forall \, j \in [p], l \in [L], \\ 
 \beta^{(k, l)}_j \mid  ( \bsgamma^{(l)} )_{l \in [L]}  
&\overset{\mathrm{ind}}{\sim} \ind_{\{ k \notin \gamma_j^{(l)} \}}   \delta_0 +  \ind_{\{ k \in \gamma_j^{(l)} \}}  \cN_1(0,  \vb^{(k, l)}_j),  & \forall\, j \in [p], \, k \in [K], \,  l \in [L],   
\end{aligned}
\end{equation}
where $( \vb^{(k, l)}_j )_{j, k, l}$ are hyperparameters and $\delta_{\emptyset}$ denotes the Dirac measure that assigns unit probability mass to the empty set. 
Though in~\eqref{eq:muSuSiE.reg} we write $\bsbeta^{(k)}$ as the sum of $L$ terms, the actual sparsity is controlled by the hyperparameter $\pi_{\sphi}$. 
Each $\bsgamma^{(l)}$ has zero (if $\sphi_l = 0$) or one (if $\sphi_l = 1$) covariate activated. 

We now discuss how to choose the probability distribution $\dset$. 
We introduce hyperparameters $\pi_1 > \pi_2 > \cdots > \pi_K > 0$ and  set
\begin{align*}
    \dset(  I ) = p \, \pi_{|I|}, \quad \forall \, I \in 2^{[K]} \setminus \emptyset. 
\end{align*}
Assume  $\pi_1, \dots, \pi_K$ are normalized so that  $\dset(2^{[K]} \setminus \emptyset) = 1$. 
Let $s_\sphi = | \{ l \colon \sphi_l = 1\}|$ denote the number of activated single effects,  
$\{ \bsgamma^{(l)} \colon \sphi_l = 1 \}$ denote the unordered set of activated single-effect selection vectors, and  $I_l$ denote the value of $\gamma^{(l)}_{\sel_l}$. 
Note that $\{ \bsgamma^{(l)} \colon \sphi_l = 1 \}$ is completely determined by $\bigl( (\sel_l, I_l)\bigr)_{l \in [L]}$, since we always have $\gamma^{(l)}_j = \emptyset$ for any $j \neq \sel_l$. 
Let $\tilde{\prior}$ denote the probability measure under the muSuSiE model given by~\eqref{eq:muSuSiE}.  
If no covariate is activated more than once (i.e., for any $l \neq l'$ such that $\sphi_l = \sphi_{l'} = 1$, we have $\sel_l \neq \sel_{l'}$), 
\begin{equation}\label{eq:prior.susie}
    \tilde{\prior}( \{ \bsgamma^{(l)} \colon \sphi_l = 1 \} ) = 
   f( s_\sphi, L) 
    \prod_{l = 1}^L \pi_{\sphi}^{\sphi_l} (1 - \pi_{\sphi})^{1 - \sphi_l}\pi_{ |I_l|  }^{\sphi_l}, 
\end{equation}
where $f(s, L) = L\times(L - 1)\times \cdots \times (L - s + 1)$  satisfies Condition \ref{cond3.4}. A straightforward calculation  shows that~\eqref{eq:prior.susie} and~\eqref{eq:prior-gamma} are equivalent if 
\begin{equation}\label{eq:pi_to_omega}
    \frac{\pi_{\sphi}\pi_k}{1 - \pi_{\sphi}} = p^{-\omega_k},
\end{equation}
for each $k \in [K]$. This shows why muSuSiE is an approximation to the \SSVS{} model. 
Again, the two models are not equivalent because we may have $\sel_l = \sel_{l'}$ for some $l \neq l'$ in~\eqref{eq:muSuSiE}, though this happens with very small probability when $p$ is large.  
While the repeated activation of a covariate may seem artificial and slightly unnatural, this feature enables us to propose an efficient VB method (to be introduced in the next subsection) which can quickly yield an approximate Bayesian solution to the multi-task  variable  selection problem.

\begin{remark}\label{rmk:bernoulli}
While  muSuSiE is based on the SuSiE model proposed by~\citet{wang2020simple} for single-task variable selection, 
our model~\eqref{eq:muSuSiE} with $K=1$ still differs from SuSiE in that 
we use Bernoulli random variables $\sphi_1, \cdots, \sphi_L$ to control the actual sparsity of $(\bsbeta^{(k)})_{k \in [K]}$. 
The prior distribution used in model~\eqref{eq:muSuSiE} assumes that the number of activated covariates (including duplicates) follows $\mathrm{Binomial}(L, \pi_\sphi)$, and given a sufficiently large sample size, the model~\eqref{eq:muSuSiE} is able to learn the actual number of activated covariates, which can range from $0$ to $L$. 
This also implies that an increase in the value of $L$ is not likely to have a significant impact on the posterior distribution. 
In contrast, SuSiE  assumes there are exactly $L$ activated single effects and relies on an ad-hoc procedure to determine which covariates are truly activated from the output of a VB algorithm. 
\end{remark}

\subsection{Iterative Bayesian Stepwise Selection for Fitting muSuSiE}\label{sec:ibss}
We propose an iterative Bayesian stepwise selection (IBSS) method for fitting the model given in \eqref{eq:muSuSiE} by generalizing the IBSS algorithm of~\cite{wang2020simple}. 
The main idea is to iteratively find $\bsgamma^{(l)}$ for $l = 1, \dots, L$ in the muSuSiE model by conditioning on the other $L-1$ single effects. 
The starting point for our algorithm is the muSuSiE model with $L=1$, which we will refer to as the ``multi-task single-effect regression'' (muSER) model and we recall below with superscript $l$ dropped: 
\begin{equation}\label{eq:muSER} 
\begin{aligned}
\by^{(k)} &\sim \cN_{n_k}(\bX^{(k)} \bsbeta^{(k)}, \sigma^2\bI_{n_k}), & \forall \, k \in [K], \\
 \sel &\overset{\mathrm{ind}}{\sim} \mathrm{Uniform}([p]), \\ 
 \sphi &\overset{\mathrm{ind}}{\sim}\mathrm{Bernoulli}(\pi_{\sphi}), \\ 
 \gamma_j \mid  ( \sel, \sphi) &\overset{\mathrm{ind}}{\sim} (1 - \sphi\ind_{\{ \sel = j \}})   \delta_\emptyset +  \sphi \ind_{\{ \sel = j \}} \dset, & \forall \, j \in [p], \\ 
 \beta_j^{(k)} \mid \bsgamma
&\overset{\mathrm{ind}}{\sim} \ind_{\{ k \notin \gamma_j \}}   \delta_0 +  \ind_{\{ k \in \gamma_j \}}  \cN_1(0,  \vb), \quad\quad  & \forall\, j \in [p], k\in[K]. 
\end{aligned}
\end{equation}
Since we only allow at most one covariate to be activated in~\eqref{eq:muSER}, the joint posterior distribution of $(\bsgamma, 1 - \sphi)$ given $\sigma^2$ and $\vb$ can be quickly calculated, which is given by a multinoimal distribution with 
\begin{equation*}
\prior_{\muSER}( \sphi = 0 \mid (\by^{(k)})_{k \in [K]} ) = \alpha_0, \quad  \prior_{\muSER}(\gamma_j = I\mid (\by^{(k)})_{k \in [K]} ) = \alpha_{j, I},   
\end{equation*}
where expressions for $\alpha_{j, I}$ and $\alpha_0$ are given in Appendix B, supplementary materials. 
By definition, $\alpha_0 + \sum_{j \in [p]} \sum_{I \neq \emptyset}\alpha_{j, I} = 1$. 
Further, the posterior distribution of $\beta_j^{(k)}$ given $\sphi = 1, u = j, k \in \gamma_j$ (i.e., the $j$-th covariate is activated in the $k$-th data set) is 
$$\beta_j^{(k)}|\fulldata, \sigma^2, \vb, \sphi = 1, u= j,  k \in \gamma_j \sim \cN(\mu_{j}^{(k)}, \pvb^{(k)}_j),$$
where we defer the explicit expressions for $\mu_{j}^{(k)}$ and $\pvb^{(k)}_j$ to Appendix B, supplementary materials.  
(Note that whenever $\sphi = 0$, $u \neq j$ or $k \notin I$, the posterior distribution of $\beta_j^{(k)}$ is $\delta_0$.)
For ease of notation, we introduce a function, $f_{\muSER}$, which returns the posterior distribution for $\bsbeta$ under the muSER model. 
Since this posterior distribution is determined by the values of  $\alpha_0$, $\bsalpha = ( \alpha_{j, I} )_{j \in [p], I\in2^{[K]}\setminus \emptyset}$,  $\bsmu^{(k)} = (\mu_{1}^{(k)}, \cdots, \mu_{p}^{(k)})$ and $\bm{\pvb}^{(k)} = (\pvb^{(k)}_1, \cdots, \pvb^{(k)}_p)$ for $k = 1, \cdots, K$, we define  $f_{\muSER}$ by 
\begin{equation}\label{eq:muSER-fun}
f_{\muSER}(\fulldata; \sigma^2, \vb) \coloneqq \left(\bsalpha, \alpha_0, ( \bsmu^{(k)} )_{k \in [K]}, ( \bm{\pvb}^{(k)} )_{k \in [K]} \right).  
\end{equation}
Observe that for the muSuSiE model, if $\{ \bsbeta^{(k, l')} \colon l' \in [L] \text{ and } l' \neq l\}$ is given, calculating the posterior distribution of $\bsbeta^{(k, l)}$ is very straightforward: one just needs to fit the muSER model by substituting the residual $\by^{(k)} - \bX^{(k)} \sum_{l' \neq l}\bsbeta^{(k, l')}$ for the response  $\by^{(k)}$ for each $k$ in the muSER model~\eqref{eq:muSER}. This suggests an iterative strategy for fitting muSuSiE, which we detail in Algorithm~\ref{alg:IBSS}. 
The implementation of our algorithm is analogous to the IBSS algorithm for the original SuSiE model.

{\small 
\begin{algorithm}[h!]
\caption{Iterative Bayesian stepwise selection (IBSS) for fitting muSuSiE}\label{alg:IBSS}
\begin{algorithmic}
\Require data $ ( \bX^{(k)} )_{k = 1}^K$, $( \by^{(k)} )_{k = 1}^K$, number of single effects $L$
\Require a function $f_{\muSER}$ which is defined in \eqref{eq:muSER-fun}
\State initialize posterior means $\widehat \bsbeta^{(k, l)} = 0$ for $l = 1, \cdots, L$ and $k = 1, \cdots, K$
\State initialize $\widehat{\sigma}^2$ and $( \tau^{(l)} )_{l = 1}^L$
\If{the stopping criterion is not satisfied}
    \For{$l = 1, \cdots, L$}
    \For{$k = 1, \cdots, K$}
    \State $\resid^{(k, l)} \gets \by^{(k)} - \bX^{(k)}\sum_{l' \neq l}\widehat \bsbeta^{(k, l')}$
    \EndFor
    \State estimate $\tau^{(l)}$ by maximizing Equation~(B.2) in Appendix B.1  
    \State $ \left(\bsalpha^{(l)}, \alpha_{0}^{(l)},  (\bsmu^{(k, l)} )_{k = 1}^K, ( \bm{\pvb}^{(k, l)} )_{k = 1}^K\right) \gets f_{\muSER}(  (\resid^{(k, l)})_{k \in [K]}; \widehat{\sigma}^2, \tau^{(l)})$.
    \For{$k = 1, \cdots, K$} 
    \For{$j = 1, \cdots, p$} 
    \State $\widehat \beta^{(k, l)}_j \gets   \mu^{(k, l)}_j   \sum_{\{I \colon k\in I\}} \alpha^{(l)}_{j, I}$   
    \EndFor
    \EndFor
\EndFor
\State update $\widehat{\sigma}^2$ by Equation~(B.8) in Appendix B.2  
\EndIf
\State  \Return $\widehat{\sigma}^2$, $(\bsalpha^{(l)})_{l = 1}^L$,  $( \widehat \bsbeta^{(k, l)} )_{l \in [L], k \in [K]}$
\end{algorithmic}
\end{algorithm}
}

 
Let $\widehat \bsbeta^{(k, l)}$ be as given in the output of Algorithm \ref{alg:IBSS}, which denotes the estimated $l$-th single-effect regression coefficient vector for the $k$-th data set. 
We can express the posterior mean regression coefficient vector for the $k$-th data set by 
\begin{equation}\label{eq:post-reg-coef}
\widehat \bsbeta^{(k)} = \sum_{l = 1}^L \widehat \bsbeta^{(k, l)}.  
\end{equation}  
Further, taking all $L$ single-effect selection vectors into account, we can approximate the probability that the $j$-th covariate is activated in the $k$-th data set by 
\begin{equation}\label{eq:sel-prob}
 r^{(k)}_j = 1 - \prod_{l = 1}^L \left(1 - r^{(k, l)}_{j}\right), \quad \text{ where }     r^{(k, l)}_j = \sum_{\{I \colon k\in I\}} \alpha^{(l)}_{j, I}
\end{equation}
is the probability that the $j$-th coordinate is activated in the $k$-th data set in the $l$-th single-effect model, conditioning on the other $L-1$ single effects.

By an argument similar to that in~\cite{wang2020simple}, we can show that this IBSS algorithm coincides with the coordinate ascent variational inference (CAVI) algorithm \citep{blei2017variational} for maximizing the evidence lower bound over a particular variational family for the muSuSiE model; see Appendix B, supplementary materials, where we also explain how to choose the stopping criterion and estimate $\sigma^2$ and $\tau^{(l)}$ empirically in Algorithm~\ref{alg:IBSS}. 

\begin{remark}\label{rmk:ibss.advantage}
We can also implement the VB algorithm for the model proposed in Section~\ref{sec:model} by generalizing VB methods for single-task variable selection~\citep{carbonetto2012scalable, huang2016variational, ormerod2017variational, ray2022variational}. 
However, a key advantage of the IBSS algorithm for SuSiE/muSuSiE is that, in addition to being fast, it does not use a variational family that assumes independence among $\gamma_1, \dots, \gamma_p$ (in single-task variable selection, $\gamma_j$ indicates whether the $j$-th covariate is activated), which is particularly important for high-dimensional applications where high collinearity is expected. 
We refer readers to~\citet{wang2020simple} for more discussion on why this ``sum of single effects'' representation can effectively overcome  collinearity and the advantage of IBSS over deterministic search algorithms that return a single best model.  
\end{remark} 

\section{Simulation Studies for Bayesian Multi-task Variable Selection}\label{sec:vs-simu}
We conduct simulation  studies to illustrate the benefits of performing variable selection for multiple data sets jointly rather than independently.  
We generate  data sets according to~\eqref{eq:linear-regression} using the same $\sigma^2$ for all $K$ data sets. 
For the true model, we consider two types of activated covariates.  
For the first type, each covariate is activated in all $K$ data sets. We denote the set of these covariates by $S_{\mathrm{com}}^*$  and let $s_1^* = |S_{\rm{com}}^*|$ (subscript `com' means `common').  
For the second type, each covariate is activated in only one data set. We choose some $s_2^* > 0$ and draw $s_2^*$ covariates of the second type for each data set;   denote the set of covariates that are only activated in the $k$-th data set by $S_{\mathrm{pri}, k}^*$ (subscript `pri' means `private'). 
The true model size is given by $s^* = s_1^* + K s_2^*$, and $S_k^* = S_{\mathrm{com}}^* \cup S_{\mathrm{pri}, k}^*$ is the true set of activated covariates for the $k$-th data set. For each activated covariate, we sample its regression coefficient $\beta^{(k)}_i$ independently from the normal distribution $\cN(0, 0.6^2)$.  
For the design matrix, we sample each entry of $\bX^{(k)}\in\bbR^{n\times p}$ from the standard normal distribution. Finally, we generate  the response data by drawing $\by^{(k)} \sim \cN(\bX^{(k)}\bsbeta^{(k)}, \sigma^2\bI)$.

After generating the data set $\bigl( (\bX^{(k)}, \by^{(k)}) \bigr)_{k = 1}^K$, we run the IBSS algorithm to fit the muSuSiE model, which does variable selection simultaneously for $K$ data sets. 
For comparison, we also fit the SuSiE model using the algorithm of~\citet{wang2020simple} for each data set separately. 
We will refer to the former as the multi-task method and the latter as the \tcr{separate single-task analysis}. When running simulations, we set $L = s^* + K$ for the multi-task method and $L = s_1^* + s_2^* + 1$ for the separate analysis method. We have also tried other values of $L$ and observed that as long as $L$ is larger than the true number of activated covariates, its choice has negligible effect on the estimates; \tcr{the reason was explained in Remark~\ref{rmk:bernoulli}.}
For the hyperparameter $\bspi$ in the muSuSiE model,  we 
set it by~\eqref{eq:pi_to_omega}, and thus it suffices to specify $\omega_1, \dots, \omega_K$. 
When $K = 2$, we use $p^{-\omega_1} =  p^{-1.1}/2$ and $p^{-\omega_2} =  p^{-1.25}$; when $K = 5$, we use $\omega_k = 1.25 + 0.15 k$ for each $k$. 
Additionally, we tried joint Markov Chain Monte Carlo (MCMC), separate MCMC, and LASSO methods, for which the results and implementation details are deferred to Appendix C, supplementary materials.
 
For the multi-task method, recall that the probability of the $j$-th covariate being activated in the $k$-th dataset,  $r^{(k)}_j$, is defined in Equation \eqref{eq:sel-prob}.
Setting the threshold to $0.5$, we define the selected activated covariates from our multi-task method  by
$S_{\text{mu}, k} = \{j: r^{(k)}_j \geq 0.5\}$ (subscript `mu' means `multi-task'). 
For the standard SuSiE method, we use the \texttt{susie} function from the \texttt{susieR} package \citep{wang2020simple} to find the set of activated covariates, which we denote by $S_{\text{si}, k}$ (subscript `si' means `single-task'). 
To compare the performance of two approaches, we calculate  the sensitivity (sens) and precision (prec) by $\text{sens}(S_{k}) = \frac{\left\lvert S_{k} \cap S_k^* \right\lvert}{\left\lvert S_k^* \right\lvert}$, $\text{prec}(S_{k}) = \frac{\left\lvert S_{k} \cap S_k^* \right\lvert}{\left\lvert S_{k} \right\lvert}$,
where we let $S_{k} = S_{\text{mu},k}$ for the multi-task method and  $S_{k} = S_{\text{si},k}$ for the single-task approach.  

\begin{table}
\captionsetup{font=scriptsize}
\scriptsize
    \centering
    \begin{tabular}{|cccc|cc|cc|}
    \hline
    $p$ & $n$ & $s_1^*$ & $s_2^*$ & sens\_mu & sens\_si & prec\_mu & prec\_si\\
    \hline
600 & 100 & 10 & 2 & 0.4526& 0.2632& 0.9884& 0.9365\\ 
600 & 100 & 10 & 5 & 0.3456& 0.2045& 0.9747& 0.9258\\ 
\hline
1000 & 500 & 10 & 2 & 0.8121& 0.7063& 0.9962& 1\\ 
1000 & 500 & 10 & 5 & 0.7905& 0.7011& 0.9928& 0.9996\\ 
1000 & 500 & 25 & 2 & 0.8191& 0.696& 0.9985& 1 \\
1000 & 500 & 25 & 5 & 0.804& 0.6949& 0.9964& 0.9999\\ 
\hline
    \end{tabular}
    \caption{Simulation results for two data sets with $\sigma = 1$. For each setting, the result is averaged over $500$ replicates.}
    \label{tab:simu-res-K2-sigma1}
\end{table}

Table \ref{tab:simu-res-K2-sigma1} shows the simulation results for $\sigma^2 = 1$ and $K = 2$. We consider two scenarios: one with $p = 600$ and $n = 100$, and the other with $p = 1000$ and $n = 500$. From Table ~\ref{tab:simu-res-K2-sigma1}, we observe that when the sample size is small ($n = 100$), the multi-task method identifies more activated covariates than the single-task approach, resulting in higher sensitivity and precision. When the sample size is increased to $500$, the multi-task method still improves the sensitivity but has a slightly smaller  precision, because the multi-task method tends to treat the covariates with a very strong signal strength in only one data set as simultaneously activated in two data sets. Nevertheless, considering the significant improvement in sensitivity, the overall performance of the multi-task method seems much better. To further examine this phenomenon, we plot the sensitivity and specificity for $|s_1^*| = 10$ and $|s_2^*| = 2$ 
in Appendix C.1, supplementary materials; all other settings yield similar plots. 

The simulation results for $\sigma^2 = 1$ and $K = 5$ are shown in Table C.1 in Appendix C.1, supplementary materials. It is worth noting that when the sample size is small, compared with the case $K=2$, the advantage of the multi-task method with $K=5$ becomes much more significant and it outperforms the single-task method significantly in terms of both sensitivity and precision. When the sample size is large, the multi-task method is still better than the single-task method, but the performance is similar to that for $K = 2$.    
The simulation results for $\sigma^2 = 4$ (which represents a higher noise level) are shown in Appendix C.1, supplementary materials, where we have made very similar observations for the behavior of the two methods. 

In Appendix C.2, supplementary materials, we show the average computation time of the multi-task and separate single-task methods for each setting across $500$ replicates. The two methods take a similar amount of time when $K = 2$. However, as $K$ increases to $5$, the multi-task method takes more time than the separate analysis. For the latter, the time increases linearly with respect to $K$, while the computational time of muSuSiE increases exponentially. Additionally, when the number of individually activated covariates is small ($|s_2^*| = 2$),  the multi-task method is significantly faster than in the case with $|s_2^*| = 5$. 
The stability of our algorithm with respect to the choice of $\omega$ is discussed in Appendix C.3, supplementary materials. 

\section{Differential DAGs Analysis via  Multi-task Variable Selection}\label{sec:Multi-DAGs}

\subsection{From Multi-task Variable Selection to Joint Estimation of Multiple DAG models}
A highly useful application of the proposed Bayesian multi-task variable selection method is that it can be naturally extended to the multi-task structure learning problem, i.e., joint estimation of multiple DAG models.  
The existing Bayesian literature on the statistical learning of multiple graphs mostly focuses on undirected graphical models; see, for example,~\citet{danaher2014joint, peterson2015bayesian, gonccalves2016multi, niu2018latent, peterson2020bayesian, shaddox2020bayesian, peterson2021bayesian}.  
For the learning of multiple DAG models, \citet{oyen2012leveraging} proposed a greedy search algorithm,  \citet{yajima2015detecting} devised an MCMC sampler generalizing the method of~\citet{fronk2004markov}, and \citet{lee2022bayesian} proposed a method based on the joint empirical sparse Cholesky (JESC) prior. 
\citet{castelletti2020bayesian} developed the Bayesian methodology and MCMC algorithm for learning multiple essential graphs. 
For frequentists' approaches, \citet{liu2019joint} proposed the MPenPC method,  a two-stage approach based on the PC-stable algorithm,   
\citet{chen2021multi} proposed an iterative constrained optimization algorithm for calculating an $\ell^1 / \ell^2$-regularized maximum likelihood estimator, 
\citet{wang2020high} extended the well-known greedy equivalence search (GES) algorithm of~\citet{chickering2002optimal} to the case of multiple DAGs, and \citet{ghoshal2019direct} offered an algorithm that learns the difference between DAGs efficiently but seems only applicable to the case $K= 2$.  
The method we will propose in this section is motivated by the observation that once the order of variables is given, the IBSS algorithm for multi-task variable selection can be applied to quickly learn multiple DAG models simultaneously. 
Hence, all we need is just to combine IBSS with an MCMC sampler that traverses the order space. Compared with frequentists' methods, our algorithm can quantify the learning uncertainty since the estimators are averaged over the posterior distribution. 

Consider learning the DAG model for a single data set first. 
Let $\cG = (V, E)$ be a DAG with vertices $V = \{1, \cdots, p\}$ and set of directed edges $E\subset V\times V$.  
Let $|\cG|$ denote the cardinality of the edge set $E$. 
Let $\A \in \bbR^{p\times p}$ be the weighted adjacency matrix of the DAG $\cG$ such that $B_{ij}\neq 0$ if and only if $(i, j)\in E$.  
Suppose that the observed data matrix, denoted by $\bX \in \bbR^{n \times p}$, is generated by the following linear structural equation model (SEM),
\begin{equation}\label{eq:SEM}
\bX_j =  \sum_{i = 1}^p B_{i j} \bX_i + \be_j, \quad \text{ for } j = 1, \dots, p. 
\end{equation}
where $\bX_j$ denotes the $j$-th column of $\bX$, and for each $j$, the error vector $\be_j$ independently follows $\cN_n (0, \sigma_j^2 \bI )$. 
That is, each row of $\bX$ is an i.i.d. copy of a random vector $\sX = (\sX_1, \dots, \sX_p)$, whose distribution is given by  $\sX = \A^{\T} \sX + \be$ with $\be \sim\cN_p\left(0, \text{diag}(\sigma_1^2, \cdots, \sigma_p^2)\right)$. 

Since $\cG$ is acyclic, there exists at least one permutation (i.e., order) $\prec \, \in \bbS_p$ such that $B_{ij} = 0$ for any $j \prec i$ (i.e., $j$ precedes $i$ in the permutation $\prec$),  where $\bbS_p$ is the symmetric group of order $p$.  Hence, if the rows and columns of $\A$ are permuted according to $\prec$, the resulting matrix is strictly upper triangular. To determine which entries in $\A$ are not zero, we can convert this problem to $p$ variable selection problems.  
If we know that the DAG is consistent with the order $\prec$, for each $j$, we only need to identify the parent nodes for $j$ from the set $\{i \in [p] \colon  i \prec j \}$, which can be seen as a variable selection problem with response variable $\sX_j$ and candidate explanatory variables $\{ \sX_i \colon i \prec j  \}$. 
Combining  the results for all  $p$ variable selection problems, we get an estimate for the DAG model underlying the distribution of $\sX$. 
Unfortunately, the true order of nodes is usually unknown in practice and needs to be learned from the data. Since the order space $\bbS^p$ has cardinality $p!$, searching over $\bbS^p$ can be very time consuming, which is one major challenge in structure learning. To overcome this, various order MCMC methods have been proposed in the literature for efficiently generating samples from posterior distributions defined on  $\bbS^p$~\citep{koller2009probabilistic, kuipers2017partition, agrawal2018minimal, kuipers2022efficient}.

Next, consider the joint learning of multiple DAG models from  $K$ data sets, one for each data set. 
This problem, which henceforth is referred to as differential DAG analysis, is motivated by differential gene regulatory network (GRN) analysis in biology, where we may have gene data for samples from different tissues, developmental phases or case-control studies, and the goal is to see how the GRN changes across different samples~\citep{li2020bayesian}. 
Since the advent of the single-cell technology, differential GRN analysis has become increasingly important~\citep{fiers2018mapping, van2020scalable}.  
As in the multi-task variable selection problem, we assume the same $p$ covariates are observed in $K$ data sets, and use $\bX^{(k)} \in \bbR^{n_k \times p}$ to denote the data matrix for the $k$-th data set with sample size $n_k$. 
Denote the $K$ DAGs we want to learn by $( \cG^{(k)} = (V, E^{(k)}) )_{k = 1}^K$, which share the same node set $V = [p]$ and, a priori, are believed to share a large proportion of common edges. 
We further assume that  $\cG^{(1)}, \dots, \cG^{(K)}$ are ``permutation compatible,''  which means that for any $i \neq j$, if $(i,j)\in E^{(k)}$ for some $k \in [K]$, then $(j, i) \notin E^{(k')}$ for any $k' \in [K]$. In other words, we assume there exists a order shared by all the $K$ DAGs.  
This assumption has been widely used in the literature~\citep{liu2019joint, chen2021multi, lee2022bayesian}, and is very reasonable  for problems such as GRN analysis, where an edge may occur only in some data sets but generally does not change direction across data sets.  
Observe that if the  order $\prec$ is known, learning $K$ DAGs can be converted to $p$ multi-task variable selection problems. One just needs to repeatedly apply the IBSS algorithm we have proposed to select the parent nodes for each $j \in [p]$. Denote the resulting $K$ DAGs by $(\cG_\prec^{(k)} )_{k=1}^K$. 
We are interested in the case where the ordering is unknown. To average over the order space, we follow the existing order MCMC works to devise a Metropolis-Hastings algorithm on $\bbS^p$, which we describe in detail in the next subsection. 

\subsection{An Order MCMC Sampler for Differential DAG Analysis} \label{sec:order.mcmc}
We propose to consider the following Gibbs posterior distribution~\citep{jiang2008gibbs},  
\begin{equation}\label{eq:post-order}
P(\prec | (\cD^{(k)})_{k = 1}^K) \propto P( (\cG_\prec^{(k)} )_{k=1}^K | \prec)  \prod_{k = 1}^K \hat{P}(\cD^{(k)} |   \cG_\prec^{(k)})  , \quad \forall \, \prec \, \in \bbS^p, 
\end{equation}
where   $( \cG_\prec^{(k)} )_{k=1}^K$ denotes the DAGs we obtain by applying the IBSS algorithm  with ordering $\prec$. 
The product term in \eqref{eq:post-order} denotes the ``estimated'' likelihood function, which gives the estimated probability of observing the data given that $\cG_\prec^{(k)}$ is the underlying DAG model for the $k$-th data set. 
Denote by $\J^{\prec}_j = \{i \in[p] \colon i \prec j\}$ the index set of variables preceding $\sX_j$ in the order $\prec$.  Let 
\begin{equation}\label{eq:ibss2gm}
    \left(\widehat \sigma^2_{j, \prec},  (\bsalpha^{(l)}_{j, \prec})_{l = 1}^L,  ( \widehat \bsbeta^{(k, l)}_{j, \prec} )_{l \in [L], k \in [K]} \right) \leftarrow \text{IBSS}\left(  
    ( \{ \bX_i^{(k)} \colon i \in \J^{\prec}_j \} )_{k=1}^K, (\bX_j^{(k)})_{k=1}^K, L 
    \right)
\end{equation}
denote the output of Algorithm~\ref{alg:IBSS} for the multi-task variable selection problem with response vector $\bX_j$ and covariates $\{ \bX_i \colon i \in \J^{\prec}_j \}$. 
As in~\eqref{eq:post-reg-coef}, let $\widehat \bsbeta^{(k)}_{j, \prec} = \sum_{l=1}^L \widehat \bsbeta^{(k, l)}_{j, \prec}$ denote the posterior mean aggregated over $L$ single effects. 
Then, we can estimate the  likelihood of the DAGs $( \cG_\prec^{(k)} )_{k=1}^K$ by plugging in the estimates $( \widehat \bsbeta^{(k)}_{j, \prec} )_{k \in [K], j \in [p]}$ and $( \widehat \sigma^2_{j, \prec} )_{j \in [p]}$, which yields 
\begin{equation}\label{eq:likelihood}
\prod_{k = 1}^K \hat{P}(\cD^{(k)} | \cG_\prec^{(k)} ) = \prod_{k = 1}^K \prod_{j = 1}^p \prod_{i = 1}^{n_k} \Phi\left(\frac{X_{i j}^{(k)} -  \bX^{(k)}_{i,  \J^\prec_j } \, 
\widehat \bsbeta^{(k)}_{j, \prec}}{\widehat \sigma_{j, \prec} }\right),
\end{equation}
where $\Phi(x)$ is the density function for the standard normal distribution and $\bX^{(k)}_{i,  \J^\prec_j } $ denotes the row vector with entries $\{ \bX^{(k)}_{i  l } \colon l \in \J^\prec_j   \}$. 
The first term $ P( ( \cG_\prec^{(k)} )_{k=1}^K | \prec)$ in \eqref{eq:post-order} is the prior probability of the DAGs $( \cG_\prec^{(k)} )_{k=1}^K$ given order $\prec$, or more generally can be any positive function that penalizes DAGs with more edges.

Analogously to Equation~\eqref{eq:sel-prob}, given $(\bsalpha^{(l)})_{l=1}^L$, we define  $\tilde \alpha_{i, I} = 1 - \prod_{l=1}^L(1 -  \alpha^{(l)}_{i, I} )$, and we let $\tilde \alpha_{i, I}^{j, \prec}$ denote the corresponding quantity when $( \bsalpha^{(l)} )_{l=1}^L = (\bsalpha^{(l)}_{j, \prec})_{l=1}^L$, where $(\bsalpha^{(l)}_{j, \prec})_{l=1}^L$ is defined in~\eqref{eq:ibss2gm}.   
Write $\bsalpha_{j, \prec} = (\bsalpha^{(l)}_{j, \prec})_{l = 1}^L$, and define 
$\ak_k(\bsalpha_{j, \prec}) = \sum_{i \in [p]} \sum_{\{I\colon |I| = k\}}  \tilde \alpha_{i, I}^{j, \prec}$, which gives the estimated number of covariates that are activated in exactly $k$ distinct data sets.    
We define the prior term in~\eqref{eq:post-order} by
\begin{equation}\label{eq:penalty}
P( ( \cG_\prec^{(k)} )_{k=1}^K | \prec) =  \prod_{k = 1}^K \prod_{j = 1}^p p^{-\omega_k  \ak_k(\bsalpha_{j, \prec}) }. 
\end{equation}  
Recall that $\omega_1, \dots, \omega_K$ are the hyperparameters introduced in~\eqref{eq:prior-gamma} for \SSVS{} and can be seen as a reparameterization of $\bspi$ by~\eqref{eq:pi_to_omega}. 
The reasoning behind~\eqref{eq:penalty} is the same as that behind~\eqref{eq:prior-gamma}. 
Combining~\eqref{eq:likelihood} and~\eqref{eq:penalty}, we get a closed-form expression for the posterior defined in~\eqref{eq:post-order}. 
For later use, let $\bR^{(k)}_\prec \in [0, 1]^{p \times p}$ be the matrix such that
\begin{equation}\label{eq:def.bR}
    (\bR^{(k)}_\prec)_{ij} = \ind_{ \{ i\in\J_j^{\prec} \} } \sum_{ I \colon  k\in I } 
   \tilde \alpha_{i, I}^{j, \prec}.   
\end{equation}
That is, $(\bR^{(k)}_\prec)_{ij}$ is the estimated probability of the edge $(i, j)$ being in the $k$-th data set given the order $\prec$.

Given the target posterior distribution defined in~\eqref{eq:post-order}, we are now ready to introduce our Metropolis-Hastings algorithm for differential DAG analysis. 
Given the current state $\prec \, \in S_p$, we propose another state $\prec'$ from some proposal distribution  
$q(\cdot  | \prec)$ and accept it with probability
\begin{equation}\label{eq:MH}
    \min\left\{1, \frac{P(\prec ' | ( \cD^{(k)} )_{k = 1}^K)q(\prec | \prec')}{P(\prec | ( \cD^{(k)} )_{k = 1}^K) q(\prec' | \prec)}\right\}.
\end{equation}
We choose $q(\cdot | \prec)$ to be the uniform distribution on the set of permutations that can be obtained from $\prec$ by an adjacent transposition. That is, we randomly pick $j \in [p - 1]$ with equal probability  and then propose to move from $\prec = (i_1, \cdots, i_j, i_{j+1}, \cdots, i_p)$ to $\prec'  = (i_1, \cdots, i_{j+1}, i_j, \cdots, i_p)$.  
Clearly,  $q(\prec | \prec') = q(\prec' | \prec)$, and thus the proposal ratio in~\eqref{eq:MH} is always equal to $1$. 
Note that to calculate $P(\prec ' |  ( \cD^{(k)} )_{k = 1}^K)$, we need to run IBSS to find the DAGs $( \cG^{(k)}_\prec )_{k=1}^K$. 
Running this Metropolis-Hastings sampler for $T$ iterations (excluding burn-in), we obtain a sequence of permutations denoted by $( \prec_t )_{t=1}^T$. 
For each $\prec_t$, let $\bR^{(k)}_{\prec_t} \in [0, 1]^{p \times p}$ be the matrix defined in~\eqref{eq:def.bR}, and then $( \bR^{(k)}_{\prec_t} )_{t = 1}^T$ can be used  for making posterior inference. 
For example, to estimate the probability of the edge $i \rightarrow j$ being in the $k$-th DAG model, we can simply calculate the time average 
\begin{equation}\label{eq:def.Gamma.ij}
    \hat{R}_{ij}^{(k)} \coloneqq \frac{1}{T} \sum_{t = 1}^T  (\bR^{(k)}_{\prec_t})_{ij}. 
\end{equation} 

\begin{remark}
We do not consider learning Markov equivalent DAGs (i.e., DAGs that encode the same set of conditional independence relations) via order MCMC in this paper, which can be highly challenging due to the order bias~\citep{ellis2008learning}. However, we note that in multi-task settings,  the permutation compatible assumption allows us to learn the true ordering  more efficiently by pooling information from multiple data sets, which can help overcome the issue of Markov equivalence. 
We refer readers to~\citet{castelletti2020bayesian} for an algorithm that directly learns multiple Markov equivalence classes. 
\end{remark}

\section{Simulation Studies for Bayesian Differential DAG Analysis}\label{sec:dags-simu} 
We use simulation studies to investigate the performance of the order MCMC sampler described in Section~\ref{sec:order.mcmc}, which we denote by muSuSiE-DAG, in two scenarios: $K = 2, n_1 = n_2 = 300$, and $K = 5, n_1 = \cdots = n_5 = 240$. We fix the number of nodes $p$ to $100$ for all experiments.  
For each experiment, we generate the data according to the linear SEM~\eqref{eq:SEM} with true order given by $\prec \, = (1, 2, \dots, p)$.  
Hence, the true weighted adjacency matrices of the $K$ DAGs are strictly upper triangular. 
The true DAGs $( \cG^{(k)} )_{k=1}^K$ are then generated as follows. 
First, we generate a random edge set $\cE_{\text{com}}$ consistent with $\prec$ such that each edge in $\cE_{\text{com}}$ is activated in all the $K$ data sets. 
Second, for each $k \in [K]$, we generate an edge set $\cE_{\text{pri}}^{(k)}$ which consists of edges that are only activated in the $k$-th data set. 
Let $N_{\text{com}} = |\cE_{\text{com}}|$ denote the number of edges shared by all the $K$ DAGs and $N_{\text{pri}} = |\cE^{(k)}_{\text{pri}}|$ denote the number of private edges unique to each data set. 
We consider $N_{\text{com}}\in\{50, 100\}$, and $N_{\text{pri}}\in\{20, 50\}$ in the simulation studies. 
To generate the matrix $\A^{(k)}$ corresponding to DAG $\cG^{(k)}$ and the error variances of the $p$ variables, we follow~\cite{wang2020high} to sample the nonzero entries of $\A^{(k)}$ (determined by $\cG^{(k)}$) independently from the uniform distribution on $[-1, -0.1]\cup[0.1, 1]$ and sample the error variance of each variable independently from the uniform distribution on $[1, 2.25]$. Note that for each edge in $\cE_{\text{com}}$, its weights in the $K$ data sets are drawn independently.  

For each simulation setting, we generate $50$ replicates; the true DAG models and the data $(\cD^{(k)})_{k = 1}^K$ are re-sampled for each replicate. 
We compare the performance of six methods: PC algorithm or GES applied independently to each data set~\citep{spirtes2000causation, harris2013pc, chickering2002optimal}, the joint GES algorithm proposed by~\citet{wang2020high} which is a  state-of-the-art method for joint learning multiple DAG models with theoretical guarantees, MPenPC method of \citep{liu2019joint}, JESC method \citep{lee2022bayesian}, and muSuSiE-DAG. 
We implement  PC and GES algorithms using the \texttt{R} package \texttt{pcalg}~\citep{kalisch2012causal}, and MPenPC and JESC  using publicly available code with default parameters.
In the ensuing results, we select parameter values that yield the most robust empirical performance across our experiments. 
For the PC algorithm, we let the significance level used in the conditional independent tests be $0.005$, and for GES and joint GES methods, we let $\lambda = 2$, where $\lambda$ is the $l_0$-penalization parameter (scaled by $\log p$). For the muSuSiE-DAG method, we need to set the penalty parameters $\omega_1, \dots, \omega_K$. For $K = 2$, we use $p^{-\omega_1} = p^{-2} / 2$ and $p^{-\omega_2} = p^{-2.25}$, and the choice for $K = 5$ is given in Appendix~D, supplementary materials. The results for the four methods obtained by using other parameter values are also provided in Appendix~D, supplementary materials.

\begin{table}[h!]
\captionsetup{font=scriptsize}
\scriptsize
    \centering
    \begin{tabular}{|c|ccc|ccc|}
    \hline
       method  &  K & $N_{\text{com}}$ & $N_{\text{pri}}$ & $N_{\text{wrong}}$ & TP & FP \\
       \hline
       PC  & 2& 100 & 20 & 28.29 & 0.7822 & 4e-04\\
       GES & 2 & 100 & 20 & 19.67 & 0.8482 & 3e-04\\
       joint GES & 2 & 100 & 20& 15.4 & 0.9126 & 0.001\\ 
       MPenPC &2 & 100 & 20 & 76.27 & 0.8758 & 0.0126  \\ 
       JESC & 2 & 100 & 20 & 30.85 & 0.9257 & 0.0045  \\ 
       muSuSiE-DAG & 2 & 100 & 20&\textbf{12.91} & 0.9138 & 5e-04 \\
       \hline
       PC & 2 & 100 & 50 & 39.37 & 0.7475 & 3e-04 \\
       GES & 2 & 100 & 50 & 24.84 & 0.8505 & 6e-4\\
      joint GES & 2 & 100 & 50& 24.7& 0.9003 & 0.002 \\ 
        MPenPC & 2 & 100 & 50 & 62.65 & 0.8513 & 0.0083 \\
       JESC & 2 & 100 & 50 & 31.74 & 0.9316 & 0.0044  \\ 
      muSuSiE-DAG & 2 & 100 & 50&\textbf{18.45} & 0.9003 & 7e-04\\
       \hline
       PC & 2 & 50 & 50 & 21.9 & 0.8121 & 6e-04\\
       GES & 2 & 50 & 50 & 15.74 & 0.8514 & 2e-04\\
      joint GES & 2 & 50 & 50& 22.91& 0.883 & 0.0023 \\ 
        MPenPC & 2 & 50 & 50 & 85.64 & 0.9004 & 0.0154 \\
       JESC & 2 & 50 & 50 & 28.68 & 0.9302 & 0.0044 \\
      muSuSiE-DAG & 2 & 50 & 50&\textbf{15.03} & 0.8762 & 5e-04\\
       \hline
    \end{tabular}
    \caption{Simulation results for joint estimation of multiple DAG models with $K = 2$ (averaged over $50$ replicates). }
    \label{tab:sim-DAG-K2}
\end{table}

Table~\ref{tab:sim-DAG-K2} shows the results for $K = 2$, and the results for $K = 5$ are given in Appendix D, supplementary materials.
For each method, we calculate the average number of incorrect edges, denoted by $N_{\text{wrong}}$, the average true positive rate (TP) and the average false positive (FP) rate by ignoring the edge directions.   
As expected,  joint GES and muSuSiE-DAG have significantly larger true positive rates than PC and GES methods, since the former two methods are able to utilize information from all the $K$ data sets to infer common edges, which is particularly useful when an edge has a relatively small signal size in both data sets. 
Meanwhile, the two joint methods tend to have slightly larger false positive rates as well, since an edge with a very large signal size in one data set is likely to be identified by the joint method as existing concurrently in both data sets. 
However, note that the false positive rate of muSuSiE-DAG is still comparable to that of PC and GES and is much smaller than that of joint GES.  
\tcr{Both MPenPC and JESC   have high TP and FP rates, and JESC seems to perform significantly better than MPenPC. 
}
Overall, muSuSiE-DAG has the best performance among all the six methods in all settings, and its advantage is more significant when the ratio $N_{\text{com}}/N_{\text{pri}}$ is larger. The convergence of our order MCMC is discussed in Appendix D.1, supplementary materials.

\section{A Real Data Example for Differential DAG Analysis}\label{sec:real.data}
To evaluate the performance of the proposed muSuSiE-DAG method in real data analysis, we consider a pre-processed gene expression microarray data set used in~\citet{wang2020high}, which consists of two groups of patients with ovarian cancer. The first group has $83$ patients who have enhanced expression of stromal genes that are associated with a lower survival rate. 
The second group has $168$ patients who have ovarian cancer of other subtypes. 
For both groups, we observe the expression levels of $p =76$ genes, which, according to the KEGG database \citep{kanehisa2012kegg}, participate in the apoptotic pathway. 
For more details about the original data set, see~\citet{tothill2008novel}. 
Let $\cG_1$ denote the underlying DAG model for the first group and $\cG_2$ denote that for the second. 
The objective of this real data analysis is to detect the differences between the two DAGs $\cG_1, \cG_2$, which may be associated with the  survival rate.  
As in Section~\ref{sec:dags-simu}, we compare the performance of four methods: PC, GES, joint GES and muSuSiE-DAG. Table \ref{tab:real-data} lists the number of edges detected by each method. 
The results for all four methods obtained by using other parameter values are provided in Appendix E, supplementary materials, where one can also find results obtained by combining PC, GES and joint GES with stability selection \citep{meinshausen2010stability}. 
The results clearly illustrate the differences between the four methods. 
First, the percentage of shared edges in the two estimated DAGs (i.e., the ``ratio'' column in Table \ref{tab:real-data}) is much larger for the two joint methods, which is consistent with both our theory and simulation results. For PC and GES, this ratio is always less than $0.3$ in all parameter settings we have tried; see Tables E.1 and E.2 in Appendix E, supplementary materials. This shows that when the sample size is not large, applying a structure learning method to two data sets separately is very likely to miss some gene-gene interactions existing in both gene regulatory networks. 
Second, joint GES has the largest shared ratio, and it is often much larger than that of muSuSiE-DAG. 
This is probably because joint GES is a two-step procedure where  the first step is to learn a large DAG $G^{\rm{union}}$,  and in the second step $G_1$ and $G_2$ are constructed separately under the constraint that they must be sub-DAGs of $G^{\rm{union}}$. If an edge only exists in one DAG or it exists in both but has very different regression coefficients in the two SEMs, it is not very likely to be included in $G^{\rm{union}}$ and thus cannot be detected in the second step of joint GES. 
Indeed, since $p = 76$ is relatively large and $n_1 = 83$ and $n_2 = 168$, we expect that more edges (especially those with small signal sizes) can be detected in $G_2$ than in $G_1$, which is observed for PC, GES and muSuSiE-DAG.  

\begin{table}
\captionsetup{font=scriptsize}
\scriptsize
\centering
\begin{tabular}{|c|c|ccccc|}
\hline
Method  &  Parameters & $|\cG_{1}|$ & $|\cG_{2}|$ & $|\cG_{1} \cap \cG_{2}|$ & $N_{\text{total}}$ & ratio\\
\hline
PC & $\alpha =  0.005 $& 33 & 60 & 18 & 75 & 0.24 \\
GES & $\lambda =  2 $& 99 & 148 & 43 & 204 & 0.2108 \\
joint GES &$\lambda =  2 $& 78 & 78 & 72 & 84 & 0.8571 \\
muSuSiE-DAG & $p^{-\omega_1} = p^{-1.5} / 2, \, p^{-\omega_2} = p^{-2}$& 36 & 94 & 35 & 95 & 0.3684 \\
\hline
\end{tabular}
\caption{Results for the real data analysis. 
$|\cG_k|$: number of edges in the estimated DAG for the $k$-th group; $|\cG_{1} \cap \cG_{2}|$: number of edges shared by both DAGs; $N_{\text{total}}$: total number of edges in two DAGs; ratio: the ratio of $|\cG_{1} \cap \cG_{2}|$ to $N_{\text{total}}$.}
    \label{tab:real-data}
\end{table}

\section{Concluding Remarks}\label{sec:conclusion}
In this paper, we study the Bayesian multi-task variable selection problem and prove a high-dimensional strong selection consistency result for the multi-task spike-and-slab variable selection (\SSVS{}) model we propose. 
By extending the SuSiE model of~\citet{wang2020simple} to multiple data sets,  
we show that \SSVS{} can be approximated by a model we call muSuSiE, which further enables us  to propose a variational Bayes algorithm, IBSS, for efficiently approximating the posterior distribution of \SSVS{}.  Simulation results show that, compared with performing variable selection separately for multiple data sets, the proposed method can achieve a significantly larger sensitivity at the cost of a slightly decreased precision.  
Next, we consider the problem of learning multiple DAG models. 
Observing that  we can quickly learn multiple DAGs simultaneously using IBSS given the order of the variables, we propose an efficient order MCMC sampler targeting a Gibbs posterior distribution on the order space.  Both simulation results and real data analysis show that the proposed algorithm is able to identify substantially more edges shared across the data sets while still controlling the false positive rate. 
 
This work also opens up some interesting problems for future research.  
First, we build the strong selection consistency for the \SSVS{} model while the variational algorithm we propose is based on the muSuSiE model. 
It would be interesting to investigate whether we can establish high-dimensional consistency results directly for the SuSiE or muSuSiE model under some mild conditions, which would serve as a more powerful theoretical guarantee for variational Bayesian variable selection. 
Second,  one can extend the posterior consistency result for the \SSVS{} model to multi-task structure learning, but this probably requires assuming some restrictive conditions such as strong faithfulness~\citep{nandy2018high}. 
Last,  the proposed algorithm for learning multiple DAGs can be seen as a combination of the IBSS algorithm and a vanilla Metropolis-Hastings algorithm on the order space. Hence, more advanced MCMC sampling techniques (e.g. parallel tempering) can be used to further accelerate the mixing of the sampler.  

\section*{Acknowledgements}
We thank Yuhao Wang for sharing with us the code for the joint GES method and the  pre-processed real data set. QZ was supported in part by NSF  grant DMS-2245591. 

\bibliographystyle{plainnat}
\bibliography{reference.bib}

\newpage
\appendix
\noindent\textbf{\LARGE Appendices}
\renewcommand\thesection{\Alph{section}}
\numberwithin{equation}{section}
\numberwithin{table}{section}
\numberwithin{figure}{section}

\section{Proof of Posterior Consistency for Bayesian Multi-task Variable Selection}\label{appx:proof-posterior-concen}

Before going into the proof details, we review our notation for the  multi-task variable selection problem in
Table~\ref{tab:notation-bvs}. 

\begin{longtable}{|p{0.15\textwidth} p{0.8\textwidth}|}
\hline
Notation & Definition\\
\hline
$[k]$ & $[k] = \{1, 2, \cdots, k\}$\\
$2^{[k]}$ & power set on $[k]$, i.e., $2^{[k]} = \{ S \colon S \subseteq [k] \}$\\
$|S|$ & cardinality of set $S$ \\ 
$K$ & number of data sets\\
$p$ & number of covariates in each data set\\
$n_k$ & sample size of the $k$-th data set\\ 
$\by^{(k)}$ & response vector for the $k$-th data set with dimension $n_k$ \\
$\bX^{(k)}$ &  design matrix for the $k$-th data set with dimension $n_k \times p$ \\
$\bsbeta^{(k)}$ & vector of regression coefficients for the $k$-th data set\\
$\bX^{(k)}_{S}$ & submatrix of $\bX^{(k)}$ containing columns indexed by $S$\\
$\bsPhi_{S}^{(k)}$ &  $  \bX_{S}^{(k)}\bigl((\bX_{S}^{(k)})^\T\bX_{S}^{(k)}\bigr)^{-1}(\bX_{S}^{(k)})^\T$ \\ 
$L$ & maximum number of activated covariates\\  
$\sigma^2$ & error variance \\
$\bsgamma$ & the set-valued vector such that $\gamma_j = I$ means that the $j$-th covariate is activated in the data sets indexed by the set $I \subseteq [K]$ \\
$|\bsgamma|$ & $\sum_{j=1}^p \ind_{\{\gamma_j \neq \emptyset \}}$, i.e., number of covariates activated in at least one data set \\
$a_k(\bsgamma)$ & number of covariates activated in $k$ distinct data sets according to $\bsgamma$ \\
$(\omega_k)_{k=1}^K$ & hyperparameter for the prior distribution on $\bsgamma$ \\ 
$\tau_j^{(k)}$ & prior variance of $\beta_j^{(k)}$ if it is activated\\
$\beta_j^{(k)*}$ & true vector of regression coefficients\\ 
$C_{\beta, k}$ & detection threshold for a covariate activated in $k$ distinct data sets\\
$m_j^*$ & $ \max\{m\in[K]\colon |\{k\in[K]\colon \bigl(\beta_j^{(k)*}\bigr)^2 \geq C_{\beta, m}\}| = m\}$ \\ 
$\bsgamma^*$ & true model defined by $\gamma_j^* = \{k\in[K]\colon \bigl(\beta_j^{(k)*}\bigr)^2 \geq C_{\beta, m_j^*}\}$\\
$S_k(\bsgamma)$ & $ \{j\in[p]\colon k \in \gamma_j\}$ \\
$S_k^*$ & $S_k(\bsgamma^*)$, i.e., set of influential covariates in the $k$-th data set \\
$B_1, B_2, B_3$ &  constants in high-dimensional assumptions \\
$ \consta, \constb, \nu$ & constants in high-dimensional assumptions \\
\hline
\caption{Notation for Bayesian multi-task variable selection} \label{tab:notation-bvs}
\end{longtable}

\subsection{Posterior Calculation}
By~\eqref{eq:linear-regression} and~\eqref{eq:cond-prior-beta}, we find that, after integrating out $(\bsbeta^{(k)})_{k \in [K]}$, the marginal likelihood for a model $\bsgamma$ is 
\begin{align*}
P\bigl( (\by^{(k)})_{k \in [K]} \mid \bsgamma\bigr) \propto & \prod_{k = 1}^K  \left\lvert\bsSigma_{0(k)}^{-1}\right\lvert^{1/2}\left\lvert\frac{(\bX_{S_k(\bsgamma)}^{(k)})^{\T}\bX_{S_k(\bsgamma)}^{(k)}}{\sigma^2} + \bsSigma_{0(k)}^{-1}\right\lvert^{-1/2}\\
\times \exp\Biggl\{-\frac{1}{2\sigma^2}&\left[(\by^{(k)})^{\T}\left(\bI_n - \bX_{S_k(\bsgamma)}^{(k)}\left((\bX_{S_k(\bsgamma)}^{(k)})^{\T}\bX_{S_k(\bsgamma)}^{(k)} + \sigma^2\bsSigma_{0(k)}^{-1}\right)^{-1}(\bX_{S_k(\bsgamma)}^{(k)})^{\T}\right)\by^{(k)}\right]\Biggr\},
\end{align*}
where $\bsSigma_{0(k)} = \tau \bI_{|S_k(\bsgamma)|}$. Denote
\begin{align*}
R_{S_k(\bsgamma)}^{(k)} &= (\by^{(k)})^{\T}\left(\bI_n - \bX_{S_k(\bsgamma)}^{(k)}\left((\bX_{S_k(\bsgamma)}^{(k)})^{\T}\bX_{S_k(\bsgamma)}^{(k)} + \sigma^2\bsSigma_{0(k)}^{-1}\right)^{-1}(\bX_{S_k(\bsgamma)}^{(k)})^{\T}\right)\by^{(k)},\\
R_{S_k(\bsgamma)}^{(k)*} &= (\by^{(k)})^{\T}\left(\bI - \bX_{S_k(\bsgamma)}^{(k)}\left(\bX_{S_k(\bsgamma)}^{(k)})^{\T}\bX_{S_k(\bsgamma)}^{(k)}\right)^{-1}(\bX_{S_k(\bsgamma)}^{(k)})^{\T}\right)\by^{(k)}\\
&= (\by^{(k)})^{\T}(\bI_n - \bsPhi_{S_k(\bsgamma)}^{(k)})\by^{(k)}.
\end{align*}
To simplify the notation, from now on we will omit superscript $(k)$ whenever the statement applies to all $k = 1, \dots, K$.  
For example, when we write $R_{S(\bsgamma)}$, it means $R^{(k)}_{S_k(\bsgamma)}$ for any $k\in[K]$. It is easy to check that we always have $R_{S(\bsgamma)}^* \leq R_{S(\bsgamma)}$. Indeed, letting $\bX_{S(\bsgamma)} = \bU_{n\times |S(\bsgamma)|}\bsLambda_{|S(\bsgamma)|\times |S(\bsgamma)|}\bV^{\T}_{|S(\bsgamma)|\times |S(\bsgamma)|}$ be the singular value decomposition of $\bX_{S(\bsgamma)}$,  we have 
\begin{align*}
    R_{S(\bsgamma)}^* =& \|\by\|_2^2 - \by^{\T} \bU \bU^{\T} \by, \\ 
    R_{S(\bsgamma)} =& \|\by\|_2^2 - \by^{\T} \bU\bsLambda(\bsLambda^2 + \sigma^2\bsSigma_{0}^{-1})^{-1}\bsLambda \bU^{\T} \by. 
\end{align*}
Observe that $\bsLambda(\bsLambda^2 + \sigma^2\bsSigma_{0}^{-1})^{-1}\bsLambda$ is a diagonal matrix with all diagonal entries being in $[0, 1]$. Hence, $R_{S(\bsgamma)}^* \leq R_{S(\bsgamma)}$.  
Let $D_{S(\bsgamma)}$ denote the determinant term, 
\[D_{S(\bsgamma)} = \left\lvert\bsSigma_{0}^{-1}\right\lvert^{1/2}\left\lvert\frac{\bX_{S(\bsgamma)}^{\T}\bX_{S(\bsgamma)}}{\sigma^2} + \bsSigma_{0}^{-1}\right\lvert^{-1/2} = \left\lvert\bI_{|S(\bsgamma)|} + \widetilde{\vb} \bX_{S(\bsgamma)}^{\T}\bX_{S(\bsgamma)}\right\lvert^{-1/2},\]
where the second equation follows from $\widetilde{\vb} = \vb / \sigma^2$ and our assumption that $\vb_j^{(k)} = \vb$ for $k\in[K]$ and $j\in[p]$. 
Using~\eqref{eq:prior-gamma}, we find that the posterior probability of $\bsgamma$ is
\[\post(\bsgamma\mid \fulldata) \propto \ind_{\{|\bsgamma | \leq L|\}}f(|\bsgamma|, L) \prod_{k = 1}^K \left\{ D_{S_k(\bsgamma)}\exp\left(-\frac{R_{S_k(\bsgamma)}^{(k)}}{2\sigma^2}\right)   p^{-\omega_k  a_k(\bsgamma)
} \right\}. \]

\subsection{Preliminary for Proof of Posterior Consistency}
In this section we prove lemmas that will be needed in the posterior consistency proof later. Recall that the superscript $(k)$ is dropped for ease of notation. Recall $\widetilde{\vb} = \vb / \sigma^2$. 
\begin{lemma}\label{lemma: det}
Under Conditions \ref{cond2.1} and \ref{cond2.2}, for any $\br, \bs \subseteq [p]$ the following hold. 
\begin{enumerate}
    \item If $\br \subset \bs$, we have
    \[\frac{D_{\br}}{D_{\bs}} \leq (1 + n\widetilde{\vb})^{|\bs\setminus \br| / 2}.\]
    \item If $\bs \subseteq \br$, we have
    \[\frac{D_{\br}}{D_{\bs}} \leq 1.\]
\end{enumerate}
\end{lemma}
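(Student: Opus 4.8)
The plan is to work throughout with the symmetric positive definite matrix $M_{\br} := \bI_{|\br|} + \widetilde\sigma_0^2 \bX_{\br}^\T \bX_{\br}$, so that $D_{\br} = |M_{\br}|^{-1/2}$ and the quantity to be controlled is $D_{\br}/D_{\bs} = (|M_{\bs}|/|M_{\br}|)^{1/2}$. Both statements then reduce to comparing $|M_{\br}|$ and $|M_{\bs}|$ when one active set contains the other. The two tools I would lean on repeatedly are Sylvester's determinant identity $|\bI_{|\br|} + \widetilde\sigma_0^2 \bX_{\br}^\T\bX_{\br}| = |\bI_n + \widetilde\sigma_0^2 \bX_{\br}\bX_{\br}^\T|$, which lets me pass freely between an $|\br|\times|\br|$ and an $n\times n$ Gram matrix, and the Loewner monotonicity of the determinant on positive definite matrices (if $A \succeq B \succ 0$ then $|A| \ge |B|$, since $B^{-1/2}AB^{-1/2} \succeq \bI$).

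For the second statement ($S_{\bs}\subset S_{\br}$) I would pass to the $n\times n$ Gram form. Since $\bX_{\br}\bX_{\br}^\T = \sum_{j\in S_{\br}} X_j X_j^\T$ and $S_{\bs}\subset S_{\br}$, the extra columns contribute a sum of rank-one positive semidefinite terms, so $\bI_n + \widetilde\sigma_0^2 \bX_{\br}\bX_{\br}^\T \succeq \bI_n + \widetilde\sigma_0^2\bX_{\bs}\bX_{\bs}^\T \succ 0$. Loewner monotonicity then gives $|M_{\br}| \ge |M_{\bs}|$, hence $D_{\br}/D_{\bs}\le 1$. Notice this direction uses only positivity of $\widetilde\sigma_0^2$ and needs neither Condition~\ref{cond2.1} nor Condition~\ref{cond2.2}.

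For the first statement ($S_{\br}\subset S_{\bs}$) I would reorder columns so that $\bX_{\bs} = [\bX_{\br}, \bX_{\bs\setminus\br}]$ and block-partition $M_{\bs}$ with top-left block $M_{\br}$. The Schur complement identity then yields $|M_{\bs}| = |M_{\br}|\,|\mathcal{S}|$, where $\mathcal{S} = \bI + \widetilde\sigma_0^2 \bX_{\bs\setminus\br}^\T\bX_{\bs\setminus\br} - \widetilde\sigma_0^4 \bX_{\bs\setminus\br}^\T \bX_{\br} M_{\br}^{-1}\bX_{\br}^\T \bX_{\bs\setminus\br}$ is positive definite (being a Schur complement of a PD matrix, using that $M_{\br}\succeq\bI$ is invertible). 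Dropping the subtracted positive semidefinite term and invoking Loewner monotonicity gives $|\mathcal{S}| \le |\bI + \widetilde\sigma_0^2\bX_{\bs\setminus\br}^\T\bX_{\bs\setminus\br}|$. Writing $\mu_1,\dots,\mu_m$ (with $m = |\bs\setminus\br|$) for the eigenvalues of $\bX_{\bs\setminus\br}^\T\bX_{\bs\setminus\br}$, Condition~\ref{cond2.1} supplies the exact trace identity $\sum_i \mu_i = \trace(\bX_{\bs\setminus\br}^\T\bX_{\bs\setminus\br}) = \sum_{j\in S_{\bs\setminus\br}}\|X_j\|_2^2 = nm$. Hence by AM--GM, $\prod_{i=1}^m (1 + \widetilde\sigma_0^2\mu_i) \le (1 + \widetilde\sigma_0^2 \textstyle\sum_i\mu_i / m)^m = (1 + n\widetilde\sigma_0^2)^m$, so $|M_{\bs}|/|M_{\br}| \le (1+n\widetilde\sigma_0^2)^{|\bs\setminus\br|}$ and the claimed bound follows upon taking the square root.

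The determinant algebra is routine; the one step that must be handled with care is the final eigenvalue bound, since a crude estimate (each $\mu_i \le nm$) would only give $(1+nm\widetilde\sigma_0^2)^m$ and miss the stated constant. It is precisely the AM--GM step, combined with the \emph{exact} trace identity furnished by the column-normalization Condition~\ref{cond2.1}, that produces the sharp exponent in $(1+n\widetilde\sigma_0^2)^{|\bs\setminus\br|/2}$, and this is the only genuinely delicate point in the argument.
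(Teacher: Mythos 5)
Your proof is correct, and at its core it is the same argument as the paper's, packaged through a different decomposition. For part 1, the paper stays in the $n\times n$ sample space: it writes $D_{\br}^2/D_{\bs}^2 = \bigl\lvert \bI + (\bI + \widetilde\sigma_0^2\bX_{\br}\bX_{\br}^{\T})^{-1}\widetilde\sigma_0^2\bX_{\bs\setminus\br}\bX_{\bs\setminus\br}^{\T}\bigr\rvert$ and applies Sylvester's determinant theorem to convert this to $\bigl\lvert \bI + \widetilde\sigma_0^2\bX_{\bs\setminus\br}^{\T}(\bI + \widetilde\sigma_0^2\bX_{\br}\bX_{\br}^{\T})^{-1}\bX_{\bs\setminus\br}\bigr\rvert$, which it then bounds by replacing the inner inverse with $\bI$. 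Your route instead stays in the $|\bs|\times|\bs|$ coefficient space and uses the block-determinant/Schur-complement identity; but by the Woodbury identity, $(\bI_n + \widetilde\sigma_0^2\bX_{\br}\bX_{\br}^{\T})^{-1} = \bI_n - \widetilde\sigma_0^2\bX_{\br}(\bI + \widetilde\sigma_0^2\bX_{\br}^{\T}\bX_{\br})^{-1}\bX_{\br}^{\T}$, so your Schur complement $\mathcal{S}$ is exactly the paper's intermediate matrix, and your step of dropping the PSD correction term is exactly the paper's step of bounding the inverse by $\bI$. Both proofs then finish with the identical trace-plus-AM--GM argument, and your closing remark is on target: Condition~\ref{cond2.1} enters only there, and it is precisely the exact trace identity that yields the constant $(1+n\widetilde\sigma_0^2)^{|\bs\setminus\br|/2}$ rather than something cruder. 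For part 2, your Loewner-monotonicity argument on the Gram form is the same fact the paper invokes (stated there as: if $\bA$, $\bB$, $\bA-\bB$ are all positive definite then $|\bA| > |\bB|$), and your observation that this direction uses neither Condition~\ref{cond2.1} nor Condition~\ref{cond2.2} is accurate --- indeed, neither part of the paper's proof uses Condition~\ref{cond2.2} at all. The trade-off between the two presentations is purely stylistic: the paper's Sylvester toggling avoids block-matrix algebra, while your Schur-complement version avoids switching between coefficient space and sample space.
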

\begin{proof}
For the first case, we have
\begin{align*}
\frac{D_{\br}^2}{D_{\bs}^2} &= \left\lvert\left(\bI + \widetilde{\vb}\bX_{\br}\bX_{\br}^{\T}\right)^{-1}\left(\bI + \widetilde{\vb}\bX_{\br}\bX_{\br}^{\T} + \widetilde{\vb}\bX_{\bs\setminus \br}\bX_{\bs\setminus \br}^{\T}\right)\right\lvert\\
&= \left\lvert\bI + \left(\bI + \widetilde{\vb}\bX_{\br}\bX_{\br}^{\T}\right)^{-1}\left(\widetilde{\vb}\bX_{\bs\setminus \br}\bX^{\T}_{\bs \setminus \br}\right) \right\lvert\\
&= \left\lvert\bI + \widetilde{\vb} \bX_{\bs \setminus \br}^{\T}\left(\bI + \widetilde{\vb}\bX_{\br}\bX_{\br}^{\T}\right)^{-1}\bX_{\bs \setminus \br} \right\lvert\\
&\leq \left\lvert\bI + \widetilde{\vb}\bX_{\bs \setminus \br}^{\T}\bX_{\bs \setminus \br}\right\lvert, 
\end{align*}
where the third equation follows from Sylvester's determinant theorem and the last inequality  follows from the fact that if $\bA, \bB, \bA - \bB$ are all positive definite, then $|\bA| > |\bB|$. 
Let $\lambda_i(\bA)$ denote the $i$-th eigenvalue of the matrix $\bA$. Recall that 
\begin{equation}\label{eq:det}
|\bI + \widetilde{\vb}\bX_{\bs \setminus \br}^{\T}\bX_{\bs \setminus \br}| = \prod_{i = 1}^{|T \setminus S|}\lambda_i(\bI + \widetilde{\vb}\bX_{\bs \setminus \br}^{\T}\bX_{\bs \setminus \br}). 
\end{equation} 
By Condition \ref{cond2.1},
\[\sum_{i = 1}^{|T \setminus S|} \lambda_i(\bI + \widetilde{\vb}\bX_{\bs \setminus \br}^{\T}\bX_{\bs \setminus \br}) = \text{Trace}(\bI + \widetilde{\vb}\bX_{\bs \setminus \br}^{\T}\bX_{\bs \setminus \br}) = |\bs \setminus \br|(1 + n\widetilde{\vb}).\]
By the inequality of geometric and arithmetic means, this shows that \eqref{eq:det} is bounded from above by $ (1 + n\widetilde{\vb})^{|\bs\setminus \br|  }$. This yields the first bound given in the lemma. 
 
For the second case, we have
\begin{align*}
\frac{D_{\br}^2}{D_{\bs}^2} &= \left\lvert\left(\bI + \widetilde{\vb}\bX_{\br}\bX_{\br}^{\T}\right)^{-1}\left(\bI + \widetilde{\vb}\bX_{\br}\bX_{\br}^{\T} - \widetilde{\vb}\bX_{\br\setminus \bs}\bX_{\br\setminus \bs}^{\T}\right)\right\lvert\\
&= \left\lvert\bI  -\left(\bI + \widetilde{\vb}\bX_{\br}\bX_{\br}^{\T}\right)^{-1}\left(\widetilde{\vb}\bX_{\br\setminus \bs}\bX^{\T}_{\br \setminus \bs}\right) \right\lvert\\
&\leq 1.
\end{align*}
The proof is complete. 
\end{proof}

The next lemma bounds the difference between $R_{\br}$ and $R_{\br}^*$.

\begin{lemma}\label{lm:high-prob}
Under Conditions \ref{cond1.1}, \ref{cond2.1} and \ref{cond4}, we have
\begin{enumerate}
\item $\P[\frac{1}{2}  n\sigma^2\leq \|\be\|_2^2 \leq \frac{3}{2} n\sigma^2] \geq 1 - 2p^{-1}$.
\item $\P[\|\by\|_2^2 \leq 3n\sigma^2 \L_1\log p] \geq 1 - 2p^{-1}$.
\item $\P[R_{\br} - R_{\br}^* \leq  3\L_1 \sigma^2\log p  / (\nu\widetilde{\vb})] \geq 1- 2p^{-1}$ for any index set $\br$.
\end{enumerate}
\end{lemma}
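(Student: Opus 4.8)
The three claims all concern the single Gaussian error vector $\be \sim \cN_n(0, \sigma^2 \bI_n)$ and the response $\by = \bX\bsbeta^* + \be$, so the plan is to establish them in order, with parts 2 and 3 built on the high-probability event of part 1. For part 1, I would observe that $\|\be\|_2^2 / \sigma^2 \sim \chi^2_n$ and apply the standard chi-square tail bounds of Laurent and Massart: for $W \sim \chi^2_n$ and $x > 0$, $\P(W - n \geq 2\sqrt{nx} + 2x) \leq e^{-x}$ and $\P(n - W \geq 2\sqrt{nx}) \leq e^{-x}$. Taking $x = \log p$ and invoking Condition~\ref{cond4}, which forces $\log p \leq n/25$, one checks that $2\sqrt{n\log p} + 2\log p \leq \tfrac{12}{25} n < \tfrac{n}{2}$ and $2\sqrt{n\log p} \leq \tfrac{2}{5} n < \tfrac{n}{2}$, so both deviation events fall outside $[\tfrac{n}{2}, \tfrac{3n}{2}]$; a union bound then yields the claimed $1 - 2p^{-1}$.

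For part 2, I would bypass the Gaussian cross term entirely and use the crude splitting $\|\by\|_2^2 = \|\bX\bsbeta^* + \be\|_2^2 \leq 2\|\bX\bsbeta^*\|_2^2 + 2\|\be\|_2^2$. Condition~\ref{cond1.1} controls the first term by $2 n \sigma^2 \L_1 \log p$, and on the upper-tail event from part 1 the second term is at most $3 n \sigma^2$. Since $\L_1 \geq 1$ and $\log p \geq 3$ for $p$ large, we have $3 n \sigma^2 \leq n \sigma^2 \L_1 \log p$, so the sum is at most $3 n \sigma^2 \L_1 \log p$ on that event, as required.

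For part 3, I would reuse the singular-value representation already derived in the main text: writing $\bX_{\br} = \bU \bsLambda \bV^\T$, one has $R_{\br} - R_{\br}^* = \by^\T \bU \, \diag\!\big( (1 + \widetilde\sigma_0^2 \lambda_i^2)^{-1} \big) \bU^\T \by$, where the $\lambda_i$ are the singular values of $\bX_{\br}$. By Condition~\ref{cond2.2}, for $|\br| \leq L$ every $\lambda_i^2 \geq n \nu$, so each diagonal entry is at most $(\widetilde\sigma_0^2 n \nu)^{-1}$; since $\bU$ has orthonormal columns, $R_{\br} - R_{\br}^* \leq \|\by\|_2^2 / (\widetilde\sigma_0^2 n \nu)$. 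As this bound depends on $\br$ only through the universal constant, applying the event of part 2 simultaneously over all admissible $\br$ gives $R_{\br} - R_{\br}^* \leq 3 \L_1 \sigma^2 \log p / (\nu \widetilde\sigma_0^2)$ on a single event of probability $\geq 1 - 2p^{-1}$.

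The only genuinely technical step is part 1, and even there the work reduces to checking that the Laurent--Massart deviations are absorbed by the slack $n/2$, which Condition~\ref{cond4} supplies. The one subtlety worth flagging in part 3 is that the smallest-eigenvalue bound in Condition~\ref{cond2.2} holds only for $|\br| \leq L$, so the phrase ``for any index $r$'' should be read as ranging over models in the support of the posterior; this is harmless, since every model entering the consistency argument satisfies $|\br| \leq L$.
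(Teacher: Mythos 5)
Your proof is correct, and its overall structure coincides with the paper's: part 1 via a chi-square tail bound absorbed by Condition~\ref{cond4}, part 2 via the splitting $\|\by\|_2^2 \le 2\|\bX\bsbeta^*\|_2^2 + 2\|\be\|_2^2$ together with Condition~\ref{cond1.1}, and part 3 by showing the matrix of the quadratic form $R_{\br}-R_{\br}^*$ has operator norm at most $(n\nu\widetilde\sigma_0^2)^{-1}$ and then invoking part 2. Where you genuinely differ is the algebra in part 3: the paper applies the Sherman--Morrison--Woodbury identity to $R_{\br}-R_{\br}^*$ and then bounds $\lambda_{\max}\bigl(n(\bX_{\br}^{\T}\bX_{\br})^{-1}\bigr) \le 1/\nu$ by Condition~\ref{cond2.2}, whereas you diagonalize using the SVD representation the paper had already displayed to prove $R_{\br}^* \le R_{\br}$, reading off diagonal entries $(1+\widetilde\sigma_0^2\lambda_i^2)^{-1} \le (n\nu\widetilde\sigma_0^2)^{-1}$; the two computations are equivalent, but yours recycles an identity already on the page and avoids a fresh appeal to Woodbury, which is a small economy. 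Two of your caveats deserve endorsement rather than correction: (i) the paper's own proof of part 3 also uses Condition~\ref{cond2.2}, and hence the restriction $|\br| \le L$, despite neither appearing in the lemma's stated hypotheses, so your reading of ``for any index $\br$'' as ranging over models with $|\br| \le L$ matches how the lemma is actually proved and later used; (ii) your explicit check that part 2 needs $\L_1 \log p \ge 3$ (hence $p$ not too small) makes precise a step the paper passes over silently, since its proof must likewise absorb the term $3n\sigma^2$ into $n\L_1\sigma^2\log p$. Finally, your Laurent--Massart calculation with $x=\log p$ in part 1 is the same estimate as the paper's $2e^{-n/25} \le 2p^{-1}$, merely parameterized differently.
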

\begin{remark}
Since we assume $K$ is fixed, by a union bound, it follows that with probability at least $1 - 2 K p^{-1} = 1 - O(p^{-1})$, $\frac{1}{2}  n\sigma^2\leq \|\be^{(k)}\|_2^2 \leq \frac{3}{2} n\sigma^2$ for all $k = 1, \dots, K$. The other two statements can be extended to all $K$ data sets analogously. 
\end{remark}
\begin{proof}
For part 1, we know that $\|\be\|_2^2/\sigma^2 \sim \chi_n^2$. By the concentration of the chi-square distribution and Condition \ref{cond4}, we have
\[\P\left[\left\lvert\frac{\|\be\|_2^2}{n\sigma^2} - 1\right\lvert \geq \frac{1}{2}\right] \leq 2e^{-n/25} \leq 2p^{-1},\]
which implies
\[\P\left[\frac{1}{2}  n\sigma^2\leq \|\be\|_2^2 \leq \frac{3}{2} n\sigma^2\right] \geq 1 - 2p^{-1}.\]

For part 2, by the Cauchy-Schwartz inequality,
\[\|\by\|_2^2 = \|\bX\bsbeta^* + \be\|_2^2 \leq 2\|\bX\bsbeta^*\|_2^2 + 2\|\be\|_2^2.\]
Using part 1, we obtain that 
\[\P[\|\by\|_2^2 \geq 2\|\bX\bsbeta^*\|_2^2 + 3n\sigma^2] \leq 2p^{-1}.\]
The bound then can be proved by invoking Condition \ref{cond1.1}.

For part 3, by the Sherman-Morrison-Woodbury identity, we have
\begin{align*}
0 \leq R_{\br} - R_{\br}^* &= \by^{\T}\bX_{\br}\left((\bX_{\br}^{\T}\bX_{\br})^{-1} - (\bX_{\br}^{\T}\bX_{\br} + \widetilde{\vb}\bI)^{-1}\right)\bX_{\br}^{\T}\by^{\T}\\
& = \by^{\T}\bX_{\br}(\bX_{\br}^{\T}\bX_{\br})^{-1}(\widetilde{\vb}\bI + (\bX_{\br}^{\T}\bX_{\br})^{-1})^{-1}(\bX_{\br}^{\T}\bX_{\br})^{-1}\bX_{\br}^{\T}\by\\
& \leq (n\widetilde{\vb})^{-1} n \by^{\T}\bX_{\br}(\bX_{\br}^{\T}\bX_{\br})^{-2}\bX_{\br}^{\T}\by.
\end{align*}
The last inequality is due to the fact that $\widetilde{\vb} \bI \preceq  \widetilde{\vb} \bI  + (\bX_{\br}^{\T}\bX_{\br})^{-1}$. Let $\bM = n\bX_{\br}(\bX_{\br}^{\T}\bX_{\br})^{-2}\bX_{\br}^{\T}$,
where the notation $\bA \preceq \bB$ means $\bB - \bA$ is positive semidefinite. 
By Condition \ref{cond2.2}, 
\[\lambda_{\max}(\bM) = \lambda_{\max}(n(\bX_{\br}^{\T}\bX_{\br})^{-1})\leq \frac{1}{\nu}.\]
That is, $\bM$ has bounded eigenvalues. Thus, by part 2,
\begin{align*}
\P\left[R_{\br} - R_{\br}^* \geq \frac{3 \L_1\sigma^2\log p}{\widetilde{\vb}\nu}\right] & \leq \P\left[\by^{\T} \bM\by \geq 
\frac{3 n \L_1\sigma^2\log p}{\nu}\right] \\
& \leq \P\left[\|\by\|_2^2 \geq 3 \L_1 n \sigma^2 \log p\right] \leq 2p^{-1}, 
\end{align*}
which completes the proof. 
\end{proof}

The third lemma is to bound the quadratic forms of residuals.
\begin{lemma}
Under Conditions \ref{cond2.1} and \ref{cond2.3}, the following hold. 
\begin{enumerate}
\item  For any distinct pair $(\br_1, \br_2)$ satisfying $\br_1 \subset \br_2$ and $|\br_2| \leq L$, we have
\[\lambda_{\min}\left(\bX_{\br_2 \setminus \br_1}^{\T}(\bI_n - \bsPhi_{\br_1})\bX_{\br_2 \setminus \br_1}\right) \geq n\nu.\]
\item For any distinct pair $(\br_1, \br_2)$ satisfying $\br_1 \subset \br_2$ and $|\br_2| \leq L$, we have
\[\P\left[\max_{\br_1 \subset \br_2}\frac{\be^{\T} (\bsPhi_{\br_2} - \bsPhi_{\br_1})\be}{|\br_2| - |\br_1|} \leq \L_3\sigma^2\log p\right]\geq 1 - p^{-1}.\]
\end{enumerate}
Here $\bsPhi_{\br}$ is defined by
\[\bsPhi_{\br} = \bX_{\br}\left(\bX_{\br}^\top \bX_{\br}\right)^{-1}\bX_{\br}^\top.\]
\end{lemma}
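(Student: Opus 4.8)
The plan is to treat the two parts separately: the first is a deterministic linear-algebra identity, and the second is a Gaussian-concentration argument that rests on the first.

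For the first part, I would recognize $\bX_{\br_2\setminus\br_1}^{\T}(\bI_n-\bsPhi_{\br_1})\bX_{\br_2\setminus\br_1}$ as the Schur complement of the block $\bX_{\br_1}^{\T}\bX_{\br_1}$ inside the Gram matrix $\bX_{\br_2}^{\T}\bX_{\br_2}$, after reordering columns so that those indexed by $S_{\br_1}$ come first. First I would record the standard fact that for a positive definite block matrix $M=\begin{pmatrix}A&B\\B^{\T}&C\end{pmatrix}$ one has $\lambda_{\min}(C-B^{\T}A^{-1}B)\geq\lambda_{\min}(M)$, which follows by testing the quadratic form of $M$ against the vector $(-A^{-1}Bv,\,v)$ and noting its Euclidean norm is at least $\|v\|$. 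Applying this with $M=\bX_{\br_2}^{\T}\bX_{\br_2}$ and invoking Condition~\ref{cond2.2}, which gives $\lambda_{\min}(\bX_{\br_2}^{\T}\bX_{\br_2})\geq n\nu$ because $|\br_2|\leq L$ (the block $\bX_{\br_1}^{\T}\bX_{\br_1}$ is invertible for the same reason), immediately yields $\lambda_{\min}(\bX_{\br_2\setminus\br_1}^{\T}(\bI_n-\bsPhi_{\br_1})\bX_{\br_2\setminus\br_1})\geq n\nu$.

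For the second part, I would write $\be=\sigma\bm{Z}$ with $\bm{Z}\sim\cN(0,\bI_n)$ and set $\bW=(\bI_n-\bsPhi_{\br_1})\bX_{\br_2\setminus\br_1}$, so that $\bsPhi_{\br_2}-\bsPhi_{\br_1}=\bW(\bW^{\T}\bW)^{-1}\bW^{\T}$ is the orthogonal projection onto the column space of $\bW$. Then $\bm{Z}^{\T}(\bsPhi_{\br_2}-\bsPhi_{\br_1})\bm{Z}=\|(\bW^{\T}\bW)^{-1/2}\bW^{\T}\bm{Z}\|_2^2\leq(n\nu)^{-1}\|\bW^{\T}\bm{Z}\|_2^2$, where the inequality uses Part 1. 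Writing $\|\bW^{\T}\bm{Z}\|_2^2=\sum_{s\in\br_2\setminus\br_1}(\bm{Z}^{\T}(\bI_n-\bsPhi_{\br_1})\bX_s)^2$ and bounding each of the $d:=|\br_2|-|\br_1|$ summands by the maximum over single columns, I would obtain
\[
\max_{\br_1\subset\br_2}\frac{\be^{\T}(\bsPhi_{\br_2}-\bsPhi_{\br_1})\be}{|\br_2|-|\br_1|}\;\leq\;\frac{\sigma^2}{n\nu}\,W^2,
\qquad
W:=\max_{|\br|\leq L}\,\max_{s\in S_{\br}^c}\big|\bm{Z}^{\T}(\bI_n-\bsPhi_{\br})\bX_s\big|,
\]
so it remains to show $\P\big(W\geq\sqrt{n\nu\L_3\log p}\big)\leq p^{-1}$. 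Condition~\ref{cond2.3} bounds the mean, $\bbE[W]\leq\tfrac12\sqrt{n\L_3\nu\log p}$. Viewing $W$ as the supremum of the Gaussian process $\bm{z}\mapsto\langle\bm{z},(\bI_n-\bsPhi_{\br})\bX_s\rangle$ over the finite index set, I would note that each index vector has norm $\|(\bI_n-\bsPhi_{\br})\bX_s\|_2\leq\|\bX_s\|_2=\sqrt{n}$ by Condition~\ref{cond2.1} and the fact that projections are contractions, so $W$ is a $\sqrt{n}$-Lipschitz function of $\bm{Z}$. The Borell--TIS inequality then gives $\P(W-\bbE[W]\geq t)\leq e^{-t^2/(2n)}$; taking $t=\tfrac12\sqrt{n\nu\L_3\log p}$ (which is at least $\sqrt{n\nu\L_3\log p}-\bbE[W]$) makes the exponent equal to $-\nu\L_3\log p/8$, and the hypothesis $\L_3\geq 8/\nu$ forces this below $-\log p$. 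On the complementary event $W\leq\sqrt{n\nu\L_3\log p}$ the displayed bound gives $\frac{\sigma^2}{n\nu}W^2\leq\L_3\sigma^2\log p$, which is exactly the claim.

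I expect the delicate step to be the passage from the quadratic form to the Gaussian-process maximum $W$ together with the verification of its Lipschitz constant, since the whole argument hinges on matching precisely the quantity appearing in Condition~\ref{cond2.3} (reading its $\bs$ as ranging over single columns) and on exploiting the factor-of-two slack between $\bbE[W]$ and the target $\sqrt{n\nu\L_3\log p}$ that makes the concentration bound land exactly at the $p^{-1}$ level.
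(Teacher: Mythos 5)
Your proof is correct, and its probabilistic core is identical to the paper's: both define the same Gaussian maximum (your $W$ is $\sqrt{n}$ times the paper's $V$), both obtain the mean bound from Condition~\ref{cond2.3}, both use Lipschitz--Gaussian concentration with the Lipschitz constant supplied by Condition~\ref{cond2.1}, and both exploit the same factor-of-two slack together with $\L_3 \ge 8/\nu$ to land exactly at the $p^{-1}$ level. Where you genuinely depart from the paper is the linear-algebra reduction in Part 2. The paper works with single-column increments: it writes $\bsPhi_{\br \cup \{k\}} - \bsPhi_{\br}$ in closed form as a rank-one projection, bounds its quadratic form using Part 1 only for single columns, and then handles a general pair $\br_1 \subset \br_2$ by an unstated telescoping of $\bsPhi_{\br_2} - \bsPhi_{\br_1}$ into $|\br_2| - |\br_1|$ such increments through intermediate models---that is the content of its closing ``which implies part 2.'' You instead treat the block difference in one shot: $\bsPhi_{\br_2} - \bsPhi_{\br_1}$ is the projection onto the columns of $\bW = (\bI_n - \bsPhi_{\br_1})\bX_{\br_2 \setminus \br_1}$, Part 1 applied to the full block gives $\lambda_{\min}(\bW^{\T}\bW) \ge n\nu$, and a sum-of-squares bound over the $|\br_2| - |\br_1|$ columns reduces everything to the same per-column maximum $W$. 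Your route uses Part 1 at its full multi-column strength, avoids telescoping altogether, and makes explicit precisely the step the paper glosses over; the paper's route buys an explicit formula for each increment and needs Part 1 only in its weakest (single-column) form. For Part 1 the two arguments are the same Schur-complement fact---yours via the test vector $(-A^{-1}Bv, v)$, the paper's via block inversion---and, consistently with the paper's own proof, you correctly invoke Condition~\ref{cond2.2} even though the lemma header lists only Conditions~\ref{cond2.1} and~\ref{cond2.3}.

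One small correction: the parenthetical ``which is at least $\sqrt{n\nu\L_3\log p} - \bbE[W]$'' states the inequality backwards. What you need, and what the mean bound from Condition~\ref{cond2.3} delivers, is $t = \tfrac{1}{2}\sqrt{n\nu\L_3\log p} \le \sqrt{n\nu\L_3\log p} - \bbE[W]$, so that $\{W \ge \sqrt{n\nu\L_3\log p}\} \subseteq \{W - \bbE[W] \ge t\}$ and the tail bound applies. This is a slip of wording only; the way you actually deploy the slack in the exponent is correct.
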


\begin{proof}
For part 1, if we write $\bX_{\br_2} = [\bX_{\br_1}, \bX_{\br_2 \setminus \br_1}]$, by the   block matrix inversion formula, the lower right component of $(n^{-1}\bX_{\br_2}^{\T}\bX_{\br_2}^{-1})^{-1}$ is $(n^{-1}\bX_{\br_2 \setminus \br_1}^{\T}(\bI_n - \bsPhi_{\br_1})\bX_{\br_2 \setminus \br_1})^{-1}$, which implies the asserted bound. 

For part 2, by the block matrix inversion formula, we have
\[\bsPhi_{\br\cup \{k\}} - \bsPhi_{\br} = \frac{(\bI - \bsPhi_{\br})\bX_k\bX_k^{\T}(\bI - \bsPhi_{\br})}{\bX_k^{\T}(\bI - \bsPhi_S)\bX_k}.\]
Hence,
\[\be^{\T} (\bsPhi_{\br\cup \{k\}} - \bsPhi_{\br})\be = \frac{\left(\be^{\T}(\bI - \bsPhi_{\br})\bX_k\right)^2/ n}{\bX_k^{\T}(\bI - \bsPhi_{\br})\bX_k / n}.\]
Due to part 1, the denominator $\bX_k^{\T}(\bI - \bsPhi_{\br})\bX_k / n \geq \nu$. For the numerator, define the random variable 
\[V(Z) \coloneqq \max_{|\br| \leq L, k\notin \br}\frac{1}{\sqrt{n}}\left\lvert Z^{\T}(\bI - \bsPhi_{\br})\bX_k\right\lvert,\]
where $Z\sim \cN(0, \bI_n)$. For any two vectors $Z, Z'\in \bbR^n$, by Condition \ref{cond2.1},
\begin{align*}
|V(Z) - V(Z')| &\leq \max_{|\br| \leq L, k\notin \br}\frac{1}{\sqrt{n}}\left\lvert (Z - Z')^{\T}(\bI - \bsPhi_{\br})\bX_k\right\lvert\\
& \leq \frac{1}{\sqrt{n}}\|(\bI - \bsPhi_{\br})\bX_k\|_2\|Z - Z'\|_2 \leq \|Z - Z'\|_2. 
\end{align*}
Thus, by the concentration of measures for Lipschitz functions of Gaussian random variables, we have
\[\P ( V(Z) \geq \bbE[V(Z)] + t )\leq \exp(-t^2 / 2).\]
Due to Condition \ref{cond2.3},
\[\bbE[V(Z)]\leq \frac{1}{2}\sqrt{\L_3\nu\log p}.\]
Thus,
\[\P\left[V(Z) \geq \frac{1}{2}\sqrt{\L_3\nu\log p} + \frac{1}{2}\sqrt{\L_3\nu\log p}\right] \leq \exp\left(-\frac{1}{8}\L_3\nu\log p\right) \leq p^{-1}.\]
Hence,
\[\P\left[\max_{|\br| \leq L, k\notin \br}\be^{\T} (\bsPhi_{\br\cup \{k\}} - \bsPhi_{\br})\be \geq \L_3\sigma^2\log p\right] \leq p^{-1},\]
which implies part 2.
\end{proof}

\subsection{Proof of Posterior Consistency}
We  prove the posterior consistency in this section. 
For simplicity, all universal constants other than $c_1$, $c_2$ and $c_3$ are denoted by $C$ or $C'$.   

\begin{proof} 
Throughout our proof, we always consider the event set on which the events in Lemma 2 (parts 1, 2 and 3) and Lemma 3 (part 2) all happen, which occurs with probability at least $1 - c_2 p^{-c_3}$ for some universal constants $c_2, c_3 > 0$. 

We divide the proof into two parts depending on whether the model being considered is overfitted or underfitted.  
First, consider the overfitted case. Let 
\[\bbM_{1\bsgamma} = \{\bsgamma \colon |\bsgamma|\leq L, S_k^* \subseteq S_k(\bsgamma), \, \forall k\in[K]\}\]
denote the collection of all models other than the true model $\bsgamma^*$ that include all influential covariates. 
Fix an arbitrary $\bsgamma \in \bbM_{1\bsgamma}$, and note that  $l = \sum_{k=1}^K |S_k(\bsgamma) \setminus S_k^*| \geq 1$. Denote $m_j = |\bsgamma_j|$ and recall that $m_j^* = |\bsgamma_j^*|$.
Let 
\begin{equation}\label{eq:l_k}
l_k = a_k(\bsgamma) = \sum_{j\in [p]} \ind_{\{m_j = k\}}, \quad \quad l^*_k = a_k(\bsgamma^*).
\end{equation}
It follows that \[\sum_{k = 1}^K |S_k(\bsgamma)| = \sum_{k = 1}^K k \, l_k.\]
Since $\bsgamma$ is overfitted, we have  
\begin{equation}\label{eq:r-to-gamma-diff}
\sum_{k = 1}^K |S_k(\bsgamma) \setminus S_k^*| = \sum_{k = 1}^K k(l_k - l_k^*), 
\end{equation}
which implies that 
\begin{equation}\label{eq:gamma-to-r-diff}
|\bsgamma| - |\bsgamma^*| = \sum_{k = 1}^K (l_k - l_k^*) \leq \sum_{k = 1}^K |S_k(\bsgamma) \setminus S_k^*| = l. 
\end{equation}

By \eqref{eq:gamma-to-r-diff}, Lemma 1, and Conditions \ref{cond3.1}, \ref{cond3.2} and \ref{cond3.4}, we have
\begin{align*}
\frac{\post (\bsgamma \mid  \fulldata )}{\post (\bsgamma^* \mid \fulldata)} &= \frac{f(|\bsgamma|, L)}{f(|\bsgamma^*|, L)}\prod_{k = 1}^K \left(p^{-(l_k - l_k^*)\omega_k}\right)\prod_{k = 1}^K\left( \frac{D_{S_k(\bsgamma)}\exp\left(-\frac{1}{2\sigma^2}R_{S_k(\bsgamma)}^{(k)}\right)}{D_{S_k^*}\exp\left(-\frac{1}{2\sigma^2}R_{S_k^*}^{(k)}\right)}\right)\\
& \leq C p^{l\constb}p^{-\sum_{k = 1}^K (l_k - l_k^*) \omega_k}\prod_{k = 1}^K \exp\left(-\frac{1}{2\sigma^2}\left(R^{(k)}_{S_k(\bsgamma)} - R^{(k)}_{S_k^*}\right)\right).
\end{align*}
By Condition \ref{cond1.2} and Lemma 3, if $|\br \setminus \br^*| \geq 1$, we have
\begin{align*}
R_{\br^*}^* - R_{\br}^* &= \|(\bsPhi_{\br} - \bsPhi_{\br^*}) \by \|_2^2 = \|(\bsPhi_{\br} - \bsPhi_{\br^*})\bX_{-\br^*}\bsbeta_{-\br^*}^* + (\bsPhi_{\br} - \bsPhi_{\br^*})\be\|_2^2\\
&\leq 2\|(\bsPhi_{\br} - \bsPhi_{\br^*})\bX_{-\br^*}\bsbeta_{-\br^*}^*\|_2^2 + 2\|(\bsPhi_{\br} - \bsPhi_{\br^*})\be\|_2^2\\
&\leq 2\L_2\sigma^2\log p + 2 |\br \setminus \br^*| \L_3\sigma^2\log p,
\end{align*}
with probability at least $1 - c_2 p^{-c_3}$.  Combining it with Lemma 2, we have 
\begin{align*}
R_{S_k^*} - R_{S_k(\bsgamma)} &\leq R_{S_k^*} - R^*_{S_k(\bsgamma)} = R_{S_k^*} - R_{S_k^*}^* + R_{S_k^*}^* - R_{S_k(\bsgamma)}^*\\
& \leq 3\L_1\log p \sigma^2 / (\nu\widetilde\tau) + 2\L_2\sigma^2\log p + 2 |S_k(\bsgamma) \setminus S_k^*| \L_3\sigma^2\log p\\
& \leq 3 |S_k(\bsgamma) \setminus S_k^*| \left(\L_1 / (\nu\widetilde \tau) + \L_2 + \L_3\right)\sigma^2\log p,
\end{align*}
for all $K$ data sets with probability at least $1 - c_2 p^{-c_3}$. 

By Equation~\eqref{eq:kappa-condition} and Condition \ref{cond3.3}, the posterior ratio becomes
\begin{align*}
\frac{\post(\bsgamma \mid \fulldata)}{\post(\bsgamma^* \mid \fulldata)} &\leq C p^{l\constb} p^{-\sum_{k = 1}^K (l_k - l_k^*)\omega_k}p^{\sum_{k = 1}^K 3|S_k(\bsgamma) \setminus S_k^*| \left(\L_1 / (\nu\widetilde \tau) + \L_2 + \L_3\right) / 2}\\
&= C p^{l \constb \sum_{k = 1}^K k(l_k - l_k^*)} p^{-\sum_{k = 1}^K k(l_k - l_k^*) (\omega_k / k)} p^{\left(3\left(\L_1 / (\nu\widetilde \tau) + \L_2 + \L_3\right) / 2\right) \sum_{k = 1}^K k(l_k -l_k^*)}\\
&\leq C p^{-2\sum_{k = 1}^K|S_k(\bsgamma)\setminus S_k^*|},
\end{align*}
with probability at least $1 - c_2 p^{-c_3}$, where we have used \eqref{eq:r-to-gamma-diff} in the second equality and the third inequality. Hence,
\begin{align}\label{eq:pos-ratio-overfit}
\begin{split}
\frac{\post(\bbM_{1\bsgamma} \mid \fulldata)}{\post(\bsgamma^* | \fulldata)} & = \sum_{\bsgamma\in\bbM_{1\bsgamma}}\frac{\post(\bsgamma | \fulldata)}{\post(\bsgamma^* | \fulldata)}\\
&\leq \sum_{l = 1}^{\infty}C (Kp)^l p^{-2l}\leq C' p^{-1},  
\end{split}
\end{align}
with probability at least $1 - c_2 p^{-c_3}$, where the first inequality in \eqref{eq:pos-ratio-overfit} follows from the fact that there are at most $(Kp)^l$ models that satisfy $\sum_{k = 1}^K |S_k(\bsgamma) \setminus S_k^*| = l$.

Second, consider the underfitted case. Let 
\[\bbM_{2\bsgamma} = \{\bsgamma\colon |\bsgamma|\leq L, \, l = \sum_{k=1}^K |S_k^* \setminus S_k(\bsgamma)| \geq 1\}\]
be the collection of models which do not include at least one influential covariate. 
Fix an arbitrary $\bsgamma \in \bbM_{2\bsgamma}$, and let $l = \sum_{k=1}^K|S_k^*\setminus S_k(\bsgamma)|$, $\widetilde S_k(\bsgamma) = S_k(\bsgamma) \cup S_k^*$, and $l_0 = \sum_{k=1}^K S_k(\bsgamma)$.
Let $\widetilde \bsgamma$ be defined by $\widetilde \bsgamma_j = \{k\in[K]\colon j \in \widetilde S_k(\bsgamma)\}$. Then, $\sum_{k=1}^K |\widetilde S_k(\bsgamma) \setminus S_k(\bsgamma)| = l$ and $\sum_{k=1}^K |\widetilde S_k(\bsgamma) \setminus S_k^*| = l + l_0 - \sum_{k=1}^K S_k^*$. Let $\widetilde m_j$ and $\widetilde l_k$ be defined in the same manner as \eqref{eq:l_k} by replacing $\bsgamma_j$ with $\widetilde \bsgamma_j$. Then,
$|\bsgamma| \leq |\widetilde \bsgamma|$ and
\[\sum_{k = 1}^K |\widetilde S_k(\bsgamma) \setminus S_k(\bsgamma)| = \sum_{k = 1}^K k(\widetilde l_k - l_k).\]

By Lemma 1, Conditions \ref{cond3.1}, \ref{cond3.2} and \ref{cond3.4},
\begin{align*}
\frac{\post(\bsgamma \mid \fulldata)}{\post(\widetilde \bsgamma \mid \fulldata)} &= \frac{f(|\bsgamma|, L)}{f(|\widetilde \bsgamma|, L)}\prod_{k = 1}^K \left(p^{(\tilde l_k - l_k)\omega_k}
\frac{D_{S_k(\bsgamma)}\exp\left(-\frac{1}{2\sigma^2}R_{S_k(\bsgamma)}^{(k)}\right)}{D_{\widetilde S_k(\bsgamma)}\exp\left(-\frac{1}{2\sigma^2}R_{\widetilde S_k(\bsgamma)}^{(k)}\right)}\right)\\
&\leq C p^{\sum_{k = 1}^K \mid \widetilde S_k(\bsgamma) \setminus S_k(\bsgamma)\mid \consta} p^{\sum_{k = 1}^K (\tilde l_k - l_k) \omega_k}\prod_{k = 1}^K\exp\left(-\frac{1}{2\sigma^2}(R^{(k)}_{S_k(\bsgamma)} - R^{(k)}_{\widetilde S_k(\bsgamma)})\right).
\end{align*}
By Condition \ref{cond1.2} and Lemma 3, if $|\widetilde \br \setminus \br| \geq 1$, we have, with probability at least $1 - c_2 p^{-c_3}$
\begin{align*}
R_{\br}^{*} - R_{\widetilde \br }^{*} &=  \by^{\T}(\bsPhi_{\widetilde \br }  - \bsPhi_{\br})\by = \|(\bsPhi_{\widetilde \br} - \bsPhi_{\br})(\bX_{\br^{*}} \bsbeta_{\br^{*}}^{*} + \bX_{-\br^*}\bsbeta_{-\br^*}^* + \be ) \|_2^2\\
&\geq \left(\|(\bsPhi_{\widetilde \br} - \bsPhi_{\br})\bX_{\br^*}\bsbeta_{\br^*}^*\|_2 - \|(\bsPhi_{\widetilde \br} - \bsPhi_{\br})\bX_{-\br^*}\bsbeta_{-\br^*}^*\|_2 + \|(\bsPhi_{\widetilde \br} - \bsPhi_{\br})\be\|_2\right)^2\\
&\geq \left(\|(\bsPhi_{\widetilde \br} - \bsPhi_{\br})\bX_{\br^*}\bsbeta_{\br^*}^*\|_2 - \sqrt{\L_2\sigma^2\log p} - \sqrt{|\widetilde \br \setminus \br| \L_3\sigma^2\log p}\right)^2.
\end{align*}
Due to Condition \ref{cond5} and Lemma 3, we have
\begin{align*}
\|(\bsPhi_{\widetilde \br } - \bsPhi_{\br}) \bX_{\br^{*}} \bsbeta_{\br^{*}}^{*} \|_2^2 &= \|\left(\bI - \bsPhi_{\br} \right) \bX_{\br^{*}} \bsbeta_{\br^{*}}^{*}\|_2^2 \\
& = \|\left(\bI - \bsPhi_{\br} \right)\bX_{\br^{*} \setminus \br} \bsbeta_{\br^{*} \setminus \br}^{*}\|_2^2\\
& \geq n\nu\|\bsbeta_{\br^{*} \setminus \br}^{*}\|_2^2\\
& \geq 8 \left\lvert \br^* \setminus \br\right\lvert (\L_2 + \L_3)\sigma^2\log p. 
\end{align*}
Thus,
\[R_{\br}^{*} - R_{\widetilde \br}^{*} \geq \frac{1}{4} \|\left(\bsPhi_{\widetilde \br} - \bsPhi_{\br} \right)\bX_{\br^{*}} \bsbeta_{\br^{*}}^{*}\|_2^2,\]
with probability at least $1 - c_2 p^{-c_3}$. 
Combining it with Lemma 2, we have
\[R_{S_k(\bsgamma) }  - R_{\widetilde S_k(\bsgamma) }  \geq R_{S_k(\bsgamma) }^{*} - R_{\widetilde S_k(\bsgamma)}^{*} + R_{\widetilde S_k(\bsgamma) }^{*} - R_{\widetilde S_k(\bsgamma) } \geq \frac{n \nu \|\bsbeta_{S_k^{*} \setminus S_k(\bsgamma) }^{*}\|_2^2}{4} - \frac{3 \L_1\sigma^2\log p}{\nu \widetilde \tau},\]
for all $k\in[K]$ with probability at least $1 - c_2 p^{-c_3}$. Observe that
\[\sum_{k = 1}^K \|\bsbeta_{S_k^* \setminus S_k(\bsgamma)}^{(k)*}\|_2^2 \geq \sum_{j \in [p]} |\bsgamma_j^* \setminus \bsgamma_j| C_{\beta, m_j^*}.\]
Due to Condition \ref{cond5} and Equation~\eqref{eq:kappa-condition}, the posterior ratio becomes
\begin{align*}
& \frac{\post(\bsgamma \mid \fulldata)}{\post(\widetilde \bsgamma \mid \fulldata)}\\
\leq &p^{\sum_{k = 1}^K |\widetilde S_k(\bsgamma) \setminus S_k(\bsgamma)| \consta} p^{\sum_{k = 1}^K \omega_k (\tilde l_k - l_k)} p^{-\sum_{k = 1}^K |S_k^* \setminus S_k(\bsgamma)| (\consta + 2)} p^{-\sum_{j \in [p]} |\bsgamma_j^* \setminus \bsgamma_j| (\omega_{m_j^*} / m_j^*)} \\
\leq &C p^{-2\sum_{k = 1}^{K}|\tilde S_k(\bsgamma) \setminus S_k(\bsgamma)|},
\end{align*}
where in the last inequality we have used
\begin{equation}\label{eq:omega_claim}
\sum_{k = 1}^K \omega_k (\tilde l_k - l_k) - \sum_{j\in [p]}|\bsgamma_j^* \setminus \bsgamma_j| (\omega_{m_j^*} / m_j^*) \leq 0, 
\end{equation}
for which we will give a proof at the end. By the result for the overfitted case, the posterior ratio becomes
\begin{align*}
\frac{\post(\bsgamma \mid \fulldata)}{\post(\bsgamma^* \mid \fulldata)} &= \frac{\post(\bsgamma \mid \fulldata)}{\post(\widetilde\bsgamma \mid \fulldata )} \frac{\post(\widetilde \bsgamma \mid \fulldata)}{\post(\bsgamma^* \mid \fulldata)}\\
&\leq C p^{-2\left(\sum_{k = 1}^K \left\lvert S_k(\bsgamma)\setminus S_k^*\right\lvert + \sum_{k = 1}^K \left\lvert S_k^*\setminus S_k(\bsgamma)\right\lvert \right)},
\end{align*}
with probability at least $1 - c_2 p^{-c_3}$. It follows that 
\begin{align}\label{eq:pos-ratio-underfit}
\begin{split}
\frac{\post(\bbM_{2\bsgamma} \mid \fulldata)}{\post(\bsgamma^* \mid \fulldata)} &= \sum_{\bsgamma\in\bbM_{2\bsgamma}}\frac{\post(\bsgamma \mid \fulldata)}{\post(\bsgamma^* \mid \fulldata)}\\
&\leq \sum_{l_1 = 0}^{\infty}\sum_{l_2 = 1}^{\infty}C (Kp)^{l_1 + l_2}p^{-2l_1 - 2l_2} \leq C' p^{-1}.    
\end{split}
\end{align}
with probability at least $1 - c_2 p^{-c_3}$. In the first inequality, we use the fact that there are at most $ (Kp)^{l_1 + l_2}$ models such that $\sum_{k=1}^K |S_k(\bsgamma)\setminus S_k^*| = l_1$ and $\sum_{k=1}^K |S_k^* \setminus S_k(\bsgamma)| = l_2$.

Combining \eqref{eq:pos-ratio-overfit} and \eqref{eq:pos-ratio-underfit}, we obtain that 
\[\post(\bsgamma^* | \fulldata ) \geq 1 - c_1p^{-1},\]
with probability at least $1 - c_2p^{-c_3}$, where $c_1 > 0$ is some universal constant. 

Finally, we prove \eqref{eq:omega_claim} via induction. When $\sum_{k=1}^K |S_k^* \setminus S_k(\bsgamma)| = l = 1$, $\bsgamma$ misses one influential covariate.  
Assume that $\bsgamma$ misses the $i$-th covariate  in one data set and $\widetilde m_i = k_0 \geq m_i^* \geq 1$. Then,
\[\sum_{k = 1}^K \omega_k (\tilde l_k - l_k) = \omega_{k_0} - \omega_{k_0 - 1},\]
where we define $\omega_0 = 0$. 
Since
\[\sum_{j \in [p]} |\bsgamma_j^* \setminus \bsgamma_j| (\omega_{m_j^*} / m_j^*) = \omega_{m_i^*} / m_i^*,\] 
it follows from~\eqref{eq:kappa-condition} that 
\[\omega_{k_0} - \omega_{k_0 - 1} - \omega_{m_i^*} / m_i^* \left\{\begin{array}{ll}
   = 0,  & \text{ if }k_0 = 1, \\
    < \frac{k_0 \omega_{k_0 - 1}}{k_0 - 1} - \omega_{k_0 - 1} - \frac{\omega_{k_0 - 1}}{k_0 - 1} = 0, & \text{ if }k_0 > 1, 
\end{array}\right.\]
which completes the proof for $l = 1$. 

Assume that the claim holds for $l = l_0$ and now we prove it also holds for $l = l_0 + 1$. 
Clearly, there exists $\check \bsgamma$ such that 
$S_k(\bsgamma) \subseteq S_k(\check\bsgamma) \subseteq \widetilde{S}_k(\bsgamma)$ for every $k \in [K]$ and 
$\sum_{k=1}^K |S_k^* \setminus S_k(\check\bsgamma)| =  1$.  
Observe that for any $j \in [p]$, $ \bsgamma_j^* \setminus \bsgamma_j$ is the disjoint union of $\bsgamma_j^* \setminus \check \bsgamma_j$ and $  \check \bsgamma_j \setminus \bsgamma_j $. 
Letting $\check{l}_k = a_k(\check \bsgamma)$, we find that  
\begin{align*}
& \sum_{k = 1}^K  \omega_k (\tilde l_k - l_k) -  \sum_{j \in [p]} |\bsgamma_j^* \setminus \bsgamma_j| (\omega_{m_j^*} / m_j^*) \\
=& \left(\sum_{k = 1}^K \omega_k (\tilde l_k - \check l_k) - \sum_{j\in[p]}\lvert\bsgamma_j^* \setminus \check \bsgamma_j\lvert\ (\omega_{m_j^*} / m_j^*)\right) + \left(\sum_{k = 1}^K \omega_k (\check l_k - l_k) - \sum_{j\in[p]} \lvert\check \bsgamma_j \setminus \bsgamma_j\lvert(\omega_{m_j^*} / m_j^*)\right) 
\end{align*}
where the first term is non-positive since it corresponds to the case $l = 1$, and the second term is non-positive due to the induction assumption.  
This proves~\eqref{eq:omega_claim}. 
\end{proof}

\newpage
\section{Fitting muSuSiE}\label{appx:fit-muSuSiE}
We first briefly review the notation used in the main text for muSuSiE. 
\begin{longtable}{|p{0.15\textwidth} p{0.84\textwidth}|}
\hline
Notation & Definition\\
\hline 
$\bsbeta^{(k, l)}$ &  $l$-th single-effect regression coefficient vector for the $k$-th data set \\
$\bsbeta^{(k)}$ & $\sum_{l = 1}^L \bsbeta^{(k, l)}$, i.e., aggregated regression coefficient vector for the $k$-th data set \\
$\bsgamma^{(l)}$ & the set-valued vector such that $\gamma_j^{(l)} = I \subseteq [K]$ means that the $j$-th covariate is activated in the data sets indexed by  $I$ in the $l$-th single effect\\ 
$\dset$ & some probability distribution on $2^{[K]} \setminus \emptyset$\\
$\sphi_l$ &  indicator variable; $\sphi_l = 0$ means that the $l$-th single effect is not activated \\
$\sel_l$ & the covariate selected to be activated in the $l$-th single effect \\ 
$I_l$ &  the index set of data sets in which the $\sel_l$-th covariate is to be activated \\ 
$\mathrm{Unif}([p])$ &  uniform distribution on $[p]$\\ 
$\pi_{\sphi}$ &  hyperparameter for the prior distribution on $\sphi_l$ \\ 
$(\pi_k)_{k=1}^K$ &  hyperparameter for the prior distribution on $I_l$ \\  
\hline
\caption{Notation  for muSuSiE.} \label{tab:notation-muSuSiE}
\end{longtable} 

Recall that the prior distribution we put on $\{ \bsgamma^{(l)}\colon  l \in [L] \}$  encodes the following procedure for selecting and activating covariates:  
for each $l \in [L]$, we first draw  $\sphi_l \sim \mathrm{Bernoulli}(\pi_{\sphi})$, $\sel_l \sim \mathrm{Unif}([p])$ and $I_l \sim \dset$; if $\sphi_l = 1$, we activate the $\sel_l$-th covariate in the data sets indexed by $I_l$ (and we do nothing if $\sphi_l = 0$). The distribution $\dset$ is defined by 
\begin{align*}
    \dset(  I ) = p \, \pi_{|I|}, \quad \forall \, I \in 2^{[K]} \setminus \emptyset, 
\end{align*}
where  $\pi_1, \dots, \pi_K$ are normalized so that  $\dset(2^{[K]} \setminus \emptyset) = 1$. 

\subsection{Iterative Bayesian Stepwise Selection Algorithm}
Consider the muSER model defined in~\eqref{eq:muSER}. 
To find its posterior distribution,  denote the $j$-th column of $\bX^{(k)}$ by $\bX_j^{(k)}$ and define
\[\widehat \beta_j^{(k)} = \left(\left(\bX_j^{(k)}\right)^{\T}\bX_j^{(k)}\right)^{-1}\left(\bX_j^{(k)}\right)^{\T} \by^{(k)},\quad
s_{j(k)}^2 = \frac{\sigma^2}{ (\bX_j^{(k)})^{\T} \bX_j^{(k)}},\quad
z_{j(k)} = \frac{\widehat \beta_j^{(k)}}{s_{j(k)}}.\]
Let the Bayes Factor (BF) for activating covariate $j$ in the $k$-th data set be
\[\BF(j, k) = \frac{ P(\by^{(k)} \mid \bX_j^{(k)}, \sigma^2, \vb, \zeta = 1, u = j, k \in \gamma_j)}{P_0(\by^{(k)} \mid  \sigma^2)} =
\sqrt{\frac{s_{j(k)}^2}{\vb + s_{j(k)}^2}}\exp\left(\frac{z_{j(k)}^2}{2}\times \frac{\vb}{\vb + s_{j(k)}^2}\right),\]
where we define $P_0(\by^{(k)}\mid \sigma^2)$ as the probability of observing $\by^{(k)}$ when the $j$-th covariate is not activated for the $k$-th data set. Then, for any $I\in2^{[K]} \setminus \emptyset$, the BF for activating covariate $j$ in all data sets indexed by $I$ is given by 
\[\BF(j, I) = \prod_{k\in I}\BF(j, k).\] 
It follows that the posterior distribution of $(\bsgamma, 1 - \sphi)$ given $\sigma^2$ and $\tau$ is   a multinomial distribution with
\begin{equation}\label{eq:pos-single}
\prior_{\muSER}( \sphi = 0 \mid (\by^{(k)})_{k \in [K]} ) = \alpha_0, \quad  \prior_{\muSER}(\gamma_j = I\mid (\by^{(k)})_{k \in [K]} ) = \alpha_{j, I},   
\end{equation}
where
\begin{align*}
    \alpha_{j, I} &\propto \pi_{\sphi}\pi_{ |I| } \BF(j, I),\\
    \alpha_0 &\propto  1 - \pi_{\sphi}.
\end{align*}
The posterior distribution of $\beta_j^{(k)}$ given $\sphi = 1$, $u = j$ and $k\in\gamma_j$ is  
\[\beta_j^{(k)}|\fulldata, \sigma^2, \vb \sim \cN(\mu_{j}^{(k)}, \phi_j^{(k)}),\]
where 
\[\phi_j^{(k)} = \frac{1}{1/s_{j(k)}^2 + 1/\vb},\quad \mu_{j}^{(k)} = \frac{\phi_j^{(k)}}{s_{j(k)}^2} \times \widehat \beta_j^{(k)}.\]
The above calculation shows that to obtain the posterior distribution for the muSER model, we only need to calculate $\BF(j, k)$ for each $j \in [p]$ and $k \in [K]$.  

\subsubsection{Estimation of \texorpdfstring{$\tau^{(l)}$}{TEXT}}
Given $\sigma^2$, we use an empirical bayes approach to estimating the hyperparameter  $\tau^{(l)}$. Since at most one covariate is activated in \eqref{eq:muSER}, in total there are $|2^{[K]} \setminus \emptyset| \times p + 1$ possible models, where $1$ indicates the null model. Hence, the likelihood of the variance components $\sigma^2, \tau$ under the muSER model is 
\begin{align*}
&\prod_{k = 1}^K P(\by^{(k)} \mid \bX^{(k)}, \sigma^2, \tau)\\
=&\sum_{I\in2^{[K]} \setminus \emptyset}\sum_{j=1}^p \pi_{\sphi}\pi_{|I|} \left\{\prod_{k\in I}P(\by^{(k)} | \bX_j^{(k)}, \sigma^2, \tau, \sphi = 1, u = j, \gamma_j = I)\right\}\times \\
& \qquad \left\{\prod_{k\notin I}P(\by^{(k)}\mid \bX_j^{(k)}, \sigma^2, \sphi = 1, u = j, \gamma_j = I)\right\}   \\
+& (1 - \pi_{\sphi})  \prod_{k = 1}^K P(\by^{(k)}| \bX_j^{(k)}, \sigma^2, \sphi = 0).  
\end{align*} 
Using the Bayes factors we have defined in the last subsection, we can rewrite the likelihood  by 
\[\prod_{k = 1}^K P(\by^{(k)} \mid \bX^{(k)}, \sigma^2, \tau) = \left\{ \prod_{k = 1}^K P_0(\by^{(k)} \mid \sigma^2)\right\} \left\{ \sum_{I\in2^{[K]} \setminus \emptyset}\sum_{j = 1}^p \pi_{\sphi}\pi_{|I|}\BF(j, I; \sigma^2, \tau) + (1 - \pi_{\sphi})\right\},\]
where we write $\BF(j, I; \sigma^2, \tau)$ to emphasize that the Bayes factors depend on both $\sigma^2$ and $\tau$.
Thus, by removing the terms that do not involve $\tau$, we can define an empirical Bayes estimator of $\tau$ as the value that maximizes the function 
\begin{equation}\label{eq:est-sigma0}
\sum_{I\in2^{[K]} \setminus \emptyset}\sum_{j = 1}^p\pi_{|I|}\BF(j, I; \sigma^2, \tau).
\end{equation}
In our code, we use the \texttt{optimize} function in \texttt{R} to solve this one dimensional optimization problem. To estimate  $\tau^{(l)}$ for each $l = 1, \dots, L$, we only need to replace $\by^{(k)}$ by $\resid^{(k, l)}$ when we calculate $\BF(j, I; \sigma^2, \tau^{(l)})$ in \eqref{eq:est-sigma0}, where 
\begin{equation}\label{eq:def.resid}
    \resid^{(k, l)} = \by^{(k)} - \bX^{(k)}\sum_{l' \neq l}\widehat \bsbeta^{(k, l')}. 
\end{equation}

\subsection{IBSS Algorithm is CAVI}
In this section, we show that our IBSS algorithm is actually the coordinate ascent variational inference (CAVI) algorithm for maximizing the evidence lower bound (ELBO) over a certain variational family for the muSuSiE model. The main idea of the proof is similar to that for the SuSiE model (see Supplement B of~\citet{wang2020simple}). 

We begin with a brief review of variational inference. Denote the parameters that we are interested in as $\bstheta$ and the posterior distribution as $\Pi(\bstheta | \data)$ where $\data$ denotes the data. For any distribution function $p$ and $q$, let
\[\KL(p \| q) = \int p(\bstheta)\log \frac{p(\bstheta)}{q(\bstheta)} \,d\bstheta\]
be the Kullback–Leibler (KL) divergence between $p$ and $q$. Let $\cQ$ be a density family of $\bstheta$. The main idea behind variational Bayes is to find some $q\in\cQ$ to approximate the posterior distribution $\Pi(\bstheta | \data)$ by minimizing the KL divergence $\KL(q \| \Pi(\cdot | \data) )$. That is, we try to solve the following optimization problem
\[q^*  = \arg\min_{q\in\cQ} \KL\left(q  \, \| \, \Pi(\cdot | \data)\right).\]
Although $\KL(q \, \| \, \Pi(\cdot | \data) )$ itself is difficult to evaluate, it can be expressed by using another function which is called evidence lower bound (ELBO) and is much easier to calculate: 
\[ \KL (q \| p) = \log(P(\data)) - \ELBO(q),\]
where 
\begin{equation}\label{eq:ELBO}
\ELBO(q) = \bbE_q[\log P(\bstheta, \data)] - \bbE_q[\log q(\bstheta)], 
\end{equation}
where $P(\bstheta, \data) = P(\bstheta) P(\data | \bstheta)$. Since $P(\data)$ does not depend on $\bstheta$, instead of minimizing the KL divergence, we can aim to find $q\in\cQ$ that maximizes the ELBO.

Notice that our muSuSiE model \eqref{eq:muSuSiE} can be considered as a special case of the following additive effects model:
\begin{align}\label{eq:sum-model}
\begin{split}
    \by^{(k)} &= \sum_{l = 1}^L\bsmu^{(k, l)} + \be, \text{ where }\be \sim \cN(0, \sigma^2\bI),\\
    \bsmu^{(l)} &= \left((\bsmu^{(1, l)})^{\T}, \cdots, (\bsmu^{(k, l)})^{\T}\right)^{\T} \sim g_l,   \text{ independently for }l = 1, \cdots, L,    
\end{split}
\end{align}
where $g_l$ denotes some prior probability distribution. 
The mean-field variational family we propose to consider is the collection of probability distributions of the form
\begin{equation}\label{eq:VB-family}
q(\bsmu^{(1)}, \cdots, \bsmu^{(L)}) = \prod_{l = 1}^L q_l(\bsmu^{(l)}),
\end{equation}
that is, we let the variational family $\cQ$ be the class of distributions on $\bsmu = \bigl(\bsmu^{(1)}, \cdots, \bsmu^{(L)}\bigr)$ that factorize over $\bsmu^{(1)}, \cdots, \bsmu^{(L)}$. Then, the ELBO that we want to optimize becomes
\begin{align}\label{eq:IBSS-ELBO}
\begin{split}
& \ELBO(q; \sigma^2, \fulldata)  \\
=& \bbE_{q}[\log P(\fulldata | \bsmu)] + \bbE_q[ \sum_{l = 1}^L \log g_l(\bsmu^{(l)}] - \bbE_q[\log q(\bsmu)]\\
=& -\frac{\sum_{k = 1}^K n_k}{2}\log(2\pi\sigma^2) - \frac{1}{2\sigma^2}\sum_{k = 1}^K \bbE_q[\|\by^{(k)} - \sum_{l = 1}^L\bsmu^{(k, l)}\|^2] + \sum_{l = 1}^L \bbE_{q_l}\left[\log \frac{g_l(\bsmu^{(l)})}{q_l(\bsmu^{(l)})}\right].    
\end{split}
\end{align}
In the CAVI algorithm \citep{blei2017variational}, each step we only update one $\bsmu^{(l)}$ and fix  $\{\bsmu^{(l')}\}_{l' \neq l}$. 
For $q_l$, its ELBO can be expressed by 
\[\ELBO(q_l; \sigma^2, \fulldata) = C - \frac{1}{2\sigma^2}\sum_{k=1}^K \bbE_{q_l}[\|\resid^{(k, l)} - \bsmu^{(l)}\|^2] +  \bbE_{q_l}\left[\log \frac{g_l(\bsmu^{(l)})}{q_l(\bsmu^{(l)})}\right],\]
where $\resid^{(k, l)}$ is defined in~\eqref{eq:def.resid} and 
$C$ is a constant independent of $q_l$. Because we do not impose any constraint on $q_l$, by the standard result in variational inference \citep{blei2017variational}, the distribution which maximizes $\ELBO(q_l; \sigma^2, \fulldata)$ is
\[q_l^*(\bsmu^{(l)}) = \Pi\left(\bsmu^{(l)} | (\resid^{(k, l)})_{k\in[K]}\right),\]
where $\Pi (\bsmu^{(l)} | (\resid^{(k, l)})_{k\in[K]} )$ is the posterior distribution for the model 
\begin{align}\label{eq:single-model}
\begin{split}
    \resid^{(k, l)} &= \bsmu^{(k, l)} + \be, \text{ where }\be \sim \cN(0, \sigma^2\bI),\\
    \bsmu^{(l)} &\sim g_l.    
\end{split}
\end{align}

Now consider the muSuSiE model given in~\eqref{eq:muSuSiE}. By comparing  \eqref{eq:muSuSiE} with \eqref{eq:sum-model}, we see that we can let $\bsmu^{(k, l)} = \bX^{(k)} \bsbeta^{(k, l)}$ and $g_l$ be as described by model \eqref{eq:muSuSiE}. Let
\[\bsbeta^{(l)} = \left[\bsbeta^{(1, l)}, \cdots, \bsbeta^{(K, l)}\right]^{\T}.\]
Then the variational family we propose becomes
\[q(\bsbeta^{(1)}, \cdots, \bsbeta^{(L)}) = \prod_{l = 1}^L q_l(\bsbeta^{(l)}).\]
Because we do not impose any constraint on $q_l$,  by the CAVI algorithm, we should update $q_l$ by 
\[q_l^*(\bsbeta^{(l)}) = \Pi(\bsbeta^{(l)} |(\resid^{(k,l)})_{k\in[K]}),\]
where $ \Pi(\bsbeta^{(l)} |(\resid^{(k,l)})_{k\in[K]})$ is the posterior distribution for the muSER model defined in~\eqref{eq:muSER} with  $\by^{(k)}$ replaced by $\resid^{(k, l)}$. This is exactly how we update $\bsbeta^{(l)}$ in IBSS algorithm; that is, the IBSS algorithm we propose is a CAVI algorithm for muSuSiE.

\subsubsection{Estimation of \texorpdfstring{$\sigma^2$}{TEXT}}
We can estimate $\sigma^2$ using the value that maximizes the ELBO given in \eqref{eq:IBSS-ELBO}. To numerically calcualte it, we take partial derivative of~\eqref{eq:IBSS-ELBO} with respect to $\sigma^2$ and set it to zero,   which results in
\begin{equation}\label{eq:est-sigma}
    \widehat \sigma^2 = \frac{\bbE_q\left[\sum_{k = 1}^K \|\by^{(k)} - \sum_{l = 1}^L\bsmu^{(k, l)}\|^2\right]}{\sum_{k = 1}^K n_k}.
\end{equation}
This can also be seen as a generalization of Equation (B.10) in \cite{wang2020simple} to the multi-task variable selection problem.

\subsubsection{Stopping Criterion}
We calculate ELBO \eqref{eq:IBSS-ELBO} after updating all $L$ single-effect vectors. If the change in ELBO is less than certain small threshold, we stop the IBSS algorithm; otherwise, we update all $L$ single-effect vectors again. In our numerical experiments, we always let the threshold be $10^{-4}$. 

\clearpage 
\newpage

\section{More Simulation Results for Variable Selection}\label{appx:vs-simu}
\subsection{More Simulation Results for muSuSiE and SuSiE}
The simulation results for $K=2$ and $\sigma^2 = 1$ or $4$ are shown in Table \ref{tab:simu-res-K2}, and those for $K=5$ and $\sigma^2 = 1$ or $4$ are shown in Table \ref{tab:simu-res-K5}. We make two key observations. 

First, as we can see, when the sample size is small ($n = 100$), the multi-task method identifies more activated covariates than the single-task method, resulting in increased sensitivity and precision. When the sample size increases to $500$, the multi-task method improves sensitivity but has a slightly smaller precision, because the multi-task approach favors activating covariates simultaneously in two data sets, which can give rise to false positives when some covariate is activated in only one data set but has a very large signal size.   
 
Second, when all the other parameters are fixed, the multi-task method on five data sets outperforms the multi-task method on two data sets, with the former significantly improving both sensitivity and precision. This is evident when $n$ is small. 
When $n$ is large, the advantage of the multi-task method on five data sets over on two data sets is  still noticeable (especially when $\sigma^2 = 4$) but less significant.

\begin{figure}[h!]
    \centering
    \includegraphics[width = 0.8\textwidth]{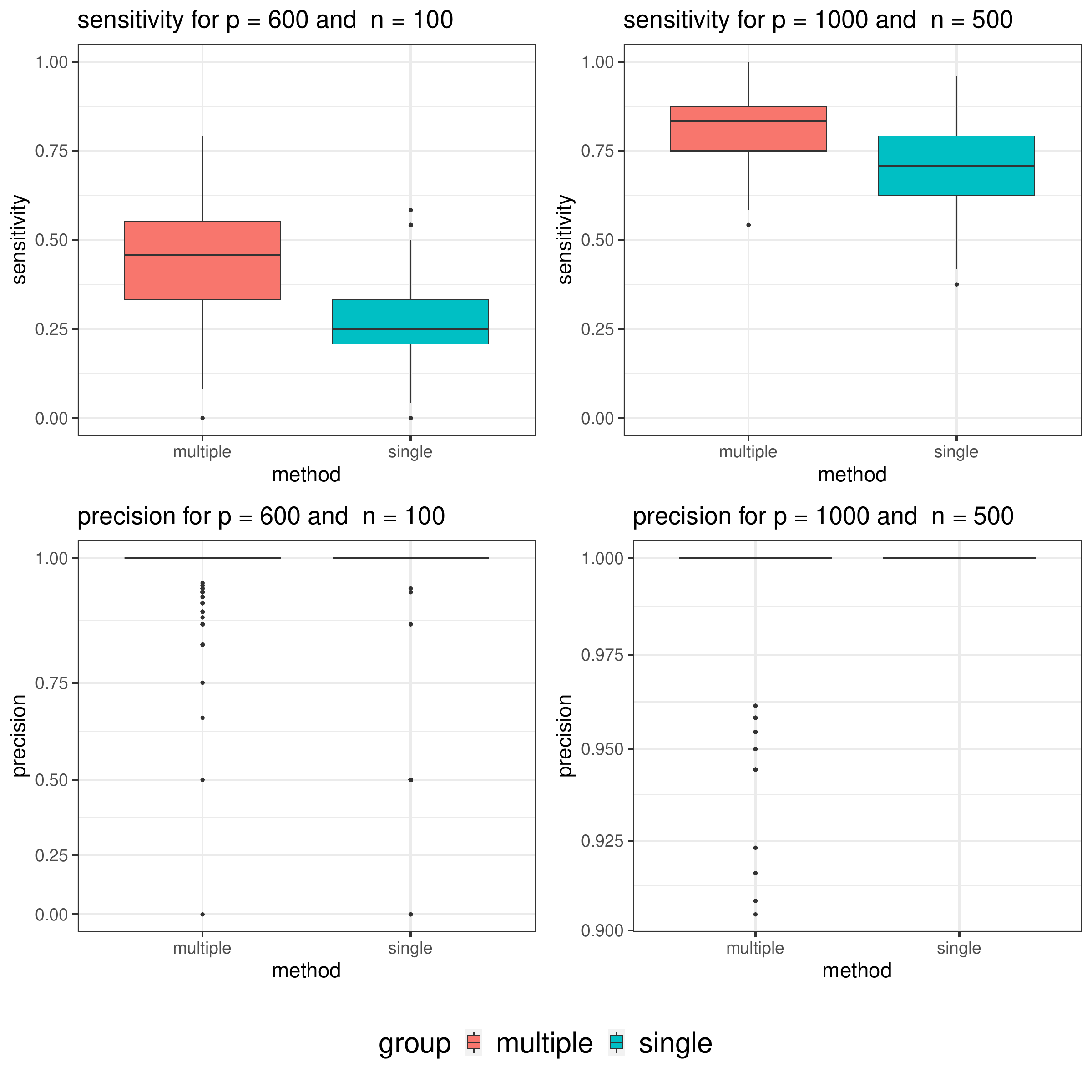}
    \caption{Sensitivity and precision for the simulation study with $K=2$. Each box shows the distribution of sensitivity or precision among  $500$ replicates.
    }\label{fig:K2sigma1}
\end{figure}

\begin{table}
\centering
\begin{tabular}{|cccc|cc|cc|}
\hline
$p$ & $n$ & $s_1^*$ & $s_2^*$  & sens\_mu & sens\_si & prec\_mu & prec\_si\\
\hline
600 & 100 & 10 & 2  & 0.5976& 0.2551& 0.9907& 0.9328\\ 
600 & 100 & 10 & 5  & 0.495& 0.2007& 0.9795& 0.9269\\
\hline
1000 & 500 & 10 & 2  & 0.8181& 0.7062& 0.9936& 0.9999\\ 
1000 & 500 & 10 & 5  & 0.7921& 0.7025& 0.9887& 0.9998\\ 
1000 & 500 & 25 & 2  & 0.8261& 0.6927& 0.9974& 0.9999\\ 
1000 & 500 & 25 & 5  & 0.8101& 0.6916& 0.9938& 0.9999\\ 
\hline
\end{tabular}
\caption{Simulation results for five data sets ($K=5$) with $\sigma = 1$. For each setting, the result is averaged over $500$ replicates.}
\label{tab:simu-res-K5-sigma1}
\end{table}  
 
\begin{table}
\centering
\begin{tabular}{|ccccc|cc|cc|}
\hline
$p$ & $n$ & $s_1^*$ & $s_2^*$ &$\sigma^2$ & sens\_mu & sens\_si & prec\_mu & prec\_si\\
\hline
600 & 100 & 10 & 2 & 1 & 0.4526& 0.2632& 0.9884& 0.9365\\ 
600 & 100 & 10 & 5 & 1 & 0.3456& 0.2045& 0.9747& 0.9258\\ 
600 & 100 & 25 & 2 & 1 & 0.1259& 0.0656& 0.9408& 0.7608\\
600 & 100 & 25 & 5 & 1 & 0.089& 0.0499& 0.8976& 0.7229\\
\hline
600 & 100 & 10 & 2 & 4 & 0.1233& 0.0656& 0.7694& 0.5547\\ 
600 & 100 & 10 & 5 & 4 & 0.0931& 0.0521& 0.7576& 0.5482\\ 
600 & 100 & 25 & 2 & 4 & 0.0499& 0.024& 0.7389& 0.4605\\ 
600 & 100 & 25 & 5 & 4 & 0.0364& 0.0187& 0.6643& 0.4184\\ 
\hline
1000 & 500 & 10 & 2 & 1 & 0.8121& 0.7063& 0.9962& 1\\ 
1000 & 500 & 10 & 5 & 1 & 0.7905& 0.7011& 0.9928& 0.9996\\ 
1000 & 500 & 25 & 2 & 1 & 0.8191& 0.696& 0.9985& 1\\ 
1000 & 500 & 25 & 5 & 1 & 0.804& 0.6949& 0.9964& 0.9999\\ 
\hline
1000 & 500 & 10 & 2 & 4 & 0.613& 0.4655& 0.9945& 0.9987\\ 
1000 & 500 & 10 & 5 & 4 & 0.5735& 0.4549& 0.9901& 0.9997\\ 
1000 & 500 & 25 & 2 & 4 & 0.6077& 0.4389& 0.9978& 0.9999\\ 
1000 & 500 & 25 & 5 & 4 & 0.577& 0.4332& 0.9951& 0.9997\\ 
\hline
\end{tabular}
\caption{Simulation results for two data sets ($K=2$). For each setting, the result is averaged over $500$ replicates.}
\label{tab:simu-res-K2}
\end{table}

\begin{table}
\centering
\begin{tabular}{|ccccc|cc|cc|}
\hline
$p$ & $n$ & $s_1^*$ & $s_2^*$ &$\sigma^2$ & sens\_mu & sens\_si & prec\_mu & prec\_si\\
\hline
600 & 100 & 10 & 2 & 1 & 0.5976& 0.2551& 0.9907& 0.9328\\ 
600 & 100 & 10 & 5 & 1 & 0.495& 0.2007& 0.9795& 0.9269\\
600 & 100 & 25 & 2 & 1 & 0.3344& 0.066& 0.9877& 0.7662\\ 
600 & 100 & 25 & 5 & 1 & 0.1635& 0.0501& 0.9655& 0.7161\\
\hline
600 & 100 & 10 & 2 & 4 & 0.2263& 0.0657& 0.9408& 0.5503\\ 
600 & 100 & 10 & 5 & 4 & 0.1495& 0.0511& 0.8889& 0.539\\ 
600 & 100 & 25 & 2 & 4 & 0.0859& 0.0241& 0.8875& 0.4687\\ 
600 & 100 & 25 & 5 & 4 & 0.0611& 0.02037& 0.8263& 0.4428\\ 
\hline
1000 & 500 & 10 & 2 & 1 & 0.8181& 0.7062& 0.9936& 0.9999\\ 
1000 & 500 & 10 & 5 & 1 & 0.7921& 0.7025& 0.9887& 0.9998\\ 
1000 & 500 & 25 & 2 & 1 & 0.8261& 0.6927& 0.9974& 0.9999\\ 
1000 & 500 & 25 & 5 & 1 & 0.8101& 0.6916& 0.9938& 0.9999\\ 
\hline
1000 & 500 & 10 & 2 & 4 & 0.6641& 0.4593& 0.992& 0.9993\\ 
1000 & 500 & 10 & 5 & 4 & 0.6167& 0.454& 0.9865& 0.9996\\ 
1000 & 500 & 25 & 2 & 4 & 0.6776& 0.4362& 0.9967& 0.9998\\ 
1000 & 500 & 25 & 5 & 4 & 0.6503& 0.4301& 0.9935& 0.9998\\ 
\hline
\end{tabular}
\caption{Simulation results for five data sets ($K=5$). For each setting, the result is averaged over $500$ replicates.}
\label{tab:simu-res-K5}
\end{table}

\subsection{Computation Time for muSuSiE and SuSiE}\label{app:time}
\begin{table}
    \centering
    \begin{tabular}{|ccccc|ccc|ccc|}
    \hline
 $p$ & $n$ & $s_1^*$ & $s_2^*$ &$\sigma^2$ & $K$ & time\_mu & time\_si & $K$ & time\_mu & time\_si  \\
 \hline
600 & 100 & 10 & 2 & 1 & 2 & 1.75 &1.83 & 5 & 19.78& 4.42 \\
600 & 100 & 10 & 5 & 1 & 2 & 3.09 &2.56 & 5 & 33.2& 6.22 \\
600 & 100 & 25 & 2 & 1 & 2 & 5.18 &4.9  & 5 & 17.81& 12.18 \\
600 & 100 & 25 & 5 & 1 & 2 & 6.46 &5.41 & 5 & 126.6& 13.37 \\
\hline
600 & 100 & 10 & 2 & 4 & 2 & 1.43 &1.33  &5& 22.3& 3.13 \\ 
600 & 100 & 10 & 5 & 4 & 2 & 2.33 &1.77  &5& 35.97& 4.2 \\
600 & 100 & 25 & 2 & 4 & 2 & 4.33 &3.65  &5& 38.31& 8.98 \\
600 & 100 & 25 & 5 & 4 & 2 & 5.34 &4.08  &5& 526.8& 10.13 \\
\hline
1000 & 500 & 10 & 2 & 1 & 2 & 2.57 &2.45  &5& 15.57& 5.73 \\
1000 & 500 & 10 & 5 & 1 & 2 & 3.97 &3.28  &5& 33.23& 7.96 \\
1000 & 500 & 25 & 2 & 1 & 2 & 6.79 &8.7  &5& 31.74& 21.23 \\
1000 & 500 & 25 & 5 & 1 & 2 & 9.69 &10.62 &5& 284.6& 25.94 \\
\hline
1000 & 500 & 10 & 2 & 4 & 2 & 2.89 &2.63  &5& 18.41& 6.22 \\
1000 & 500 & 10 & 5 & 4 & 2 & 4.79 &3.64  &5& 34.18& 9.04 \\
1000 & 500 & 25 & 2 & 4 & 2 & 9.73 &10.66  &5& 38.11& 25.92 \\
1000 & 500 & 25 & 5 & 4 & 2 & 13.56 &13.25 & 5 & 259.2& 32.24 \\
\hline
\end{tabular}
\caption{Average computation time for muSuSiE and SuSiE measured in seconds. For each setting, the result is averaged over $500$ replicates.}
\label{tab:vs-time}
\end{table}

Table \ref{tab:vs-time} shows the average computation time of the muSuSiE and SuSiE methods for each setting across $500$ replicates. It is evident that the two methods take a similar amount of time when $K = 2$. However, as $K$ increases to $5$, the muSuSiE method takes more time than SuSiE. The SuSiE method's time increases linearly with respect to $K$, while the muSuSiE method's time increases exponentially. 

\subsection{Stability Analysis of muSuSiE}\label{app:stab}
\begin{table}
\renewcommand{\arraystretch}{1.2}
    \centering
    \begin{tabular}{|c|cc|}
    \hline
      prior   &  $p^{-\omega_1}$ & $p^{-\omega_2}$\\
      \hline
      prior 1   & $p^{-1.1} / 2$& $p^{-1.25}$\\
    prior 2   & $p^{-1.1}/2$ & $p^{-1.5}$\\
     prior 3   & $p^{-1.25}$ & $p^{-1.5}$\\
    prior 4   & $p^{-1.25}$ & $p^{-1.75}$\\
    prior 5   & $p^{-1.25}$ & $p^{-2}$\\
    prior 6   & $p^{-1.25} / 2$ & $p^{-1.5}$\\
    \hline
    \end{tabular}
    \caption{Prior hyperparameters for muSuSiE method with $K=2$ used in Figure~\ref{fig:stable}.}
    \label{tab:prior_vs}
\end{table}

\begin{figure}
    \centering
    \includegraphics[width=0.7\linewidth]{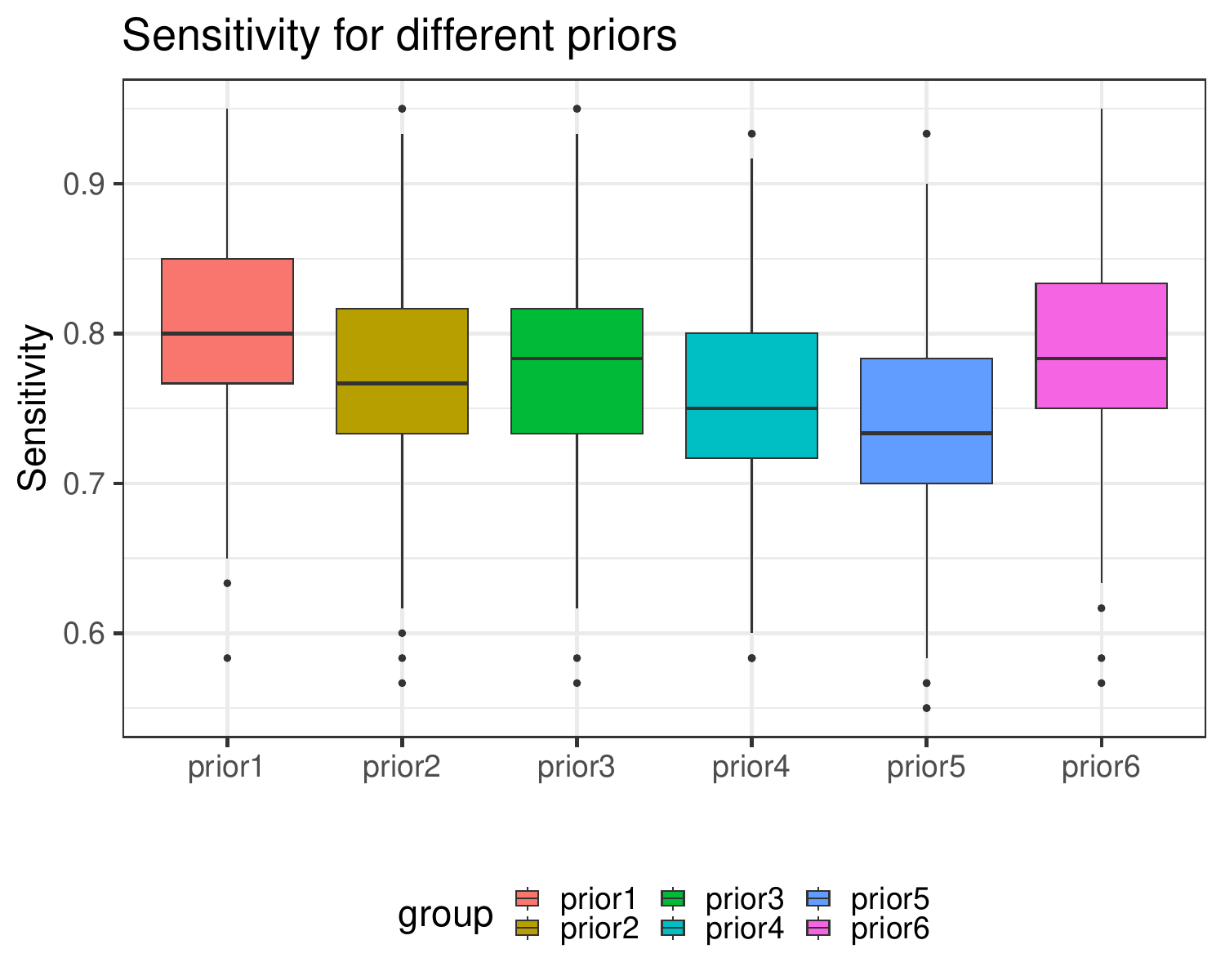} \\
    \includegraphics[width=0.7\linewidth]{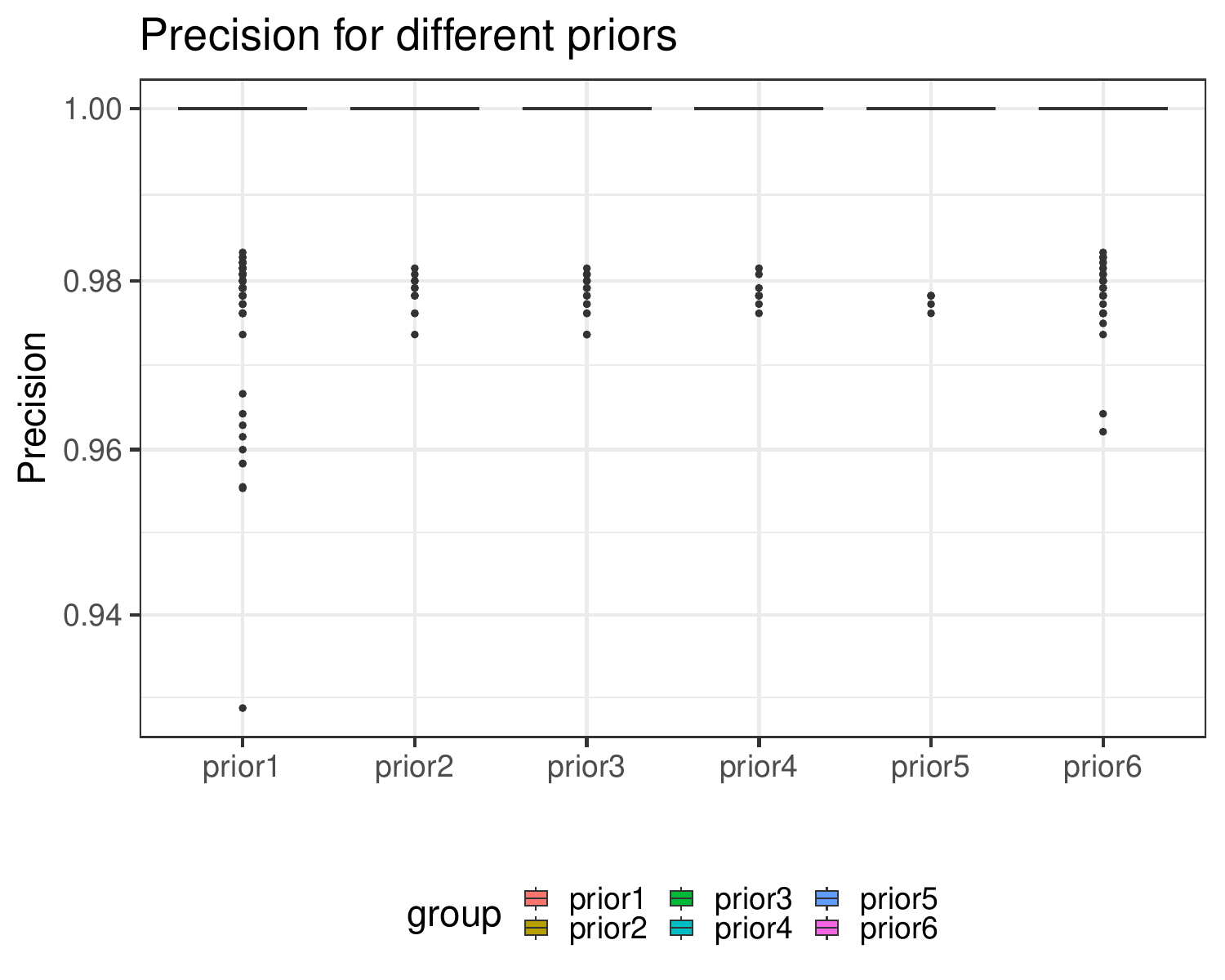} \\
    \caption{Simulation results for muSuSiE method with $K = 2$, $n = 500$, $p = 1000$, $\sigma^2 = 1$, $s_1^* = 25$ and $s_2^* = 5$. For each setting, the result is averaged over $500$ replicates.}
    \label{fig:stable} 
\end{figure} 

Consider the setting with $K = 2$, $n = 500$, $p = 1000$, $\sigma^2 = 1$, $s_1^* = 25$, and $s_2^* = 5$.
We evaluate the performance of our method with six different priors. Table \ref{tab:prior_vs} enumerates the six priors, and Figure \ref{fig:stable} shows the sensitivity and precision of the different priors across 500 replicates. We observe that the performance of muSuSiE is stable with respect to the choice of prior.

\subsection{Simulation Results for Joint MCMC and Separate MCMC}\label{app:mcmc}

Following \citet{yang2016computational, zhou2022rapid}, we consider the following posterior distribution for a single-task variable selection problem, 
\begin{equation}\label{eq:gpost}
    \pi(S) \propto p^{-(\kappa_0 + \kappa_1)|S|}\frac{1}{(1 + g(1 - r_S^2))^{n/2}},
\end{equation}
where $S\in[p]$ represents the set of activated covariates, $r_S^2$ refers to the standard R-squared statistic for the model $S$, and $\kappa_0$, $\kappa_1$ and $g$ are hyperparameters. 
In this simulation, we set $\kappa_0 = \kappa_1 = 1$ and $g = p^{2\kappa_1} - 1$. We run the Metropolis-Hasting (MH)  algorithm for each dataset separately in the ``separate MCMC'' method, only adding and deleting one covariate in each iteration. In each simulation, we run $5\times 10^4$ iterations, with the first $10^4$ samples for burn-in. 

For the ``joint MCMC'' method, we consider the following joint posterior distribution which is obtained by modifying the prior in~\eqref{eq:gpost}: 
\[\pi(\bsgamma) \propto \prod_{k=1}^K p^{-\omega_k a_k(\bsgamma)} \frac{1}{(1 + g(1 - r_{S_k(\bsgamma)}^2))^{n/2}}.\] 
In each iteration, we propose the next state by uniformly sampling one covariate $j\in[p]$ and a set $I$ from $2^{[K]} \setminus \emptyset$ and then flipping the $j$-th covariate's activation status in the data sets indexed by $I$. For $K = 2$, we set $\omega_1 = 2$ and $\omega_2 = 2.25$. 
For $K = 5$, we use $\omega_1 = 2$, $\omega_2 = 2.25$, $\omega_3 = 2.5$, $\omega_4 = 2.75$, and $\omega_5 = 3$. In each simulation, we run $5\times 10^4$ iterations, with the first $10^4$ samples for burn-in.

Tables \ref{tab:simu-mcmc-K2} and \ref{tab:simu-mcmc-K5} present the results of the two MCMC methods. 
``sens\_sep'' and ``prec\_sep'' represent the sensitivity and precision of the separate Metropolis-Hastings (MH) method, respectively; ``sens\_joint'' and ``prec\_joint'' denote the sensitivity and precision of the joint MH method, respectively. We observe that the joint method exhibits greater sensitivity in comparison to the separate method. When the sample size is small, the joint method also has higher precision.  
This observation aligns with analogous findings from the comparison between muSuSiE and SuSiE.  
When $K=2$, the joint MCMC approach almost always has lower sensitivity and precision than muSuSiE, except in the setting with $n=500, \sigma^2 = 1$. 
When $K = 5$, the performance of joint MCMC is comparable to that of muSuSiE: joint MCMC tends to have slightly higher sensitivity but lower precision. 
However, the two MCMC algorithms is considerably more time-consuming than the variational methods, as shown in Table~\ref{tab:vs-time-mcmc}. 

\begin{table}[h!]
\centering
\begin{tabular}{|ccccc|cc|cc|}
\hline
$p$ & $n$ & $s_1^*$ & $s_2^*$ &$\sigma^2$ & sens\_sep & sens\_joint & prec\_sep & prec\_joint\\
\hline
600 & 100 & 10 & 2 & 1 & 0.1687 &0.3933& 0.81& 0.9743\\ 
600 & 100 & 10 & 5 & 1 & 0.1111& 0.2619& 0.7447& 0.9247 \\
600 & 100 & 25 & 2 & 1 & 0.024& 0.075& 0.4258& 0.7789\\ 
600 & 100 & 25 & 5 & 1 & 0.0169& 0.0461& 0.357& 0.6419 \\
\hline
600 & 100 & 10 & 2 & 4 & 0.0324& 0.0718& 0.327& 0.567\\ 
600 & 100 & 10 & 5 & 4 & 0.0248& 0.0505& 0.305& 0.5242 \\
600 & 100 & 25 & 2 & 4 & 0.0097& 0.0228& 0.215& 0.41 \\
600 & 100 & 25 & 5 & 4 & 0.0066& 0.0147& 0.181& 0.3413 \\
\hline
1000 & 500 & 10 & 2 & 1 & 0.6872& 0.8148& 1& 0.9869 \\
1000 & 500 & 10 & 5 & 1 & 0.6765& 0.7875& 1& 0.9738 \\
1000 & 500 & 25 & 2 & 1 & 0.6659& 0.8304& 1& 0.9945 \\
1000 & 500 & 25 & 5 & 1 & 0.6637& 0.8114& 1& 0.9855 \\
\hline
1000 & 500 & 10 & 2 & 4 & 0.4148& 0.5916& 0.995& 0.9888\\ 
1000 & 500 & 10 & 5 & 4 & 0.4001& 0.5391& 1& 0.9752 \\
1000 & 500 & 25 & 2 & 4 & 0.359& 0.5808& 1& 0.9958 \\
1000 & 500 & 25 & 5 & 4 & 0.3491& 0.5567& 1& 0.9893\\
\hline
\end{tabular}
\caption{Simulation results for standard MCMC method for two data sets ($K=2$). For each setting, the result is averaged over $500$ replicates.}
\label{tab:simu-mcmc-K2}
\end{table}

\begin{table}[h!]
\centering
\begin{tabular}{|ccccc|cc|cc|}
\hline
$p$ & $n$ & $s_1^*$ & $s_2^*$ &$\sigma^2$ & sens\_sep & sens\_joint & prec\_sep & prec\_joint\\
\hline
600 & 100 & 10 & 2 & 1 & 0.167& 0.6832& 0.812& 0.9475 \\
600 & 100 & 10 & 5 & 1 & 0.1098& 0.5613& 0.7385& 0.9049 \\
600 & 100 & 25 & 2 & 1 & 0.0242& 0.5713& 0.4403& 0.9797 \\
600 & 100 & 25 & 5 & 1 & 0.018& 0.3401& 0.3776& 0.9657 \\
\hline
600 & 100 & 10 & 2 & 4 & 0.0324& 0.2788& 0.3224& 0.9441 \\
600 & 100 & 10 & 5 & 4 & 0.0246& 0.1861& 0.3092& 0.8979\\ 
600 & 100 & 25 & 2 & 4 & 0.0091& 0.1285& 0.214& 0.9316\\ 
600 & 100 & 25 & 5 & 4 & 0.0071& 0.0902& 0.1888& 0.8865 \\
\hline
1000 & 500 & 10 & 2 & 1 & 0.6847& 0.8311& 1& 0.9237\\ 
1000 & 500 & 10 & 5 & 1 & 0.6802& 0.8016& 1& 0.8537\\ 
1000 & 500 & 25 & 2 & 1 & 0.6636& 0.8413& 1& 0.9645\\ 
1000 & 500 & 25 & 5 & 1 & 0.6599& 0.8233& 1& 0.921\\
\hline
1000 & 500 & 10 & 2 & 4 & 0.4102& 0.7161& 0.9972& 0.9403\\ 
1000 & 500 & 10 & 5 & 4 & 0.3993& 0.6565& 0.9996& 0.8851\\ 
1000 & 500 & 25 & 2 & 4 & 0.3584& 0.7424& 0.9996& 0.9741\\ 
1000 & 500 & 25 & 5 & 4 & 0.3476& 0.709& 1& 0.9416\\ 
\hline
\end{tabular}
\caption{Simulation results for standard MCMC method for five data sets ($K=5$). For each setting, the result is averaged over $500$ replicates.}
\label{tab:simu-mcmc-K5}
\end{table}

\begin{table}
    \centering
\begin{tabular}{|ccccc|ccc|ccc|}
\hline
 $p$ & $n$ & $s_1^*$ & $s_2^*$ &$\sigma^2$ & $K$ & time\_joint & time\_sep & $K$ & time\_joint & time\_sep  \\
 \hline
600 & 100 & 10 & 2 & 1 & 2 & 9.4 & 6.3 & 5 & 23.2 & 8.6\\ 
600 & 100 & 10 & 5 & 1 & 2 & 9.3 & 6.3 & 5 & 22.8 & 8.6\\ 
600 & 100 & 25 & 2 & 1 & 2 & 8.9 & 6.3 & 5 & 21.9 & 8.6\\
600 & 100 & 25 & 5 & 1 & 2 & 8.8 & 6.2 & 5 & 21.7 & 8.6\\
\hline
600 & 100 & 10 & 2 & 4 & 2 & 8.7 & 6.2 & 5 & 21.6 & 8.6\\ 
600 & 100 & 10 & 5 & 4 & 2 & 8.7 & 6.2 & 5 & 21.5 & 8.6 \\
600 & 100 & 25 & 2 & 4 & 2 & 8.7 & 6.2 & 5 & 21.4 & 8.6\\ 
600 & 100 & 25 & 5 & 4 & 2 & 8.6 & 6.2 & 5 & 21.3 & 8.5\\
\hline
1000 & 500 & 10 & 2 & 1 & 2 & 15.8 & 10.5 & 5 & 39.4 & 14.3\\
1000 & 500 & 10 & 5 & 1 & 2 & 17.1 & 10.5 & 5 & 43.2 & 14.4\\
1000 & 500 & 25 & 2 & 1 & 2 & 22.9 & 10.6 & 5 & 43.7 & 14.6\\
1000 & 500 & 25 & 5 & 1 & 2 & 24.4 & 10.7 & 5 & 20.0 & 14.7\\
\hline
1000 & 500 & 10 & 2 & 4 & 2 & 13.8 & 10.4 & 5 & 34.2 & 14.3\\
1000 & 500 & 10 & 5 & 4 & 2 & 14.5 & 10.5 & 5 & 35.9 & 14.3\\
1000 & 500 & 25 & 2 & 4 & 2 & 16.8 & 10.6 & 5 & 42.3 & 14.5\\ 
1000 & 500 & 25 & 5 & 4 & 2 & 17.5 & 10.6 & 5 & 43.5 & 14.5\\ 
\hline
\end{tabular}
\caption{Average computation time for separate MCMC and joint MCMC measured in minutes. For each setting, the result is averaged over $500$ replicates.}
\label{tab:vs-time-mcmc}
\end{table}

\subsection{Simulation Results for LASSO}\label{app:lasso}

Tables \ref{tab:simu-res-K2-l} and \ref{tab:simu-res-K5-l} show the results obtained using the LASSO method. In comparison with the Bayesian variable selection method, the LASSO method displays higher sensitivity, particularly when the sample size is small ($n = 100$). However, the precision of the LASSO method is significantly lower than that of the Bayesian variable selection method. This indicates that the LASSO method tends to select many non-activated covariates. 

\begin{table}
    \centering
    \begin{tabular}{|ccccc|cc|}
    \hline
 $p$ & $n$ & $s_1^*$ & $s_2^*$ &$\sigma^2$ & sens & prec\\
 \hline
600 & 100 & 10 & 2 & 1 & 0.6608 &0.2227 \\
600 & 100 & 10 & 5 & 1 & 0.6443 &0.2256 \\
600 & 100 & 25 & 2 & 1 & 0.556 &0.2597 \\
600 & 100 & 25 & 5 & 1 & 0.5274 &0.2678 \\
\hline
600 & 100 & 10 & 2 & 4 & 0.397 &0.2661 \\
600 & 100 & 10 & 5 & 4 & 0.3817 &0.2584 \\
600 & 100 & 25 & 2 & 4 & 0.3494 &0.285 \\
600 & 100 & 25 & 5 & 4 & 0.3262 &0.2989 \\
\hline
1000 & 500 & 10 & 2 & 1 & 0.8683 &0.2077 \\
1000 & 500 & 10 & 5 & 1 & 0.8733 &0.2105 \\
1000 & 500 & 25 & 2 & 1 & 0.8859 &0.2264 \\
1000 & 500 & 25 & 5 & 1 & 0.8825 &0.2277 \\
\hline
1000 & 500 & 10 & 2 & 4 & 0.7414 &0.2142 \\
1000 & 500 & 10 & 5 & 4 & 0.7433 &0.2199 \\
1000 & 500 & 25 & 2 & 4 & 0.7687 &0.2353 \\
1000 & 500 & 25 & 5 & 4 & 0.7678 &0.2324 \\
\hline
\end{tabular}
\caption{Lasso results for two data sets ($K=2$). For each setting, the result is averaged over $500$ replicates.}
\label{tab:simu-res-K2-l}
\end{table}

\begin{table}
    \centering
    \begin{tabular}{|ccccc|cc|}
    \hline
 $p$ & $n$ & $s_1^*$ & $s_2^*$ &$\sigma^2$ & sens & prec\\
 \hline
600 & 100 & 10 & 2 & 1 & 0.6587 &0.2185 \\
600 & 100 & 10 & 5 & 1 & 0.6409 &0.226 \\
600 & 100 & 25 & 2 & 1 & 0.5567 &0.2572 \\
600 & 100 & 25 & 5 & 1 & 0.5269 &0.2689 \\
\hline
600 & 100 & 10 & 2 & 4 & 0.3931 &0.2633 \\
600 & 100 & 10 & 5 & 4 & 0.3842 &0.2591 \\
600 & 100 & 25 & 2 & 4 & 0.3502 &0.2833 \\
600 & 100 & 25 & 5 & 4 & 0.331 &0.2969 \\
\hline
1000 & 500 & 10 & 2 & 1 & 0.8709 &0.2069 \\
1000 & 500 & 10 & 5 & 1 & 0.8758 &0.2101 \\
1000 & 500 & 25 & 2 & 1 & 0.8837 &0.2257 \\
1000 & 500 & 25 & 5 & 1 & 0.881 &0.2281 \\
\hline
1000 & 500 & 10 & 2 & 4 & 0.741 &0.2133 \\
1000 & 500 & 10 & 5 & 4 & 0.7474 &0.2174 \\
1000 & 500 & 25 & 2 & 4 & 0.7654 &0.233 \\
1000 & 500 & 25 & 5 & 4 & 0.7651 &0.2327 \\
\hline
\end{tabular}
\caption{Lasso results for five data sets ($K=5$). For each setting, the result is averaged over $500$ replicates.}
\label{tab:simu-res-K5-l}
\end{table}

\clearpage 
\newpage
\section{More Simulation Results for Differential DAG Analysis}\label{appx:dags-simu} 
Recall that we use simulation studies to compare the performance of six methods for joint estimation of multiple DAG models: PC, GES, joint GES, MpenPC, JESC and muSuSiE-DAG.  We use $N_{\text{com}}$ to denote the number of shared edges and $N_{\text{pri}}$ to denote the number of edges unique to each data set. 
In each simulation setting, we report the average number of wrong edges $N_{\text{wrong}}$, the average true positive (TP) rate and the average false positive (FP) rate by ignoring edge directions. In addition, we introduce the fourth measurement metric, the squared Frobenius Norm (F-norm) between the true adjacency matrix and estimated adjacency matrix, which can be calculated as follows. 
For PC, GES and joint GES method, we let $\hat{\bR}^{(k)} \in \{0, 1\}^{p \times p}$ be the estimated adjacency matrix such that $\hat{R}^{(k)}_{ij} = 1$ if the edge $(i, j)$ is in the estimated DAG for the $k$-th data set and $\hat{R}^{(k)}_{ij} = 0$ otherwise. 
For muSuSiE-DAG, we let $\hat{\bR}^{(k)}$ be as defined in~\eqref{eq:def.bR} where each entry is the estimated probability of the edge and thus takes value in $[0, 1]$. Let $( \bR^{(k)} )_{k = 1}^K$ be the true adjacency matrices where $R^{(k)}_{ij} = 1$ if the edge $(i, j)$ is in the $k$-th true DAG model and $R^{(k)}_{ij} = 0$ otherwise.  For each method, the F-norm metric is defined as 
\begin{align*}
    \sum_{i = 1}^{p - 1} \sum_{j = i+1}^p  \left(  \hat{R}^{(k)}_{ij} + \hat{R}^{(k)}_{ji} -   R^{(k)}_{ij} - R^{(k)}_{ji}  \right)^2.   
\end{align*}
For PC, GES and joint GES, this is equivalent to counting the number of wrong edges. But for muSuSiE-DAG, this statistic in general is different from $N_{\rm{wrong}}$. 
Table \ref{tab:simu_sepa_pc} shows the results for the PC method, Table \ref{tab:simu_sepa_ges} for GES and Table \ref{tab:simu_joint_ges} for joint GES.
Note that the values of the tuning parameters are also given in the corresponding tables, including the significance level $\alpha$ used in the conditional independent tests  for the PC method, and $l_0$-penalization parameter $\lambda$ for GES and joint GES. 
For muSuSiE-DAG, the choice of the parameters $\omega_1, \dots, \omega_K$ is detailed in Table \ref{tab:prior_inform_2} (for $K = 2$) and Table \ref{tab:prior_inform_5} (for $K=5$). 
Table \ref{tab:simu_muSuSiEDAG} shows the results for our muSuSiEDAG method. Table \ref{tab:sim-DAG-K5} compares results for all methods when $K = 5$, where for each method we use the optimal tuning parameter.

\begin{table}[h!]
\renewcommand{\arraystretch}{1.2}
    \centering
    \begin{tabular}{|c|cc|}
    \hline
      prior   &  $p^{-\omega_1}$ & $p^{-\omega_2}$\\
      \hline
      prior 1   & $p^{-1.25} / 2$& $p^{-1.5}$\\
    prior 2   & $p^{-1.5}/2$ & $p^{-2}$\\
     prior 3   & $p^{-2}$ & $p^{-2.25}$\\
    prior 4   & $p^{-2}/2$ & $p^{-2.25}$\\
    \hline
    \end{tabular}
    \caption{Prior hyperparameters for muSuSiE-DAG for $K=2$.}
    \label{tab:prior_inform_2}
\end{table}

\begin{table}
\renewcommand{\arraystretch}{1.2}
    \centering
    \begin{tabular}{|c|ccccc|}
    \hline
      prior   & $p^{-\omega_1}$ & $p^{-\omega_2}$ & $p^{-\omega_3}$ & $p^{-\omega_4}$ & $p^{-\omega_5}$ \\
      \hline
       prior 1  & $ p^{-1.5} / 5$ & $p^{-1.75}/10$ & $p^{-2}/10$ & $p^{-2.25}/5$ & $p^{-2.5}$\\
    prior 2  & $p^{-1.75}$&$p^{-2}$&$p^{-2.25}$&$p^{-2.5}$&$p^{-3}$\\
    prior 3  & $p^{-1.75} / 5$ & $p^{-2}/10$ &$p^{-2.25}/10$ & $p^{-2.5}/5$ & $p^{-3}$\\
    prior 4  & $p^{-2}$ & $p^{-2.25}$ &$p^{-2.5}$ &$p^{-2.75}$&$p^{-3}$\\
    \hline
    \end{tabular}
    \caption{Prior hyperparameters for muSuSiE-DAG for $K=5$.}
    \label{tab:prior_inform_5}
\end{table}

As expected, joint methods, joint GES and muSuSiEDAG, have much larger true positive rates and slightly larger false positive rates than the two separate-analysis methods, PC and GES.  
This is because the joint method can identify edges with low signal strength if it is expressed concurrently in all $K$ data sets, which leads to a higher true positive rate, while the joint method may also identify edges with extremely high signal strength in a single data set as expressed simultaneously in more than one data sets, resulting in a higher false positive rate. 
Additionally, when the ratio $N_{\text{com}}/N_{\text{pri}}$ is large,  implying that the majority of edges are shared, the joint method outperforms the other two by a large margin. When the ratio $N_{\text{com}}/N_{\text{pri}}$ is small, GES may even outperform joint GES. In all cases, our muSuSiE-DAG method has the best performance in terms of the metric $N_{\rm{wrong}}$. We also observe that the results may depend on the choice of the prior parameters (though not significantly), which suggests that in reality one may want to tune the parameters to improve the performance of the algorithm.

\begin{table}
    \centering
    \begin{tabular}{|c|ccc|ccc|}
    \hline
       method  &  K & $N_{\text{com}}$ & $N_{\text{pri}}$ & $N_{\text{wrong}}$ & TP & FP \\
       \hline
       PC  & 5& 100 & 20 & 31.844 & 0.7504 & 4e-04  \\ 
       GES & 5 & 100 & 20 & 23.108 & 0.8178 & 3e-04\\
       joint GES & 5 & 100 & 20& 26.828 & 0.8952 & 0.0029 \\ 
        MPenPC & 5 & 100 & 20 & 194.24 & 0.8955 & 0.0372 \\
       JESC & 5 & 100 & 20 & 33.24 & 0.9063 & 0.0045 \\
       muSuSiE-DAG & 5 & 100 & 20 & \textbf{17.064} & 0.9058 & 0.0012\\
       \hline
       PC & 5 & 100 & 50 & 43.904& 0.7044 & 3e-04 \\
       GES & 5 & 100 & 50 & 30.196 & 0.8157 & 5e-4\\
      joint GES & 5 & 100 & 50& 36.464 & 0.8794 & 0.0038 \\ MPenPC & 5 & 100 & 50 & 153.788 & 0.8654 & 0.0275 \\
       JESC & 5 & 100 & 50 & 34.996 & 0.9118 & 0.0045 \\
    muSuSiE-DAG & 5 & 100 & 50 & \textbf{28.572} & 0.8755 & 0.002\\
       \hline
       PC & 5 & 50 & 50 & 25.116 & 0.7309 & 1e-04\\
       GES & 5 & 50 & 50 & 19.288 & 0.8166 & 2e-04\\
      joint GES & 5 & 50 & 50& 32.836 & 0.8492 & 0.0036\\
       MPenPC & 5 & 50 & 50 & 224.876 & 0.9098 & 0.0441 \\
       JESC & 5 & 50 & 50 & 31.496 & 0.909 & 0.0046 \\ 
    muSuSiE-DAG & 5 & 50 & 50 & \textbf{18.58} & 0.8529 & 8e-04\\
       \hline
    \end{tabular}
    \caption{Simulation results for joint estimation of multiple DAG models with $K = 5$.}
    \label{tab:sim-DAG-K5}
\end{table}

\begin{table}
\centering
\begin{tabular}{|cccc|cccc|}
\hline
$K$& $\alpha$ & $N_{\text{com}}$ & $N_{\text{pri}}$& $N_{\text{wrong}}$ & TP & FP &F-norm\\
\hline
2 & 1e-04 & 100 & 20 & 38.52 & 0.68 & 0 & 38.52 \\ 
2 & 5e-04 & 100 & 20 & 34.1 & 0.7183 & 1e-04 & 34.1 \\ 
2 & 0.001 & 100 & 20 & 32.04 & 0.7373 & 1e-04 & 32.04 \\ 
2 & 0.005 & 100 & 20 & \textbf{28.29} & 0.7822 & 4e-04 & 28.29 \\ 
2 & 0.01 & 100 & 20 & 28.55 & 0.8009 & 0.001 & 28.55 \\ 
2 & 0.05 & 100 & 20 & 45.83 & 0.8523 & 0.0058 & 45.83 \\ 
\hline
2 & 1e-04 & 100 & 50 & 54.04 & 0.6404 & 0 & 54.04 \\ 
2 & 5e-04 & 100 & 50 & 47.93 & 0.6823 & 1e-04 & 47.93 \\ 
2 & 0.001 & 100 & 50 & 45.32 & 0.7007 & 1e-04 & 45.32 \\ 
2 & 0.005 & 100 & 50 & \textbf{39.37} & 0.7475 & 3e-04 & 39.37 \\ 
2 & 0.01 & 100 & 50 & 37.93 & 0.7674 & 6e-04 & 37.93 \\ 
2 & 0.05 & 100 & 50 & 44.9 & 0.8225 & 0.0038 & 44.9 \\ 
\hline
2 & 1e-04 & 50 & 50 & 28.13 & 0.7192 & 0 & 28.13 \\ 
2 & 5e-04 & 50 & 50 & 24.59 & 0.7572 & 1e-04 & 24.59 \\ 
2 & 0.001 & 50 & 50 & 23.44 & 0.7723 & 1e-04 & 23.44 \\ 
2 & 0.005 & 50 & 50 & \textbf{21.9} & 0.8121 & 6e-04 & 21.9 \\ 
2 & 0.01 & 50 & 50 & 23.34 & 0.8312 & 0.0013 & 23.34 \\ 
2 & 0.05 & 50 & 50 & 47.78 & 0.8796 & 0.0073 & 47.78 \\ 
\hline
5 & 1e-04 & 100 & 20 & 43.108 & 0.6413 & 0 & 43.108 \\ 
5 & 5e-04 & 100 & 20 & 38.428 & 0.6816 & 0 & 38.428 \\ 
5 & 0.001 & 100 & 20 & 36.264 & 0.701 & 1e-04 & 36.264 \\ 
5 & 0.005 & 100 & 20 & 32.044 & 0.7504 & 4e-04 & 32.044 \\ 
5 & 0.01 & 100 & 20 & \textbf{31.844} & 0.7725 & 9e-04 & 31.844 \\ 
5 & 0.05 & 100 & 20 & 47.924 & 0.8286 & 0.0056 & 47.924 \\ 
\hline
5 & 1e-04 & 100 & 50 & 62.304 & 0.5854 & 0 & 62.304 \\ 
5 & 5e-04 & 100 & 50 & 55.508 & 0.6313 & 0 & 55.508 \\ 
5 & 0.001 & 100 & 50 & 52.528 & 0.6521 & 1e-04 & 52.528 \\ 
5 & 0.005 & 100 & 50 & 45.78 & 0.7044 & 3e-04 & 45.78 \\ 
5 & 0.01 & 100 & 50 & \textbf{43.904} & 0.7276 & 6e-04 & 43.904 \\ 
5 & 0.05 & 100 & 50 & 50.608 & 0.7897 & 0.0039 & 50.608 \\ 
\hline
5 & 1e-04 & 50 & 50 & 33 & 0.6703 & 0 & 33 \\ 
5 & 5e-04 & 50 & 50 & 29.064 & 0.7114 & 0 & 29.064 \\ 
5 & 0.001 & 50 & 50 & 27.424 & 0.7309 & 1e-04 & 27.424 \\ 
5 & 0.005 & 50 & 50 & \textbf{25.116} & 0.779 & 6e-04 & 25.116 \\ 
5 & 0.01 & 50 & 50 & 26.036 & 0.8002 & 0.0012 & 26.036 \\ 
5 & 0.05 & 50 & 50 & 49.848 & 0.8535 & 0.0072 & 49.848 \\ 
\hline
\end{tabular}
\caption{Simulation results for the PC algorithm. $N_{\rm{wrong}}$ is the same as F-norm by definition.}
\label{tab:simu_sepa_pc}
\end{table}

\begin{table}
\centering
\begin{tabular}{|cccc|cccc|}
\hline
$K$& $\lambda$ & $N_{\text{com}}$ & $N_{\text{pri}}$& $N_{\text{wrong}}$ & TP & FP &F-norm\\
\hline
2 & 1 & 100 & 20 & 32.47 & 0.9152 & 0.0046 & 32.47 \\ 
2 & 2 & 100 & 20 & \textbf{19.67} & 0.8482 & 3e-04 & 19.67 \\ 
2 & 3 & 100 & 20 & 26.76 & 0.7837 & 2e-04 & 26.76 \\ 
2 & 4 & 100 & 20 & 33.26 & 0.7269 & 1e-04 & 33.26 \\ 
2 & 5 & 100 & 20 & 39.03 & 0.6777 & 1e-04 & 39.03 \\ 
\hline
2 & 1 & 100 & 50 & 35.26 & 0.9191 & 0.0048 & 35.26 \\ 
2 & 2 & 100 & 50 & \textbf{24.84} & 0.8505 & 5e-04 & 24.84 \\ 
2 & 3 & 100 & 50 & 33.12 & 0.7895 & 3e-04 & 33.12 \\ 
2 & 4 & 100 & 50 & 40.31 & 0.7383 & 2e-04 & 40.31 \\ 
2 & 5 & 100 & 50 & 47.35 & 0.6893 & 2e-04 & 47.35 \\ 
\hline
2 & 1 & 50 & 50 & 29 & 0.9223 & 0.0043 & 29 \\ 
2 & 2 & 50 & 50 & \textbf{15.74} & 0.8514 & 2e-04 & 15.74 \\ 
2 & 3 & 50 & 50 & 21.15 & 0.7925 & 1e-04 & 21.15 \\ 
2 & 4 & 50 & 50 & 26.47 & 0.7374 & 0 & 26.47 \\ 
2 & 5 & 50 & 50 & 31.16 & 0.6904 & 0 & 31.16 \\ 
\hline
5 & 1 & 100 & 20 & 35.548 & 0.8955 & 0.0047 & 35.548 \\ 
5 & 2 & 100 & 20 & \textbf{23.108} & 0.8178 & 3e-04 & 23.108 \\ 
5 & 3 & 100 & 20 & 31.008 & 0.7466 & 1e-04 & 31.008 \\ 
5 & 4 & 100 & 20 & 38.44 & 0.6824 & 1e-04 & 38.44 \\ 
5 & 5 & 100 & 20 & 46.012 & 0.6184 & 0 & 46.012 \\ 
\hline
5 & 1 & 100 & 50 & 40.36 & 0.8953 & 0.0051 & 40.36 \\ 
5 & 2 & 100 & 50 & \textbf{30.196} & 0.8157 & 5e-04 & 30.196 \\ 
5 & 3 & 100 & 50 & 39.664 & 0.744 & 3e-04 & 39.664 \\ 
5 & 4 & 100 & 50 & 48.988 & 0.6791 & 2e-04 & 48.988 \\ 
5 & 5 & 100 & 50 & 58.332 & 0.615 & 1e-04 & 58.332 \\  
\hline
5 & 1 & 50 & 50 & 32.58 & 0.8976 & 0.0046 & 32.58 \\ 
5 & 2 & 50 & 50 & \textbf{19.288} & 0.8166 & 2e-04 & 19.288 \\ 
5 & 3 & 50 & 50 & 25.852 & 0.7453 & 1e-04 & 25.852 \\ 
5 & 4 & 50 & 50 & 32.2 & 0.68 & 0 & 32.2 \\ 
5 & 5 & 50 & 50 & 38.18 & 0.6195 & 0 & 38.18 \\ 
\hline
\end{tabular}
\caption{Simulation results for the GES method. $N_{\rm{wrong}}$ is the same as F-norm by definition.}
\label{tab:simu_sepa_ges}
\end{table}

\begin{table}
\centering
\begin{tabular}{|cccc|cccc|}
\hline
$K$& $\lambda$ & $N_{\text{com}}$ & $N_{\text{pri}}$& $N_{\text{wrong}}$ & TP & FP &F-norm\\
\hline
2 & 1 & 100 & 20 & 38.84 & 0.9172 & 0.0059 & 38.84 \\ 
2 & 2 & 100 & 20 & \textbf{15.4} & 0.9126 & 0.001 & 15.4 \\ 
2 & 3 & 100 & 20 & 16.48 & 0.8957 & 8e-04 & 16.48 \\ 
2 & 4 & 100 & 20 & 18.62 & 0.875 & 7e-04 & 18.62 \\ 
2 & 5 & 100 & 20 & 20.93 & 0.8531 & 7e-04 & 20.93 \\ 
\hline
2 & 1 & 100 & 50 & 46.5 & 0.9179 & 0.007 & 46.5 \\ 
2 & 2 & 100 & 50 & \textbf{24.7} & 0.9003 & 0.002 & 24.7 \\ 
2 & 3 & 100 & 50 & 25.72 & 0.879 & 0.0016 & 25.72 \\ 
2 & 4 & 100 & 50 & 28.65 & 0.8554 & 0.0014 & 28.65 \\ 
2 & 5 & 100 & 50 & 31.98 & 0.8295 & 0.0013 & 31.98 \\ 
\hline
2 & 1 & 50 & 50 & 38.28 & 0.9042 & 0.0059 & 38.28 \\ 
2 & 2 & 50 & 50 & \textbf{22.91} & 0.883 & 0.0023 & 22.91 \\ 
2 & 3 & 50 & 50 & 24.68 & 0.8536 & 0.002 & 24.68 \\ 
2 & 4 & 50 & 50 & 27.78 & 0.8186 & 0.002 & 27.78 \\ 
2 & 5 & 50 & 50 & 30.53 & 0.788 & 0.0019 & 30.53 \\ 
\hline
5 & 1 & 100 & 20 & 74.608 & 0.8935 & 0.0127 & 74.608 \\ 
5 & 2 & 100 & 20 & 26.828 & 0.8952 & 0.0029 & 26.828 \\ 
5 & 3 & 100 & 20 & \textbf{23.332} & 0.8877 & 0.002 & 23.332 \\ 
5 & 4 & 100 & 20 & 24.196 & 0.8773 & 0.0019 & 24.196 \\ 
5 & 5 & 100 & 20 & 24.728 & 0.8698 & 0.0019 & 24.728 \\  
\hline
5 & 1 & 100 & 50 & 90.516 & 0.8964 & 0.0155 & 90.516 \\ 
5 & 2 & 100 & 50 & 36.464 & 0.8794 & 0.0038 & 36.464 \\ 
5 & 3 & 100 & 50 & \textbf{35.704} & 0.8591 & 0.003 & 35.704 \\ 
5 & 4 & 100 & 50 & 38.364 & 0.8399 & 0.003 & 38.364 \\ 
5 & 5 & 100 & 50 & 40.32 & 0.8222 & 0.0028 & 40.32 \\
\hline
5 & 1 & 50 & 50 & 70.392 & 0.8726 & 0.0118 & 70.392 \\ 
5 & 2 & 50 & 50 & \textbf{32.836} & 0.8492 & 0.0036 & 32.836 \\ 
5 & 3 & 50 & 50 & 33.716 & 0.8198 & 0.0032 & 33.716 \\ 
5 & 4 & 50 & 50 & 36.356 & 0.7922 & 0.0032 & 36.356 \\ 
5 & 5 & 50 & 50 & 39.744 & 0.7621 & 0.0033 & 39.744 \\ 
\hline
\end{tabular}
\caption{Simulation results for the joint GES method. $N_{\rm{wrong}}$ is the same as F-norm by definition.}
\label{tab:simu_joint_ges}
\end{table}

\begin{table}
\centering
\begin{tabular}{|cccc|cccc|}
\hline
$K$& prior & $N_{\text{com}}$ & $N_{\text{pri}}$& $N_{\text{wrong}}$ & TP & FP &F-norm\\
\hline
2 & prior 1 & 100 & 20 & 18.63 & 0.9248 & 0.002 & 16.6747 \\ 
2 & prior 2 & 100 & 20 & 14.77 & 0.9052 & 7e-04 & 12.6597 \\ 
2 & prior 3 & 100 & 20 & 13.03 & 0.9081 & 4e-04 & 10.9038 \\ 
2 & prior 4 & 100 & 20 & \textbf{12.91} & 0.9138 & 5e-04 & 11.1051 \\ 
\hline
2 & prior 1 & 100 & 50 & 27.17 & 0.9103 & 0.0028 & 24.693 \\ 
2 & prior 2 & 100 & 50 & 19.84 & 0.8983 & 9e-04 & 17.2605 \\ 
2 & prior 3 & 100 & 50 & 19.56 & 0.8933 & 7e-04 & 17.0741 \\ 
2 & prior 4 & 100 & 50 & \textbf{18.45} & 0.9003 & 7e-04 & 16.45 \\ 
\hline
2 & prior 1 & 50 & 50 & 17.84 & 0.8995 & 0.0016 & 16.547 \\ 
2 & prior 2 & 50 & 50 & 15.03 & 0.8819 & 7e-04 & 13.2731 \\ 
2 & prior 3 & 50 & 50 & 15.4 & 0.8727 & 5e-04 & 13.6451 \\ 
2 & prior 4 & 50 & 50 & \textbf{15.03} & 0.8762 & 5e-04 & 13.4162 \\
\hline
5 & prior 1 & 100 & 20 & 19.684 & 0.9243 & 0.0022 & 16.9472 \\ 
5 & prior 2 & 100 & 20 & 26.64 & 0.8808 & 0.0025 & 23.518 \\ 
5 & prior 3 & 100 & 20 & \textbf{17.064} & 0.9058 & 0.0012 & 14.2844 \\ 
5 & prior 4 & 100 & 20 & 20.636 & 0.8955 & 0.0017 & 17.7576 \\ 
\hline
5 & prior 1 & 100 & 50 & 30.74 & 0.8913 & 0.003 & 27.2652 \\ 
5 & prior 2 & 100 & 50 & 43.292 & 0.8515 & 0.0043 & 39.0037 \\ 
5 & prior 3 & 100 & 50 & \textbf{28.572} & 0.8755 & 0.002 & 25.0959 \\ 
5 & prior 4 & 100 & 50 & 32.144 & 0.8679 & 0.0025 & 28.5295 \\ 
\hline
5 & prior 1 & 50 & 50 & 19.684 & 0.8681 & 0.0013 & 17.643 \\ 
5 & prior 2 & 50 & 50 & 25.832 & 0.8418 & 0.002 & 23.5776 \\ 
5 & prior 3 & 50 & 50 & \textbf{18.58} & 0.8529 & 8e-04 & 16.5723 \\ 
5 & prior 4 & 50 & 50 & 21.304 & 0.85 & 0.0013 & 19.215 \\ 
\hline
\end{tabular}
\caption{Simulation results for the muSuSiE-DAG method.}
\label{tab:simu_muSuSiEDAG}
\end{table}

\begin{table}
    \centering
    \begin{tabular}{|cc|cc|}
    \hline
    $N_{\text{com}}$ &  $N_{\text{pri}}$ & Joint GES & muSuSiE-DAG \\
    \hline
50 & 50 & 0.3268 & 4.1788 \\
100 & 50 & 0.4663 & 5.0471 \\
100 & 20 & 0.3173 & 4.3901 \\
\hline
    \end{tabular}
    \caption{Average computation time for the joint GES and muSuSiE-DAG method for $K = 2$ measured in hours.}
    \label{tab:simu_dag_time}
\end{table}

\subsection{Convergence of MCMC}\label{app:con-mcmc}
The structure learning is by nature computationally very expensive. In order to demonstrate the convergence of our MCMC algorithm, we simulate one instance of $(\cG^{(k)})_{k=1}K$  and $(\bX^{(k)})_{k=1}^K$
with $K = 2$, $n_{\text{com}} = 50$, and $n_{\text{pri}} = 50$. 
We run the algorithm 50 times (for the same data set), each with a maximum of $10^6$ iterations. The log-likelihood with respect to the number of iterations is depicted in Figure \ref{fig:mcmc-covg}. It can be observed that the algorithm converges after approximately $6\times10^5$ iterations, which is relatively large. The simulation study presented in Section~\ref{sec:dags-simu} uses a total of $10^5$ MCMC iterations, which appears to be sufficient for yielding satisfactory results. 

\begin{figure}
    \centering
    \includegraphics[width = 0.8\textwidth]{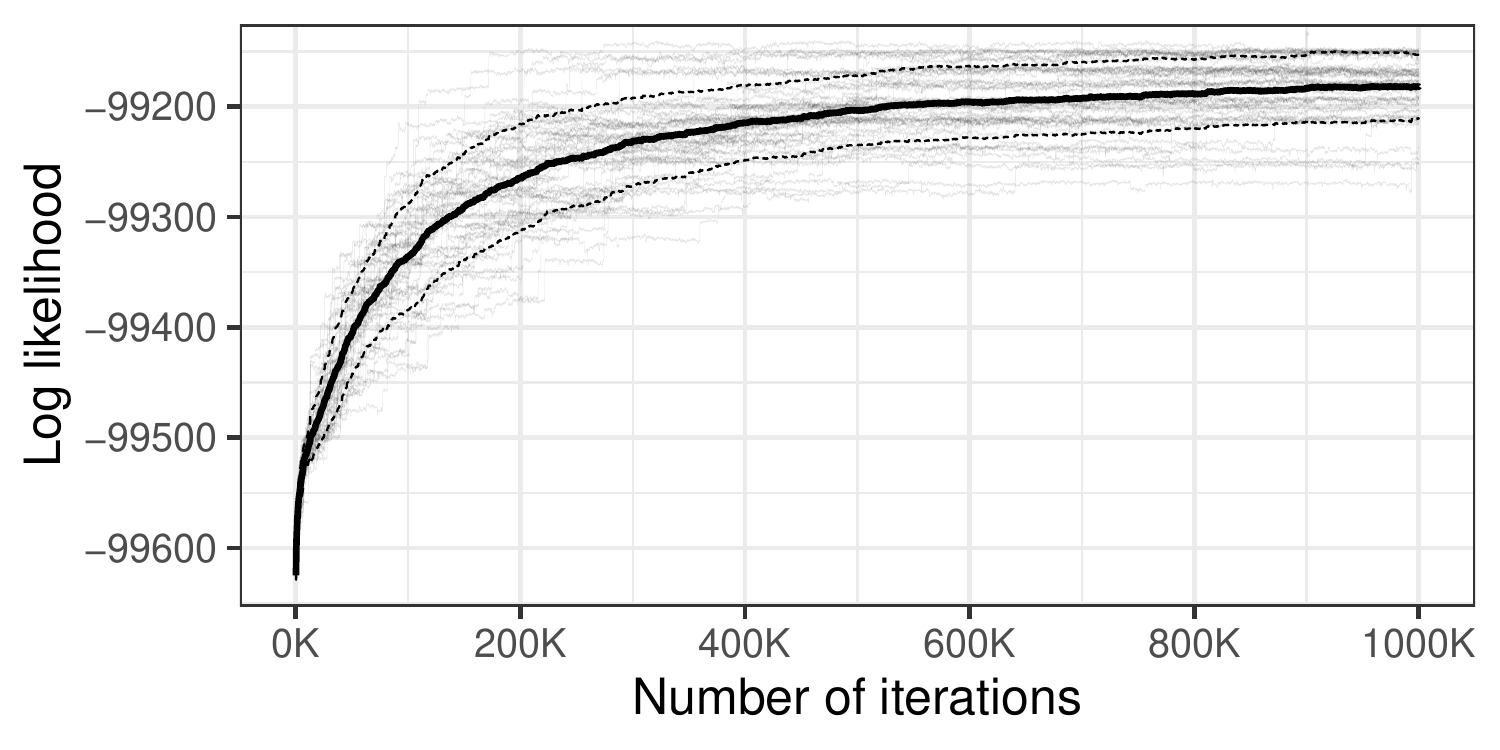}
    \caption{
    Log-likelihood trace plot for muSuSiE-DAG under the setting with $K = 2$, $n_{\text{com}} = 50$ and $n_{\text{pri}} = 50$.     
    Trajectories of all 50 runs are shown individually in gray. 
    The solid line denotes the average over 50 runs, and dashed lines indicate one standard derivation above and below the average.   
    }
    \label{fig:mcmc-covg}
\end{figure}

\clearpage 
\newpage
\section{Additional Results for Real Data Analysis}\label{appx:real-simu}
Table \ref{tab:real-data-all} shows additional results for the real data example presented in Section~\ref{sec:real.data} with other choices of the tuning parameters.   
Results for  PC, GES and joint GES methods combined with stability selection \citep{meinshausen2010stability}, which we implement using \texttt{stabsel} function in the  \texttt{stabs} package, are shown in Table \ref{tab:real-data-ss}. 
There is a hyperparameter \texttt{cutoff} in \texttt{stabsel} function, which we denote by ``cutoff1'' in the table. The \texttt{stabsel} function returns a selection probability for each edge, and as a result, we need to choose a threshold, denoted by ``cutoff2'', to obtain a DAG from the stable selection result. 
In Table \ref{tab:real-data-ss}, we list the results for $\text{cutoff1}=0.6,0.7,0.8,0.9$ and $\text{cutoff2}=0.55, 0.75$. 

\begin{table}[h!]
\centering
\begin{tabular}{|c|c|ccccc|}
\hline
Method  &  Parameters & $|\cG_{1}|$ & $|\cG_{2}|$ & $|\cG_{1} \cap \cG_{2}|$ & $N_{\text{total}}$ & ratio\\
\hline
PC & $\alpha =  1e-04 $& 12 & 26 & 7 & 31 & 0.2258 \\
PC & $\alpha =  5e-04 $& 19 & 38 & 13 & 44 & 0.2955 \\
PC & $\alpha =  0.001 $& 23 & 39 & 14 & 48 & 0.2917 \\
PC & $\alpha =  0.005 $& 33 & 60 & 18 & 75 & 0.24 \\
PC & $\alpha =  0.01 $& 42 & 69 & 19 & 92 & 0.2065 \\
PC & $\alpha =  0.05 $& 73 & 109 & 24 & 158 & 0.1519 \\
\hline
GES & $\lambda =  1 $& 150 & 238 & 49 & 339 & 0.1445 \\
GES & $\lambda =  2 $& 99 & 148 & 43 & 204 & 0.2108 \\
GES & $\lambda =  3 $& 78 & 108 & 34 & 152 & 0.2237 \\
GES & $\lambda =  4 $& 75 & 92 & 32 & 135 & 0.237 \\
GES & $\lambda =  5 $& 75 & 87 & 31 & 131 & 0.2366 \\
\hline
joint GES &$\lambda =  1 $& 77 & 85 & 68 & 94 & 0.7234 \\
joint GES &$\lambda =  2 $& 78 & 78 & 72 & 84 & 0.8571 \\
joint GES &$\lambda =  3 $& 76 & 76 & 72 & 80 & 0.9 \\
joint GES &$\lambda =  4 $& 76 & 76 & 73 & 79 & 0.9241 \\
joint GES &$\lambda =  5 $& 76 & 75 & 73 & 78 & 0.9359 \\
\hline
muSuSiE-DAG & $p^{-\omega_1} = p^{-1.25}, \, p^{-\omega_2} = p^{-2}$ & 33 & 115 & 30 & 118 & 0.2542 \\
muSuSiE-DAG & $p^{-\omega_1} = p^{-1.5}, \, p^{-\omega_2} = p^{-2.5}$& 27 & 95 & 25 & 97 & 0.2577 \\
muSuSiE-DAG & $p^{-\omega_1} = p^{-1.5} / 2, \, p^{-\omega_2} = p^{-2}$& 43 & 93 & 42 & 94 & 0.4468 \\
muSuSiE-DAG & $p^{-\omega_1} = p^{-2}, \, p^{-\omega_2} = p^{-3.5}$ & 17 & 68 & 14 & 71 & 0.1972 \\
muSuSiE-DAG & $p^{-\omega_1} = p^{-2} / 2, \, p^{-\omega_2} = p^{-3.5}$& 20 & 67 & 19 & 68 & 0.2794 \\
muSuSiE-DAG & $p^{-\omega_1} = p^{-\omega_2} = p^{-2}$  & 57 & 83 & 57 & 83 & 0.6867 \\
\hline
\end{tabular}
\caption{More results for the real data analysis. 
$|\cG_k|$: number of edges in the estimated DAG for the $k$-th group; $|\cG_{1} \cap \cG_{2}|$: number of edges shared by both DAGs; $N_{\text{total}}$: total number of edges in two DAGs; ratio: the ratio of $|\cG_{1} \cap \cG_{2}|$ to $N_{\text{total}}$. }
\label{tab:real-data-all}
\end{table}

\begin{table}
\centering
\begin{tabular}{|c|cc|ccccc|}
\hline
Method  &  cutoff1 & cutoff2 & $|\cG_{1}|$ & $|\cG_{2}|$ & $|\cG_{1} \cap \cG_{2}|$ & $N_{\text{total}}$ & ratio\\
\hline
PC & 0.6 & 0.55 & 49 & 85 & 19 & 115 & 0.1652 \\
PC & 0.6 & 0.75 & 36 & 63 & 18 & 81 & 0.2222 \\
PC & 0.7 & 0.55 & 48 & 85 & 19 & 114 & 0.1667 \\
PC & 0.7 & 0.75 & 37 & 63 & 19 & 81 & 0.2346 \\
PC & 0.8 & 0.55 & 51 & 87 & 20 & 118 & 0.1695 \\
PC & 0.8 & 0.75 & 35 & 65 & 18 & 82 & 0.2195 \\
PC & 0.9 & 0.55 & 50 & 87 & 20 & 117 & 0.1709 \\
PC & 0.9 & 0.75 & 36 & 62 & 18 & 80 & 0.225 \\
\hline
GES & 0.6 & 0.55 & 99 & 150 & 41 & 208 & 0.1971 \\
GES & 0.6 & 0.75 & 65 & 97 & 32 & 130 & 0.2462 \\
GES & 0.7 & 0.55 & 96 & 152 & 41 & 207 & 0.1981 \\
GES & 0.7 & 0.75 & 68 & 100 & 34 & 134 & 0.2537 \\
GES & 0.8 & 0.55 & 99 & 149 & 39 & 209 & 0.1866 \\
GES & 0.8 & 0.75 & 69 & 94 & 32 & 131 & 0.2443 \\
GES & 0.9 & 0.55 & 98 & 155 & 45 & 208 & 0.2163 \\
GES & 0.9 & 0.75 & 68 & 97 & 33 & 132 & 0.25 \\
\hline
joint GES & 0.6 & 0.55 & 60 & 61 & 57 & 64 & 0.8906 \\
joint GES & 0.6 & 0.75 & 57 & 58 & 55 & 60 & 0.9167 \\
joint GES & 0.7 & 0.55 & 67 & 70 & 63 & 74 & 0.8514 \\
joint GES & 0.7 & 0.75 & 58 & 58 & 57 & 59 & 0.9661 \\
joint GES & 0.8 & 0.55 & 65 & 71 & 56 & 80 & 0.7 \\
joint GES & 0.8 & 0.75 & 53 & 56 & 51 & 58 & 0.8793 \\
joint GES & 0.9 & 0.55 & 65 & 72 & 60 & 77 & 0.7792 \\
joint GES & 0.9 & 0.75 & 53 & 56 & 53 & 56 & 0.9464 \\
\hline
\end{tabular}
\caption{More results for PC, GES and joint GES methods in the real data analysis. $|\cG_k|$: number of edges in the estimated DAG for the $k$-th group; $|\cG_{1} \cap \cG_{2}|$: number of edges shared by both DAGs; $N_{\text{total}}$: total number of edges in two DAGs; ratio: the ratio of $|\cG_{1} \cap \cG_{2}|$ to $N_{\text{total}}$. }
\label{tab:real-data-ss}
\end{table}
 
\end{document}